\newcounter{myalgcounter}
\definecolor{myred}{RGB}{228,26,28}
\definecolor{mygreen}{RGB}{77,175,74}
\definecolor{myblue}{RGB}{55,126,184}
\definecolor{mypurple}{RGB}{152,78,163}
\definecolor{myorange}{RGB}{255,127,0}
\newtheorem{theorem}{Theorem}[chapter]
\numberwithin{theorem}{section}
\newtheorem{lemma}[theorem]{Lemma}
\newtheorem{proposition}[theorem]{Proposition}
\theoremstyle{definition} 
\newtheorem*{remark}{Remark}
\theoremstyle{definition}
\newtheorem*{remarks}{Remarks}
\theoremstyle{definition}
\newtheorem{definition}[theorem]{Definition}
\newlength{\subtheoremlength}
\newcommand*{\defeq}{\mathrel{\rlap{%
			\raisebox{0.3ex}{$\m@th\cdot$}}%
		\raisebox{-0.3ex}{$\m@th\cdot$}}%
	=}
\newcommand*{\eqdef}{=\mathrel{\rlap{%
			\raisebox{0.3ex}{$\m@th\cdot$}}%
		\raisebox{-0.3ex}{$\m@th\cdot$}}}
\NewDocumentCommand{\dceil}{sO{0}m}
{
	\IfBooleanTF{#1}
	{
		\spiros_ceilfloor_ext:nnnn { #2 } { #3 } { \lceil } { \rceil }
	}
	{ 
		\spiros_ceilfloor:nnnn { #2 } { #3 } { \lceil } { \rceil }
	}
}
\NewDocumentCommand{\dfloor}{sO{0}m}
{
	\IfBooleanTF{#1}
	{
		\spiros_ceilfloor_ext:nnnn { #2 } { #3 } { \lfloor } { \rfloor }
	}
	{
		\spiros_ceilfloor:nnnn { #2 } { #3 } { \lfloor } { \rfloor }
	}
}
\newcommand{\I}{\mathcal{I}}
\newcommand{\R}{\mathbb{R}}
\newcommand{\F}{\mathcal{F}}
\newcommand{\Sch}{\mathscr{S}}
\newcommand{\NPm}{\mathbf{NP}}
\newcommand{\NPh}{$\NPm$-hard}
\newcommand{\Rad}{\mathcal{R}}
\newcommand{\norm}[2][]{\lVert{#2}\rVert_{#1}}
\DeclareMathOperator*{\argmin}{arg\,min}
\DeclareMathOperator{\Span}{Span}
\DeclareMathOperator{\sinc}{sinc}
\DeclareMathOperator\supp{supp}
\DeclareMathOperator{\proj}{proj}
\DeclareMathOperator{\BilInterp}{BilInterp}
\newcommand{\sqpie}{\big( \sqrt{2\pi}\, \big)}
\DeclareMathOperator{\mean}{\mathbb{E}}
\DeclareMathOperator*{\meanlim}{\mathbb{E}}
\DeclareMathOperator{\prox}{prox}
\newcommand{\prob}{\mathbb{P}}
\newcommand{\wei}{\mathcal{W}}
\newcommand{\loss}{\mathcal{L}}
\newcommand{\Loss}{\mathscr{L}}
\newcommand{\logloss}{\mathcal{L}^{\log}}
\newcommand{\TODOm}[1]{\todo[color=green!40!white]{TODO: #1}}
\begin{document}

\title{CNN-based regularisation for CT image reconstructions}

\author{Attila Juhos}

\university{Budapest University of Technology and Economics}
\faculty{Faculty of Electrical Engineering and Informatics}
\department{Department of Measurement and Information Systems}
\submitdate{Budapest, May 2021}
\logo{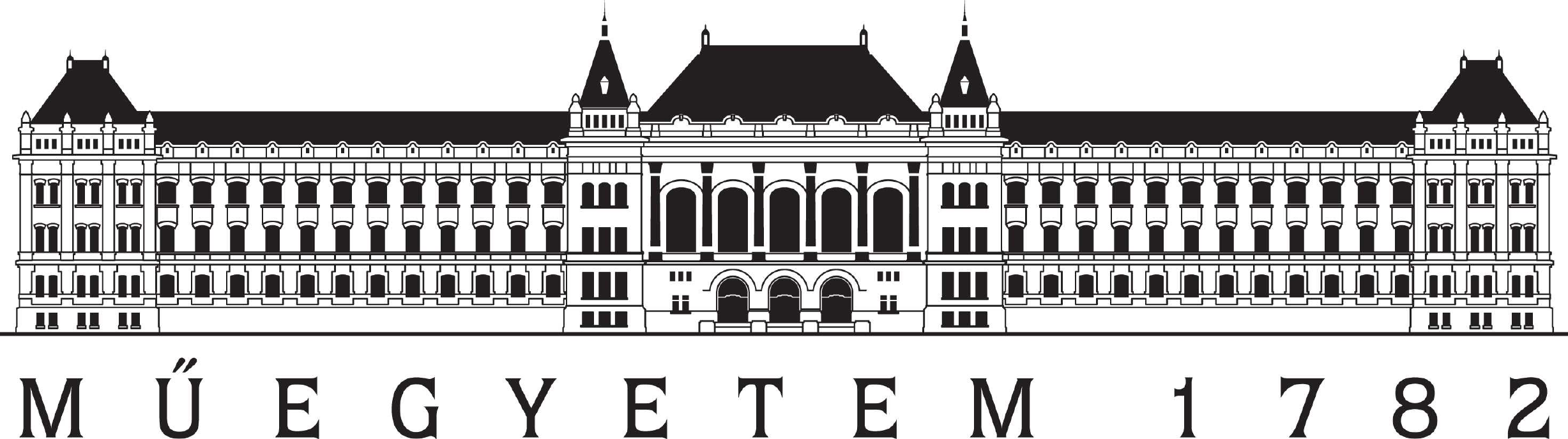}
\supervisor{D\'{a}niel Hadh\'{a}zi (research fellow at BME VIK MIT)}
\supervisorinstitute{BME VIK MIT, Intelligent Systems Research Group}

\maketitle
\preface
\phantomsection\addcontentsline{toc}{chapter}{Abstract}
\begin{abstract}

Computed tomography (CT) involves image reconstruction modalities mostly applied in medical fields. A particular family is represented by X-ray tomographic infrastructures that rely on the acquisition of rays passing through an examined object along with measuring the line integrals of linear attenuation coefficients along such rays. Physical measurements are post-processed by mathematical reconstruction algorithms that may offer weaker or top-notch consistency guarantees on the computed volumetric coefficient field. Superior results are provided on the account of an abundance of low-noise measurements being supplied. Nonetheless, such a scanning process would expose the examined body to an undesirably large-intensity and long-lasting ionising radiation, imposing severe health risks. One main objective of the ongoing research is the reduction of the number of projections while keeping the quality performance stable. Due to the under-sampling, the noise occurring inherently as a consequence of photon-electron interactions is now supplemented by reconstruction artifacts. Nevertheless, we conjecture that the noise distribution applied on the linear attenuation coefficient space is purely dependent on the geometric properties of the scanning. Recently, deep learning methods, especially fully convolutional networks have been extensively investigated and proven to be efficient in filtering such deviations. In this report algorithms are presented that take as input a slice of a low-quality reconstruction of the volume in question and aim to map it to the reconstruction that is considered ideal, the ground truth. Above that, the first system comprises two additional elements: firstly, it ensures the consistency with the measured sinogram, secondly it adheres to constraints proposed in classical compressive sampling theory. The second one, inspired by classical ways of solving the inverse problem of reconstruction, takes an iterative approach to regularize the hypothesis in the direction of the correct result.

\textbf{Keywords:} tomographic reconstruction, CT, neural network, CNN, medical computing


\end{abstract}

\body 
\chapter{Introduction} \label{chap:intro}

Computerized Tomography revolves around the well-known integral transformation, the Radon-transformation that is going to be presented in the upcoming sections. What an X-Ray CT scanner is capable of is measuring some values, some instances of the Radon-transform of the scanned object's X-Ray attenuation coefficient space. Naturally, an inverse problem arises. There exist theorems stating how the Radon-transformation could be inverted. For instance, an even theoretically well-established algorithm is the Filtered Back Projection, which is used even nowadays by most commercial X-ray scanning facilities. However, the theory behind always assumes the possibility of recording and processing continuous functions on continuous domains, which is not possible, of course. Besides that, these proofs of correctness count on the capability of providing a sufficient number of projections that guarantee the fulfilment of Shannon's sampling theorem. This condition is unsatisfiable, too. There are other algorithms, the family of algebraic reconstruction techniques (ART, see \cite{Nat01, KS01, AK84}) that naturally implement the discretisation of the Radon-transformation and attempt, in an iterative manner, to reconstruct the volume. Nonetheless, there are proofs that these techniques are not capable of accurately and fully reconstruct the attenuation coefficient space. As a matter of fact, the Radon-transformation itself is a linear operator $\Rad$ mapping the input volume (or slice) $f$ to its sinogram $g$. These iterative approaches usually converge to $\Rad^+g + \proj_{\ker\Rad} f_0$, where $f_0$ denotes the initial guess, $\Rad^+$ is the Moore-Penrose pseudo inverse, and $\proj_{\ker\Rad} f_0$ is the projection of $f_0$ onto the kernel space of $\Rad$. For references, see\cite{AK84, JW03}.

Usual techniques of post-processing noisy reconstructions involve the algorithms suggested by the theory called \emph{compressive sampling} that realised that near-complete reconstruction is possible even with a reduced number of measurements. The techniques build on the fact that the realistic images from the linear space of input images could be embedded in a subspace with far lower dimensionality. This is motivated by the special distribution of input images, for example the well-recognisable lung CT images. \cite{CW08} present an empirical result, according to which they represent a $256 \times 256$ image almost perfectly with the help of only 25 thousand coefficients. Furthermore, linear measurements (like the Radon-transform) are separable with respect to the different pixels of the output image. More precisely, if the Radon-transform $\Rad$ is discretised in a matrix and the $i^{th}$ row is denoted by $\Rad_i$, then one measurement of the slice in question is $\Rad_i f = g_i$. The aforementioned authors also report that the appropriate choice of the measurements and corresponding good representational basis allows us to cut down on a large amount of necessary measurements (from now on called \emph{projections}).

A fully other direction, deterring from the well-established and praised results of CS theory, is consisted by the invocation of the neural paradigm, the use of artificial neural networks, especially fully convolutional networks that contain in their architecture mostly convolutional layers. This approach has become widespread since the appearance of U-Net. In their paper, \cite{RFB15} designed a fully convolutional, encoder-decoder shaped neural network assisted by skip connections. Although they limited their experimentation with the network to biomedical segmentations, various upcoming papers would use more or less the same architecture for other purposes, even outside medical imaging. There are many studies focusing on magnetic resonance imaging (MRI) applications, but there exist much fewer focusing on neural network, especially U-Net aided reconstruction of the linear attenuation coefficient field.

The lines of investigations presented in this work cover two different ideas. The first method, called \emph{measurement-consistent, sparsifying  postprocess-ConvNet,} combines two different approaches proposed recently in the scientific literature. On one side, \cite{HYJ16} argued that artifacts emerging on FBP images using a low number of slices follows a distribution dependent on the geometric arrangement of scanning and, therefore, believe that the reconstruction network should learn the artifacts, i.e.\ it should learn the differences between the ideal and the fed reconstruction. On the other side, based on \cite{HP19} we propose a U-Net-based network, which besides prescribing fidelity to the ideal reconstruction, trains parameters in a way that consistency of the output with the measured projections is provided throughout the process. More precisely, the output is Radon-transformed and loss is generated based on the inconsistencies between the expected sinogram and the predicted one. Additionally, constraints from the field of compressed sensing are employed to regularise learning in a way, that output reconstructions are sparse in the sense of total variation. At the end of the report we prove that this approach successfully optimises the denoising capacity of the network. As far as our literacy extends, this is the first time that a combined, hybrid system has been devised. 

Another line of investigation of this report developed a fully iterative scheme, called \emph{unrolled support-kernel iterative regularisation GD}, which alternates between applying steps of iterative algebraic steps and an iteratively taught neural network. Our idea was based mostly on the work of \cite{GJ18}, where the authors created an algorithm called \emph{projected gradient descent (PGD)}. There, the reconstruction is alternating between algebraic steps and a projection to the manifold of reconstructions. This projection is provided by a convolutional neural network. Our addition and major change compared to this method is that the neural network becomes part of the iterative refinement system and produces iterative kernel space reconstruction steps on the current hypothesis.

The structure of the report is the following. In Chapter~\ref{chap:preliminaries} and Chapter~\ref{chap:cnn} the mathematical aspects of medical computing, CT and neural networks are presented in detail, respectively. The latter also contains an introduction to solving inverse problems via CNNs and previous results. Chapters~\ref{chap:convnet} and \ref{chap:unrolled} present the main novelties of this report, alongside with experimental results. Finally, in Chapter~\ref{chap:conclusion} we conclude the report and state various open questions along with future research directions.

\chapter[Mathematical background of CT]{Mathematical background of Computerised Tomography} \label{chap:preliminaries}

Computerised tomography (CT) aims to present the inner structure of a body via a representative function of the space. A major field of application of the theory is obviously the CT based medical imaging, where it is desirable to somehow record the inner structure of the patient's body in such a quality that the received representation could be used for diagnostics, most often cancerous tumour detection. This underlying representation is provided by an objective function defined on the subset of the 3D-space. The meaning of the function reflects a physical property that was measured by the CT scanning modality. For example, in X-ray based CT the objective function is the linear attenuation coefficient function of the body, which expresses to what extent that point of the body attenuates a passing X-ray. This measure intends to model the physical interactions between X-ray photons and electrons. The modelling of all processes, unfortunately, would require the probabilistic modelling of physical interactions, which are dependent on the energy distribution of the incident ray, the molecules present in the tissue, the spatial and energy distribution of emitted electrons or photons after a colliding interaction. 

The modelling used by us and by most of CT reconstruction algorithms neglects all these factors in favour of receiving a model that results in a linear system description, tractable numerical simulations and algorithms favouring current capacities of computers. The process of attenuating a ray is, hence, described in a very simplified attempt by the Beer-Lambert-law:
\begin{equation*}
I = I_0 e^{-\int_{L} \mu(x)dx},
\label{eq:Beer-Lambert}
\end{equation*} 
where $L$ represents the path the ray traverses and $\mu$ is attenuation function. As we see, after taking logarithms, this simplified description of the attenuation results in a linear system operator and opens up the treatment of computerised tomography to the extensive arsenal of analysis and solution methods of linear inverse problems:
\begin{equation*}
\int_L \mu(x) dx = -\ln\frac{I}{I_0}.
\end{equation*}
Upcoming theorems also yield a direct inversion formula for a complete availability of line integrals. Nevertheless, a practical measurement acquisition setup fixes a set of finite number of line integrals, called \emph{scanning geometry}. Besides the natural discreteness of data, incompleteness may occur in other forms. Firstly, it may be the result of exposing only limited regions of the examined body to the otherwise undesirably ionising radiation. Secondly, implants generally absorb rays and the particular line integral cannot be reliably measured. Furthermore, sometimes the measurements are only available in a restricted angular range rather than a full circle (technically a half circle). In these cases we speak about incomplete data. 

This report will focus on the simplest scanning geometry involving equidistant parallel beams under angles evenly distributed throughout the entire angular range. We refer to this as \emph{parallel scanning geometry}, see Fig.~\ref{fig:sample_geometries:b}. In this acquisition infrastructure the body lies between a sequence of radiation sources and a detector panel. The sources and the detectors rotate simultaneously around the domain of interest. We note that usual medical CT infrastructures do have a full angular range, however almost always a \emph{fan-beam} scanning geometry (Fig.~\ref{fig:sample_geometries:c}) is applied, where only one radiation source is used. For a sample reconstruction see Fig.~\ref{fig:sample_geometries:a}.

\begin{figure}[ht]
\centering
\begin{minipage}{.3\linewidth}	
	\centering
	\subfloat[]{
		\includegraphics[keepaspectratio,width=\linewidth]{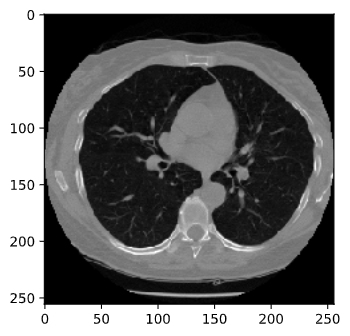}
		\label{fig:sample_geometries:a}
	}
\end{minipage}
\begin{minipage}{.3\linewidth}
	\centering
	\subfloat[]{
		\label{fig:sample_geometries:b}
		\includegraphics[keepaspectratio,width=\linewidth]{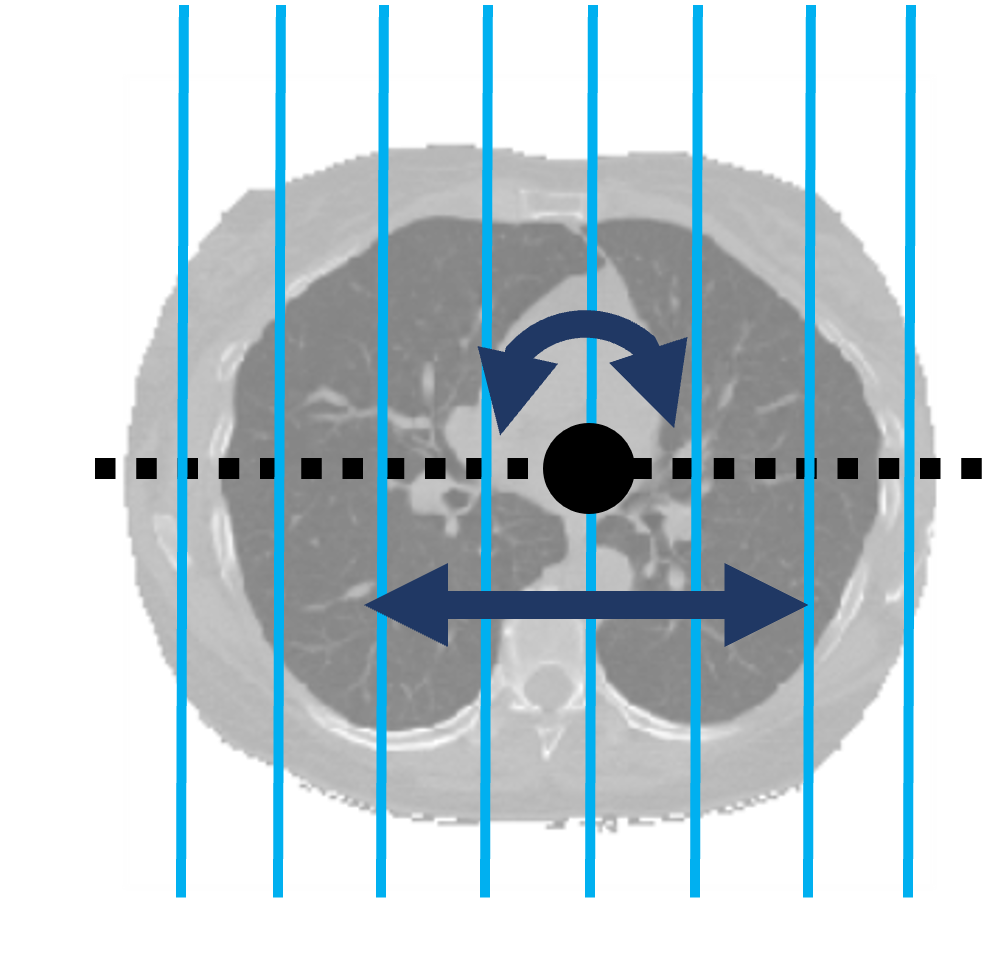}			
	}
\end{minipage}
\begin{minipage}{.3\linewidth}
	\centering
	\subfloat[]{
		\label{fig:sample_geometries:c}
		\includegraphics[keepaspectratio,width=\linewidth]{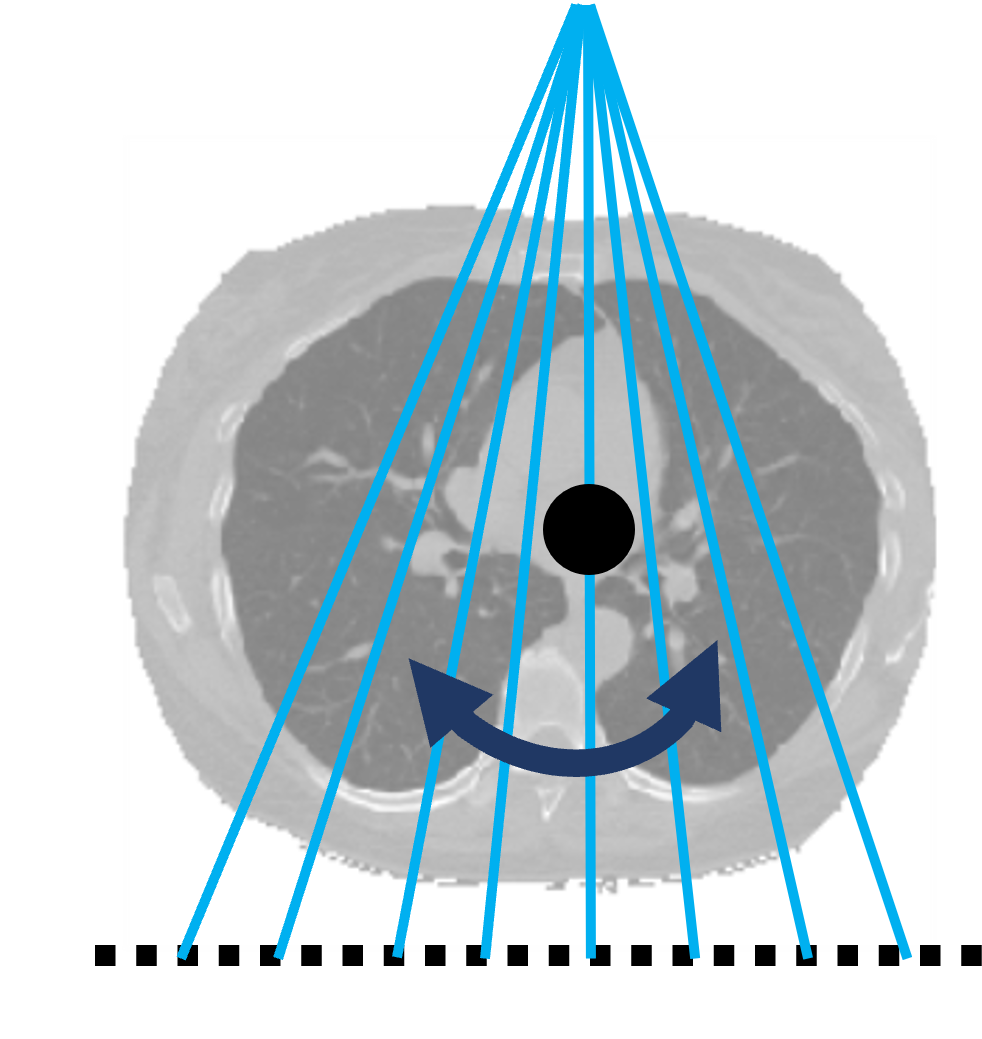}			
	}
\end{minipage}

\caption{\subref{fig:sample_geometries:a} A sample reconstruction. \subref{fig:sample_geometries:b} Parallel scanning geometry. \subref{fig:sample_geometries:c} Fan-beam scanning geometry.}
\label{fig:sample_geometries}
	
\end{figure}



In this chapter we present the mathematical approach of CT alongside the most important reconstruction algorithms. As a source material for this chapter, we used and highly recommend the textbooks \cite{Nat01}, \cite{NW01}, \cite{Hel80}, \cite{KS01} and the lecture notes \cite{BGJ20}. Section~\ref{sec:Radon} introduces the notion of Radon-transformation, immediately followed by several important properties in section~\ref{sec:properties_Radon}. Amongst others we provide and prove the inversion formula. Afterwards, discretisation of data and the Radon-transform are outlined in section~\ref{sec:discretisation}. Section~\ref{sec:discretised_FBP} derives the most widespread reconstruction algorithm based on the continuous inversion of the Radon-transform, alongside some theoretical guarantees of exactness. Last, but not least, section~\ref{sec:art} displays a handful of algebraic methods for solving linear problems.

The statement and proof of shift-invariance in the discretised case, discussed in subsections~\ref{subsec:disc_Radon} and \ref{subsec:discrete_LI}, was derived by us in an attempt to fill the gap between the corresponding continuous statements and discrete reconstruction algorithms.

\section{Radon-transformation} \label{sec:Radon}
Starting from the Beer-Lambert-law:
\[I = I_0 \cdot e^{-\int_L \mu(x) dx},\]
we have already derived (by taking logarithms, step often called \emph{linearisation}) the linearised form:
\begin{equation*}
\int_L \mu(x) dx = -\ln\frac{I}{I_0},
\end{equation*}
which means that after the measurements, by division with the initial intensity we obtain the line integral of the attenuation coefficient. This gives rise to the definition of the Radon-transform. If $f$ is a function defined on $\R^n$, then for any $\Theta \in S^{n-1}$ unit vector (where $S^{n-1}$ denotes the unit ball of $\R^n$) and any $s \in \R$ we define:
\begin{equation}
\Rad f(\Theta, s) = \Rad_\Theta f(s)
\defeq \int_{\langle\Theta, x\rangle=s} f(x) dx 
= \int_{\text{Span}\{\Theta\}^{\perp}} f(s\Theta + x) dx,
\label{eq:Radon-def}
\end{equation}
where $\langle\cdot,\cdot\rangle$ is the standard Euclidean inner product defined on $\R^n$, $\text{Span}\{\Theta\}^{\perp}$ is the perpendicular subspace to the spanned subspace of $\Theta$. Hence, the Radon-operator, as an integral transformation operator, maps functions defined on $\R^n$ to functions defined on $\mathcal{Z} = S^{n-1}\times\R \subseteq \R^{n+1}$. The resulting function is called the \emph{sinogram} or the \emph{Radon-transform} of $f$. The partial function $\Rad_\Theta f$ is called a \emph{projection} belonging to direction $\Theta$. 


It should be noticed that this integral transformation takes integrals on hyperplanes of $\R^n$ rather than lines (i.e.\ one-dimensional subspaces). This is motivated by the fact that X-ray CT modalities are usually implemented in such a way that the measurements are taken slice by slice. More precisely, in \emph{computed axial tomography} the detector and the X-ray source rotates around an axial slice of the examined volume. In case of helical CT, we interpolate the received data to the same representation. Hence, we only get information from one plane, where hyperplanes coincide with straight lines and, thus, the theory on hyperplane integrals is applicable.

As far as its existence is concerned, the integral in~\eqref{eq:Radon-def} is well defined when $f$ is an element of the Schwartz-space $\Sch(\R^n)$. Schwartz-spaces are special linear function spaces that contain smooth, i.e.\ infinite times differentiable functions that converge to $0$ by distancing from the origin ``sufficiently fast''. More precisely, for any $\Omega \subset \R^{n}$ open set the Schwartz-space on $\Omega$ is defined as
\[ \Sch(\Omega) = \big\{ f:\Omega \rightarrow \R \text{ smooth}\,\big|\, \text{for any } \alpha, \beta \text{ multi-indices, } \sup_{x \in \Omega} |x^\beta \partial^\alpha f(x)| < \infty \big\}. \]

On the other hand, if $f \in \Sch(\R^n)$, then it is provable that $\Rad_\Theta f \in \Sch(\R)$ and $\Rad f \in \Sch(\mathcal{Z})$, where $\Sch(\mathcal{Z})$ is simply the restriction of $\Sch(\R^{n+1})$ to $\mathcal{Z}$. As for a trivial property, the linearity of integrals in terms of its integrands is inherited by both $\Rad_\Theta$ and $\Rad$, therefore we conclude, that $\Rad_\Theta \in Lin(\Sch(\R^{n}), \Sch(\R))$ and $\Rad \in Lin(\Sch(\R^{n}), \Sch(\mathcal{Z}))$.

It should be stated that Schwartz-spaces contain any smooth function that vanish outside a bounded subset of the domain. This is convenient in the medical X-ray tomography, since the attenuation coefficient of air outside the examined body could be considered zero and, therefore, the only assumption in this regard is having a smooth attenuation within the body.

Finally, it should be clear that from a computer scientist's perspective our main goal is to somehow invert the Radon-transform, i.e.\ given $\Rad f$, find $f$. First, a few essential properties of the Radon-transform are outlined.

\section{Properties of the Radon-transformation} \label{sec:properties_Radon}
In this section we enlist a key selection of properties of the Radon-transformation that will be important in the understanding of reconstruction algorithms. Here we repeat the fact that the Radon-transformation is well-defined on Schwartz-functions and is linear both angle-wise and in general: $\Rad_\Theta \in Lin(\Sch(\R^{n}), \Sch(\R))$ and $\Rad \in Lin(\Sch(\R^{n}), \Sch(\mathcal{Z}))$.

\subsection{Fourier slice theorem}

\begin{theorem}[Fourier slice theorem]
\label{theo:slice}
For any $f \in \Sch(\R^n)$ the Fourier-transform of $\Rad_\Theta$ exists and for any $\sigma \in \R$:
\[ \F \{\Rad_\Theta f\}(\sigma) = \big(\sqrt{2\pi}\,\big)\!^{n-1} \F f(\sigma\Theta). \]
\end{theorem}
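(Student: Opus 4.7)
The plan is to unfold both sides of the claimed identity straight from their definitions and then collapse the resulting double integral over $\R \times \Span\{\Theta\}^{\perp}$ into a single integral over $\R^n$ via the natural orthogonal decomposition. Starting on the left-hand side, I would first write
\[ \F\{\Rad_\Theta f\}(\sigma) = \frac{1}{\sqrt{2\pi}} \int_{\R} \Rad_\Theta f(s) \, e^{-i\sigma s}\, ds \]
and then substitute the definition~\eqref{eq:Radon-def} to get a double integral in $s$ and $x \in \Span\{\Theta\}^{\perp}$.

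The next step is to swap the order of integration and change variables. Since $f \in \Sch(\R^n)$, the section $f(s\Theta + x)$ decays rapidly on the product space $\R \times \Span\{\Theta\}^{\perp}$, so Fubini applies without delicacy. The crucial observation is that the map $(s, x) \mapsto y = s\Theta + x$ is a linear isometry from $\R \times \Span\{\Theta\}^{\perp}$ onto $\R^n$ (the Jacobian is $1$), with the inverse characterised by $s = \langle \Theta, y\rangle$ and $x = y - \langle \Theta, y\rangle \Theta$. Pushing forward the measure and rewriting $\sigma s = \langle \sigma \Theta, y\rangle$ transforms the double integral into
\[ \frac{1}{\sqrt{2\pi}} \int_{\R^n} f(y) \, e^{-i\langle \sigma\Theta, y\rangle}\, dy. \]

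Finally, I would recognise this last integral, up to a normalisation constant, as the $n$-dimensional Fourier transform of $f$ evaluated at $\sigma\Theta$. Under the symmetric convention with a $(\sqrt{2\pi})^{-n}$ prefactor in dimension $n$, the leftover factor is $(\sqrt{2\pi})^{n}/\sqrt{2\pi} = (\sqrt{2\pi})^{n-1}$, yielding $(\sqrt{2\pi})^{n-1}\F f(\sigma \Theta)$, as claimed. Existence of the left-hand Fourier transform is a byproduct of the earlier remark, which I would cite, that $\Rad_\Theta f \in \Sch(\R)$ whenever $f \in \Sch(\R^n)$, so $\F \{\Rad_\Theta f\}$ is itself Schwartz and in particular pointwise well defined.

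I do not expect any real obstacle in this argument; the only point requiring genuine care is the change of variables, where one must verify that the decomposition $y = s\Theta + x$ is measure-preserving. This is immediate because the map is orthogonal when $\R \times \Span\{\Theta\}^{\perp}$ is equipped with the product of Lebesgue measures and $\R^n$ with its standard Lebesgue measure. All remaining steps are standard Fubini and definition-chasing manipulations, for which the Schwartz assumption supplies the required absolute integrability.
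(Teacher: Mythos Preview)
Your proposal is correct and follows essentially the same route as the paper: unfold both definitions, use the orthogonal decomposition $y = s\Theta + x$ with $s = \langle \Theta, y\rangle$ to collapse the double integral over $\R \times \Span\{\Theta\}^{\perp}$ into a single integral over $\R^n$, and match the normalisation constants. Your explicit remarks on Fubini and the measure-preserving nature of the change of variables are a touch more careful than the paper's version, but the argument is the same.
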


\begin{proof}
Since $\Rad_\Theta f \in \Sch(\R)$, its Fourier-transform exists. Then
\[ \F\{\Rad_\Theta f\}(\sigma) = \frac{1}{\sqrt{2\pi}} \int_\R e^{-i\sigma s} \Rad_\Theta f(s) ds 
= \frac{1}{\sqrt{2\pi}} \int_\R e^{-i\sigma s} \int_{\Span\{\Theta\}^\perp} f(s\Theta + y) dy \, ds. \]
Bringing the exponential term inside the inner integral, we see that while $s$ traverses $\R$ and $y$ traverses $\Span\{\Theta\}^\perp$, the expression $s\Theta + y$ fills $\R^n$. Hence, by substituting $x = s\Theta + y$ and noticing that $\langle x, \Theta \rangle = s$, the following is derived:
\begin{align*} 
\F\{\Rad_\Theta f\}(\sigma) & = \frac{1}{\sqrt{2\pi}} \int_\R \int_{\Span\{\Theta\}^\perp} e^{-i\sigma s} f(s\Theta + y) dy \, ds\\
& = \big(\sqrt{2\pi}\,\big)\!^{n-1} \cdot \frac{1}{\big(\sqrt{2\pi}\,\big)\!^{n}} \int_{\R^n} e^{-i \langle x, \sigma \Theta \rangle} f(x) dx = \big(\sqrt{2\pi}\,\big)\!^{n-1} \F f(\sigma\Theta).\quad \qedhere
\end{align*}
\end{proof}

The Fourier slice theorem's meaning is that the spectrum of a single projection is proportional exactly to the $\Theta$-direction slice of the multidimensional spectrum of $f$.
Some textbooks even define the Radon-transform on direction $\Theta$ by taking the Fourier-inverse of this slice of $\F f$, arriving to an equivalent definition.

\subsection{Adjoint operators}

We are about to define the adjoint operators of both $\Rad_\Theta$ and $\Rad$ operators. For this it is noticed that any Schwartz-space $\Sch(\Omega)$ is a subset of $L^2(\Omega)$ and, hence, they are equipped with the usual inner product defined as
\begin{equation*}
\langle f, g \rangle_{\Sch(\Omega)} = \int_\Omega fg.
\end{equation*}
Thus, the operators $\Rad_\Theta \in Lin(\Sch(\R^{n}), \Sch(\R))$ and $\Rad \in Lin(\Sch(\R^{n}), \Sch(\mathcal{Z}))$ are defined between inner product spaces.

\begin{theorem}
\label{theo:adjoints}
The operators $\Rad_\Theta$ and $\Rad$ admit adjoint operators which are defined as:
\begin{enumerate}
\item $\Rad_\Theta^* \in Lin(\Sch(\R), \Sch(\R^n)), \Rad_\Theta^* g(x) = g(\langle x, \Theta \rangle)$ and
\item $\Rad^* \in Lin(\Sch(\mathcal{Z}), \Sch(\R^n)), \Rad^*g(x) = \int_{S^{n-1}}g(\Theta, \langle x, \Theta \rangle)d\Theta$, where $S^{n-1}$ denotes, again, the unit ball of $\R^n$.
\end{enumerate}
\end{theorem}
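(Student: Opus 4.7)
The plan is to verify the two adjoint identities directly from the definition $\langle Tf, g\rangle = \langle f, T^*g\rangle$, using essentially the same change-of-variables trick that powered the proof of the Fourier slice theorem. The substitution $x = s\Theta + y$ (with $y \in \Span\{\Theta\}^\perp$, $s \in \R$) is an isometry from $\R \times \Span\{\Theta\}^\perp$ onto $\R^n$ satisfying $\langle x, \Theta\rangle = s$, so it converts line-times-hyperplane integrals into a single $\R^n$ integral.

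First I would handle $\Rad_\Theta$. Writing out the inner product and unfolding the definition of $\Rad_\Theta f$ gives
\begin{equation*}
\langle \Rad_\Theta f, g\rangle_{\Sch(\R)} = \int_\R \left(\int_{\Span\{\Theta\}^\perp} f(s\Theta + y)\,dy\right) g(s)\,ds.
\end{equation*}
Applying Fubini (justified since $f \in \Sch(\R^n)$ and $g \in \Sch(\R)$ give an integrable product with the appropriate decay) and then the substitution above yields
\begin{equation*}
\langle \Rad_\Theta f, g\rangle_{\Sch(\R)} = \int_{\R^n} f(x)\, g(\langle x, \Theta\rangle)\,dx = \langle f, \Rad_\Theta^* g\rangle_{\Sch(\R^n)},
\end{equation*}
confirming the stated formula. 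For $\Rad$, I would peel off the angular integral and apply the already-proved first part pointwise in $\Theta$:
\begin{equation*}
\langle \Rad f, g\rangle_{\Sch(\mathcal{Z})} = \int_{S^{n-1}}\langle \Rad_\Theta f, g(\Theta,\cdot)\rangle_{\Sch(\R)}\,d\Theta = \int_{S^{n-1}}\int_{\R^n} f(x)\, g(\Theta, \langle x,\Theta\rangle)\,dx\,d\Theta,
\end{equation*}
and then swap the two outer integrals (again by Fubini) to recognise $\langle f, \Rad^* g\rangle_{\Sch(\R^n)}$.

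The computational core is short; the main technical obstacle is showing that the proposed adjoints genuinely land in the claimed Schwartz classes, so that the identities make sense as pairings between the stated spaces. For $\Rad_\Theta^* g(x) = g(\langle x,\Theta\rangle)$, smoothness is immediate from the chain rule, but the Schwartz decay in $x$ fails in directions orthogonal to $\Theta$ (the function is constant on each hyperplane $\{\langle x,\Theta\rangle = s\}$); I would therefore read the statement $\Rad_\Theta^* \in \mathrm{Lin}(\Sch(\R),\Sch(\R^n))$ in the distributional/formal-adjoint sense standard in this area, or alternatively as a map into a larger test-function space, and flag this convention. The averaging in $\Rad^* g(x) = \int_{S^{n-1}} g(\Theta,\langle x,\Theta\rangle)\,d\Theta$ mitigates this somewhat, but the same caveat applies. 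Everything else, Fubini and the substitution, is routine.
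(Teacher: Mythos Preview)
Your proposal is correct and follows essentially the same route as the paper: unfold the inner product, apply the substitution $x=s\Theta+y$ to merge the $\R\times\Span\{\Theta\}^\perp$ integral into a single $\R^n$ integral, and for $\Rad^*$ swap the $S^{n-1}$ and $\R^n$ integrals by Fubini. The only differences are cosmetic (you reuse part~1 inside part~2 rather than redoing the substitution) and that you explicitly flag the Schwartz-decay issue for $\Rad_\Theta^* g$, a point the paper's proof passes over in silence.
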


\begin{remark}
There is a fairly intuitive interpretation of adjoint $\Rad^*$. The direct measurement operator $\Rad$ takes projections orthogonally to $\Theta$ at various distances $s$ from the origin. This is the reason why $\Rad$ is called \emph{forward projection (operator)} in a CT context. Let us examine the contribution of a point $x \in \R^n$ to the different projections in the Radon-transform. For every single direction $\Theta$, $x$ is found on the hyperplane orthogonal to $\Theta$ at distance $\langle x, \Theta \rangle$. Consequently, what $\Rad^* (\Rad f)$ does is that it ``backprojects'' values from $\Rad f$ from the locations that $x$ has contributed to in $\Rad f$. Hence, in the CT terminology $\Rad^*$ is always referred to as \emph{backprojection (operator)}. \TODOm{an illustration about backprojection could be nice}
\end{remark}

\begin{proof}[Proof of Theorem~\ref{theo:adjoints}]
Let $f \in \Sch(\R^n)$ and $g \in \Sch(\R)$. We use a substitution similar to the one applied in the proof of Theorem~\ref{theo:slice}:
\begin{align*}
\langle g, \Rad_\Theta f \rangle_{\Sch(\R)} &= \int_\R g(s)\Rad_\Theta f(s)ds = \int_\R \int_{\Span\{\Theta\}^\perp} g(s)f(s\Theta + y)dy\,ds\\
&= \int_{\R^n} g(\langle x, \Theta \rangle) f(x) dx = \langle g\left(\langle \cdot, \Theta \rangle\right), f \rangle_{\Sch(\R^n)},
\end{align*}
hence the first statement. For the second one let $h \in \Sch(\mathcal{Z})$.
\begin{align*}
\langle h, \Rad f \rangle_{\Sch(\mathcal{Z})} &= \int_{S^{n-1}} \int_\R g(\Theta, s) \Rad f(\Theta, s) ds\,d\Theta \\
&= \int_{S^{n-1}} \int_\R \int_{\Span\{\Theta\}^\perp} g(\Theta, s) f(s\Theta + y)dy\,ds\,d\Theta \\
&= \int_{S^{n-1}} \int_{\R^n} g(\Theta, \langle x, \Theta \rangle ) f(x)dx\,d\Theta \\
&= \int_{\R^n} \left( \int_{S^{n-1}}  g(\Theta, \langle x, \Theta \rangle )d\Theta \right) f(x)dx 
= \left\langle \int_{S^{n-1}}  g(\Theta, \langle \cdot, \Theta \rangle )d\Theta, f \right\rangle_{\Sch(\R^n)}.
\end{align*}
\end{proof}

Here we restate without proof the analogue of the slice theorem for the adjoint operator. It again clarifies the connection between the spectrum of the backprojected function and the original spectrum:
\begin{theorem}
\label{theo:slice_adjoint}
For $g \in \Sch(\mathcal{Z})$ even, i.e.\ $g(-\Theta, -s) = g(\Theta, s)$, and any $\xi \in \R^n$, we have:
\[ \F\Rad^* g(\xi) = 2 \cdot \big(\sqrt{2\pi}\, \big)^{n-1} \norm[2]{\xi}^{-n+1} \F g\left(\frac{\xi}{\norm[2]{\xi}}, \norm[2]{\xi}\right).  \]
\end{theorem}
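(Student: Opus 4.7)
The plan is to avoid the distributional subtleties of a direct computation of $\F\Rad^* g(\xi)$ (which would produce delta functions concentrated where $\Theta = \pm\xi/\norm[2]{\xi}$) by instead testing $\F\Rad^*g$ against an arbitrary $\phi\in\Sch(\R^n)$ and reading off the identity from what we obtain. Using the symmetry $\int(\F f)\phi = \int f(\F\phi)$ of the Fourier transform together with the adjoint relation from Theorem~\ref{theo:adjoints}, the left-hand side collapses to $\langle g, \Rad\F\phi\rangle_{\Sch(\mathcal{Z})} = \int_{S^{n-1}}\!\!\int_\R g(\Theta,s)\,\Rad_\Theta\F\phi(s)\,ds\,d\Theta$. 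Fubini is always justified because $g$ and $\phi$ have Schwartz decay.

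The next step is to unwind $\Rad_\Theta\F\phi$ using Theorem~\ref{theo:slice}: one has $\F\{\Rad_\Theta\F\phi\}(\sigma) = \big(\sqrt{2\pi}\,\big)^{n-1}\F\F\phi(\sigma\Theta) = \big(\sqrt{2\pi}\,\big)^{n-1}\phi(-\sigma\Theta)$, since $\F\F$ is the reflection operator under the symmetric convention. Inverting the Fourier transform and substituting $\tau=-\sigma$ yields $\Rad_\Theta\F\phi(s) = \big(\sqrt{2\pi}\,\big)^{n-2}\int_\R e^{-is\tau}\phi(\tau\Theta)\,d\tau$. Plugging back and performing the $s$-integration first lets us recognise the inner integral as $\sqrt{2\pi}\,\F g(\Theta,\tau)$ in the second slot of $g$, so the identity becomes $\int_{\R^n}\F\Rad^*g(\xi)\,\phi(\xi)\,d\xi = \big(\sqrt{2\pi}\,\big)^{n-1}\int_{S^{n-1}}\!\!\int_\R \phi(\tau\Theta)\,\F g(\Theta,\tau)\,d\tau\,d\Theta$.

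The final step converts this double integral on $S^{n-1}\times\R$ into a single integral over $\R^n$ via the polar map $(\Theta,\tau)\mapsto\tau\Theta$. Because $(\Theta,\tau)$ and $(-\Theta,-\tau)$ hit the same point, the map is two-to-one when $\tau$ ranges over all of $\R$; I split $\tau\in\R$ into its positive and negative halves and substitute $\tau\mapsto-\tau$, $\Theta\mapsto-\Theta$ in the negative half. The evenness hypothesis $g(-\Theta,-s)=g(\Theta,s)$ immediately propagates to $\F g(-\Theta,-\tau)=\F g(\Theta,\tau)$, so the two halves contribute equally, giving the overall factor of $2$. On the remaining half $\tau>0$ the standard polar Jacobian $d\xi=\tau^{n-1}d\tau\,d\Theta$ introduces the weight $\norm[2]{\xi}^{-(n-1)}$ together with the identifications $\tau=\norm[2]{\xi}$ and $\Theta=\xi/\norm[2]{\xi}$; the claimed pointwise identity then follows from the arbitrariness of $\phi$.

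The main obstacle will be the bookkeeping of the normalisation constants $\big(\sqrt{2\pi}\,\big)^{\pm k}$ across the repeated uses of Fourier inversion and the Fourier slice theorem, and the verification that the antipodal involution $\Theta\mapsto-\Theta$ preserves the surface measure on $S^{n-1}$ (so that the evenness argument yields the clean factor of $2$ rather than a Jacobian correction). Once these conventions are pinned down, every remaining step is a routine application of Fubini, guaranteed by the Schwartz decay of $g$ and $\phi$.
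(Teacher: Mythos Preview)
The paper explicitly states this theorem \emph{without proof} (see the sentence preceding Theorem~\ref{theo:slice_adjoint}: ``Here we restate without proof the analogue of the slice theorem for the adjoint operator''), so there is no proof in the paper to compare against.

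Your argument is correct. The duality approach---passing to $\langle \F\Rad^*g,\phi\rangle = \langle g,\Rad\F\phi\rangle$, invoking the Fourier slice theorem on $\Rad_\Theta\F\phi$, then unfolding via polar coordinates and the evenness of $g$---is the standard route and all the constants check out. Two small remarks worth making explicit in a polished write-up: first, $\Rad^*g$ is not in $\Sch(\R^n)$ (it decays only like a negative power of $\norm[2]{x}$, as the formula itself reveals via the $\norm[2]{\xi}^{-n+1}$ singularity at the origin), so the pairing $\int(\F\Rad^*g)\phi = \int(\Rad^*g)(\F\phi)$ should be read as the \emph{definition} of the distributional Fourier transform of the tempered distribution $\Rad^*g$, and your computation then identifies that distribution with the stated locally integrable function. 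Second, the antipodal map on $S^{n-1}$ is an isometry and hence preserves surface measure, so no Jacobian correction appears---you flagged this yourself, and it is indeed immediate.
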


\remark{Keep in mind, that any Radon-transform $\Rad f$ is an even function of $\Sch(\mathcal{Z})$.}

\subsection{Convolutional properties}

For the next property, the convolution of two Radon-transforms has to be defined. From now on the convolution of two functions on $\mathcal{Z} = S^{n-1}  \times \R$ is defined as the convolution in their second variable. More precisely, if $g, h \in \Sch(\mathcal{Z})$, then the convolution function defined below is also in $\Sch(\mathcal{Z})$:
\[ (g \ast h)(\Theta, s) = \int_\R g(\Theta, s-t)h(\Theta, t)dt. \]

\begin{theorem}
\label{theo:convolutional}
For any $f \in \Sch(\R^n), g \in \Sch(\mathcal{Z})$, the following holds:
\begin{equation*}
\Rad^*g \ast f = \Rad^*\{g \ast \Rad f\}.
\label{eq:convolutional}
\end{equation*}
\end{theorem}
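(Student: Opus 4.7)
The plan is to prove the identity by a direct calculation: expand the left-hand side using the definitions of the $n$-dimensional convolution and of the backprojection operator $\Rad^*$, swap the order of integration, and then perform the same orthogonal-decomposition substitution that was used in the proofs of Theorems~\ref{theo:slice} and~\ref{theo:adjoints}. This should let the inner $\R^n$ integral refactor into a $1$-dimensional convolution (in the second variable) of $g$ with the projection $\Rad_\Theta f$, after which the remaining integral over $S^{n-1}$ is recognised as $\Rad^*$ applied to $g \ast \Rad f$.

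More concretely, I would start from
\[
(\Rad^* g \ast f)(x) = \int_{\R^n} \Rad^* g(x-y)\,f(y)\,dy = \int_{\R^n} \!\int_{S^{n-1}} g\!\left(\Theta, \langle x-y, \Theta\rangle\right) d\Theta\, f(y)\, dy.
\]
Fubini (applicable thanks to the rapid decay of Schwartz functions) swaps the two integrals, bringing the $S^{n-1}$ integral outside. Then, for each fixed direction $\Theta$, I would decompose $y = s\Theta + z$ with $s \in \R$ and $z \in \Span\{\Theta\}^\perp$, so that $\langle y, \Theta\rangle = s$ and hence $\langle x - y, \Theta\rangle = \langle x, \Theta\rangle - s$. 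The argument of $g$ is then independent of $z$, which lets the inner integral over $\Span\{\Theta\}^\perp$ collapse onto $\Rad_\Theta f(s)$:
\[
\int_{\R^n} g\!\left(\Theta, \langle x-y, \Theta\rangle\right) f(y)\, dy
= \int_{\R} g\!\left(\Theta, \langle x, \Theta\rangle - s\right) \Rad_\Theta f(s)\, ds.
\]

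The right-hand side of this last equality is, by the definition of convolution on $\mathcal{Z}$ (convolution in the second argument), precisely $(g \ast \Rad f)(\Theta, \langle x, \Theta\rangle)$. Substituting this back gives
\[
(\Rad^* g \ast f)(x) = \int_{S^{n-1}} (g \ast \Rad f)(\Theta, \langle x, \Theta\rangle)\, d\Theta = \Rad^*\{g \ast \Rad f\}(x),
\]
which is the claimed identity.

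The only real obstacle is bookkeeping rather than substance: one must make sure that the Fubini swap and the orthogonal-decomposition substitution are legitimate, but this is immediate because $f \in \Sch(\R^n)$ and $g \in \Sch(\mathcal{Z})$ guarantee absolute integrability of every intermediate expression (and the Jacobian of the decomposition $y \mapsto (s, z)$ is unity). As an alternative route I briefly considered applying the full Fourier transform to both sides and using the Fourier slice theorem together with Theorem~\ref{theo:slice_adjoint} and the ordinary convolution theorem in both $n$ and $1$ dimensions; this also works and the constants $2\,(\sqrt{2\pi})^{n-1}\|\xi\|^{-n+1}$ cancel cleanly, but it is less transparent than the direct computation above and would additionally require replacing $g$ by its even part first, so I would prefer the direct approach.
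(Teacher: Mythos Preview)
Your proof is correct and follows essentially the same route as the paper: expand $(\Rad^*g \ast f)(x)$, swap the $\R^n$ and $S^{n-1}$ integrals by Fubini, substitute $y=s\Theta+z$ with $z\in\Span\{\Theta\}^\perp$, identify $\Rad_\Theta f(s)$ and then $(g\ast\Rad f)(\Theta,\langle x,\Theta\rangle)$, and finally recognise $\Rad^*$. The paper does not spell out the Fubini justification or discuss the Fourier-side alternative, but the computation is otherwise identical.
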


\remark{Theorem~\ref{theo:convolutional} will be our starting point to develop a reconstruction algorithm. We will be looking for filters $g$, for which $\Rad^* g$ is almost a Dirac-function.}

\begin{proof}[Proof of Theorem~\ref{theo:convolutional}]
We have:
\begin{align*}
(\Rad^*g \ast f)(x) &= \int_{\R^n} \left( \int_{S^{n-1}} g(\Theta, \langle x-y, \Theta \rangle)d\Theta \right) f(y) dy\\
&= \int_{S^{n-1}} \int_{\R^n} g(\Theta, \langle x, \Theta \rangle - \langle y, \Theta \rangle) f(y) dy \, d\Theta.
\end{align*}
After substituting $y=s\Theta + z$, where $s\in\R$ and $z \in \Span\{\Theta\}^\perp$, we get:
\begin{align*}
(\Rad^*g \ast f)(x) &= \int_{S^{n-1}} \int_\R \int_{\Span\{\Theta\}^\perp} g(\Theta, \langle x, \Theta \rangle - s) f(s\Theta +z) dz \, ds \, d\Theta \\
&= \int_{S^{n-1}} \int_\R g(\Theta, \langle x, \Theta \rangle - s) \left( \int_{\Span\{\Theta\}^\perp} f(s\Theta +z) dz \right) \, ds \, d\Theta
\end{align*}
Thus,
\begin{align*}
(\Rad^*g \ast f)(x) &= \int_{S^{n-1}} \left(\int_\R g(\Theta, \langle x, \Theta \rangle - s) \Rad f(\Theta, s) ds \right) d\Theta \\
&= \int_{S^{n-1}} (g \ast \Rad f)(\Theta, \langle x, \Theta \rangle) d\Theta = \Rad^*\{ g\ast\Rad f \}(x).  &\qedhere
\end{align*}
\end{proof}

\begin{theorem}
\label{theo:continuous_LI}
For any $f \in \Sch(\R^n)$ we have:
\begin{equation*}
\Rad^*\Rad f = |S^{n-2}| \cdot \norm[2]{\I_{\R^n}}^{-1} \ast f,
\label{eq:continuous_LI}
\end{equation*}
where $\I_\Omega$ denotes the identity operator of $\Omega$ and $|S^{n-2}|$ is the surface of the unit ball in $\R^{n-1}$. For $n=2$ we have $|S^{0}| = 2$.
\end{theorem}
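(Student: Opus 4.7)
The plan is to unfold $\Rad^*\Rad f(x)$ as a double integral, apply a polar-coordinate change inside the hyperplane normal to each direction, and then recognise the result as a convolution of $f$ with a radial kernel.

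First I would substitute the formula for $\Rad^*$ from Theorem~\ref{theo:adjoints} together with the definition~\eqref{eq:Radon-def}, noting that $\{s\Theta + y : y \in \Span\{\Theta\}^\perp\}$ with $s = \langle x, \Theta\rangle$ is precisely the affine hyperplane $\{x + w : w \perp \Theta\}$:
\[ \Rad^*\Rad f(x) = \int_{S^{n-1}} \int_{\Span\{\Theta\}^\perp} f(x+w)\, dw\, d\Theta. \]
Then I would introduce polar coordinates inside the $(n-1)$-dimensional subspace $\Span\{\Theta\}^\perp$: write $w = r\Psi$ with $r \geq 0$ and $\Psi$ on the great sub-sphere $S^{n-1}\cap\Span\{\Theta\}^\perp$, so that $dw = r^{n-2}\, dr\, d\Psi$.

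The crucial step is now to interchange the $\Theta$- and $\Psi$-integrations. The incidence set $V = \{(\Theta,\Psi)\in S^{n-1}\times S^{n-1} : \Theta\perp\Psi\}$ admits a unique (up to scalar) $O(n)$-invariant measure, and by the $(\Theta,\Psi)\leftrightarrow(\Psi,\Theta)$ symmetry its two natural disintegrations both have $(n-2)$-spherical fibre mass $|S^{n-2}|$, yielding
\[ \int_{S^{n-1}}\int_{S^{n-1}\cap\Span\{\Theta\}^\perp} F(\Psi)\, d\Psi\, d\Theta = |S^{n-2}|\int_{S^{n-1}} F(\Psi)\, d\Psi. \]
Applied with $F(\Psi) = f(x+r\Psi)$ and then folded back to $v = r\Psi \in \R^n$ (where $dv = r^{n-1}\, dr\, d\Psi$, so $r^{n-2}\, dr\, d\Psi = \norm[2]{v}^{-1}\, dv$), I obtain
\[ \Rad^*\Rad f(x) = |S^{n-2}|\int_{\R^n}\frac{f(x+v)}{\norm[2]{v}}\, dv = |S^{n-2}|\cdot\norm[2]{\I_{\R^n}}^{-1}\ast f(x), \]
after the cosmetic reflection $v\mapsto -v$; a direct check for $n=2$ recovers the announced factor $|S^0|=2$.

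The main obstacle will be the rigorous justification of the order-swap, i.e.\ that the two disintegrations of the invariant measure on the incidence manifold $V$ really agree. The cleanest route invokes uniqueness of the $O(n)$-invariant measure on the Stiefel manifold $V_2(\R^n) = V$ and matches total masses against an orthogonally invariant test integrand; alternatively one can parametrise $V$ explicitly and compute the Jacobian. A minor side issue is the singularity of $\norm[2]{v}^{-1}$ at the origin, but this kernel is locally $L^1(\R^n)$ for $n\geq 2$, so the convolution is well-defined for Schwartz $f$ and all rearrangements are licit by Fubini.
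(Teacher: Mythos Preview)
Your proposal is correct and follows essentially the same route as the paper: both unfold $\Rad^*\Rad f(x)$ into $\int_{S^{n-1}}\int_{\Span\{\Theta\}^\perp} f(x+w)\,dw\,d\Theta$ and then reduce this double integral to $|S^{n-2}|\int_{\R^n}\norm[2]{v}^{-1}f(x+v)\,dv$. The only difference is that the paper quotes this reduction as a ready-made lemma from \cite[VII.~2.]{Nat01}, whereas you supply an argument for it via polar coordinates in $\Span\{\Theta\}^\perp$ and the $(\Theta,\Psi)\leftrightarrow(\Psi,\Theta)$ symmetry of the incidence manifold $\{(\Theta,\Psi):\Theta\perp\Psi\}\cong V_2(\R^n)$; your version is thus more self-contained at the cost of invoking the Stiefel-manifold invariant-measure machinery, but the substance is the same.
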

\TODOm{check whether this exact formulation holds for 2d and 3d. By exact I mean the constant}

\remark{This theorem is the certificate of the fact that $\Rad^*\Rad$, i.e.\ a combined forward- and backward-projection is a \emph{linear, shift-invariant (LI) operator}. This operator will be a key element in some iterative reconstruction schemes.}

\begin{proof}[Proof of Theorem~\ref{theo:continuous_LI}]
For the proof, the following lemma from \cite[VII. 2.]{Nat01} will be handy:
\begin{equation}
\int_{S^{n-1}} \int_{\Span\{\Theta\}^\perp} f = |S^{n-2}| \int_{\R^n} \norm[2]{\I_{\R^n}}^{-1}f.
\label{eq:lemma_for_continuousLI}
\end{equation}
Obviously $\Rad^*\Rad \in Lin(\Sch(\R^n), \Sch(\R^n))$.
\begin{align*}
\Rad^*\Rad f(x) = \int_{S^{n-1}} \int_{\Span\{\Theta\}^\perp} f(\langle x, \Theta \rangle \Theta + y)dy \, d\Theta.
\end{align*}
However, $\langle x, \Theta \rangle \Theta + \Span\{\Theta\}^\perp = x + \Span\{\Theta\}$, therefore:
\begin{align*}
\Rad^*\Rad f(x) = \int_{S^{n-1}} \int_{\Span\{\Theta\}^\perp} f(x + y)dy \, d\Theta.
\end{align*}
After appyling~\eqref{eq:lemma_for_continuousLI} for $y \rightarrow f(x+y)$ and substituting $y = z-x$, it follows:
\begin{equation*}
\Rad^*\Rad f(x) = \int_{\R^n} \norm[2]{x-z}^{-1}f(z)dz = (\norm[2]{\I_{\R^n}}^{-1} \ast f)(x).\qedhere
\end{equation*}
\end{proof}

\subsection{Inverse. Continuous FBP} \label{subsec:continuous_FBP}

For the inverse operator we introduce a new notation called the Riesz-potential, which denotes a special type of filter using a power of $|\I_{\R}|$ in Fourier domain. For $h$ defined on $\R$:
\[ I^\alpha h \defeq \F^{-1}\{ |\I_{\R}|^{-\alpha} \F h \}. \]
From now on, likewise the convolution operator, the Fourier-transformation and Riesz-potential operators act on the second variable of functions defined on $\mathcal{Z}=S^{n-1}\times\R$. More precisely, for $g \in \Sch(\mathcal{Z})$:
\[ \F g (\Theta, \cdot) = \F \{ s \rightarrow g(\Theta, s) \}(\cdot) \text{ and } I^\alpha g = \F^{-1} \{ s \rightarrow |s|^{-\alpha}\F g(\Theta, s) \}. \]

\begin{theorem}
\label{theo:continuous_inverse}
For $f\in\Sch(\R^n)$, the following makes sense and holds:
\begin{equation}
f = \frac{1}{2} \cdot \frac{1}{(2\pi)^{n-1}} \Rad^* I^{-n+1}\Rad f.
\label{eq:continuous_inverse}
\end{equation}
Therefore the operator $\Rad$ is left invertible and $\Rad_l^{-1} = \frac{1}{2} \cdot \frac{1}{(2\pi)^{n-1}} \Rad^* I^{-n+1}$.
For $n=2$ we have (by abuse of notation):
\begin{equation}
f = \frac{1}{2} \cdot \frac{1}{2\pi} \int_{S^1} \F^{-1} \{ |\I_\R| \F \Rad f \}(\Theta, \langle x, \Theta \rangle)d\Theta.
\label{eq:continuous_FBP_2D}
\end{equation}
\end{theorem}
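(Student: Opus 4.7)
My plan is to verify the inversion formula by passing everything through the Fourier transform and then collecting constants. Since $\F$ is a bijection on Schwartz space, it suffices to check that $\F\bigl\{\Rad^* I^{-n+1}\Rad f\bigr\}(\xi) = 2(2\pi)^{n-1}\,\F f(\xi)$ for every $\xi \in \R^n$. The three ingredients I would chain together are: (i) the Fourier slice theorem (Theorem~\ref{theo:slice}) for the inner factor $\Rad f$, (ii) the definition of the Riesz potential to turn $I^{-n+1}$ into multiplication by $|\sigma|^{\,n-1}$ on the frequency side in the radial variable, and (iii) the slice theorem for the adjoint (Theorem~\ref{theo:slice_adjoint}) to handle the outer $\Rad^*$.

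Concretely, I would first let $g = I^{-n+1}\Rad f \in \Sch(\mathcal{Z})$ and compute its Fourier transform in the second variable. By the definition of $I^\alpha$ with $\alpha=-n+1$, $\F g(\Theta,\sigma)=|\sigma|^{\,n-1}\,\F\Rad f(\Theta,\sigma)$, and Theorem~\ref{theo:slice} rewrites $\F\Rad f(\Theta,\sigma)=\sqpie^{\,n-1}\F f(\sigma\Theta)$. Next, because $\Rad f$ is even on $\mathcal{Z}$ and multiplication by $|\sigma|^{\,n-1}$ preserves evenness, $g$ satisfies the evenness hypothesis of Theorem~\ref{theo:slice_adjoint}, which gives
\begin{equation*}
\F\Rad^* g(\xi) = 2\,\sqpie^{\,n-1}\,\norm[2]{\xi}^{-n+1}\,\F g\!\left(\frac{\xi}{\norm[2]{\xi}},\norm[2]{\xi}\right).
\end{equation*}
Substituting the expression for $\F g$ obtained above collapses the factor $\norm[2]{\xi}^{\,n-1}$ against $\norm[2]{\xi}^{-n+1}$ and combines the two $\sqpie^{\,n-1}$ powers into a single $(2\pi)^{n-1}$, yielding $\F\Rad^* g(\xi)=2(2\pi)^{n-1}\F f(\xi)$, which is exactly what we want. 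Dividing by the leading constant $\tfrac{1}{2(2\pi)^{n-1}}$ and applying $\F^{-1}$ gives~\eqref{eq:continuous_inverse}, and left-invertibility follows immediately by reading off $\Rad_l^{-1}$.

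For the $n=2$ specialisation I would just unfold the definitions. With $n=2$ the Riesz potential reads $I^{-1}h = \F^{-1}\{|\I_\R|\,\F h\}$, and writing out $\Rad^*$ explicitly via Theorem~\ref{theo:adjoints} turns $\tfrac{1}{2}\cdot\tfrac{1}{2\pi}\Rad^* I^{-1}\Rad f$ into the integral over $S^1$ displayed in~\eqref{eq:continuous_FBP_2D}. The main obstacle I anticipate is not the algebra, which is essentially constant-bookkeeping, but the analytic justification that the claimed operations make sense: one needs to know that $I^{-n+1}$ leaves us in a space where $\Rad^*$ can be applied and Fourier inversion is valid. Since the multiplier $|\sigma|^{\,n-1}$ is locally integrable and $\F\Rad f$ decays rapidly, $I^{-n+1}\Rad f$ lies in $\Sch(\mathcal{Z})$ in the relevant sense (modulo the well-known mild singularity at the origin, inherited already from $\F f$ restricted to rays, which is absorbed by the polynomial weight); this is the standard caveat behind the phrase ``makes sense'' in the theorem statement and is where I would be most careful.
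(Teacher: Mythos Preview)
Your argument is correct, and the constant-bookkeeping is done accurately: chaining Theorem~\ref{theo:slice}, the definition of $I^{-n+1}$, and Theorem~\ref{theo:slice_adjoint} really does give $\F\Rad^* I^{-n+1}\Rad f = 2(2\pi)^{n-1}\F f$, from which the inversion formula follows by Fourier inversion. Your check that $g=I^{-n+1}\Rad f$ is even (so that Theorem~\ref{theo:slice_adjoint} applies) is the only non-automatic point, and you address it.

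However, the paper proves the theorem by a \emph{different} route. Rather than invoking the adjoint slice theorem, it starts from the Fourier inversion formula $f(x)=\sqpie^{-n}\int_{\R^n}e^{i\langle x,\xi\rangle}\F f(\xi)\,d\xi$, passes to polar coordinates $\xi=\sigma\Theta$ (acquiring the Jacobian $\sigma^{n-1}$), symmetrises via $(\Theta,\sigma)\mapsto(-\Theta,-\sigma)$ to replace $\sigma^{n-1}$ by $|\sigma|^{n-1}$ at the cost of a factor $\tfrac12$, and only then uses Theorem~\ref{theo:slice} to rewrite $\F f(\sigma\Theta)$ in terms of $\F\Rad f(\Theta,\sigma)$. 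Recognising the inner integral as an inverse Fourier transform in $\sigma$ and the outer integral over $S^{n-1}$ as $\Rad^*$ gives the formula directly.

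The trade-off is clear: your proof is shorter and more modular, but it outsources all of the geometry to Theorem~\ref{theo:slice_adjoint}, which the paper stated \emph{without proof}; the polar-coordinate computation you avoid is essentially what one would do to prove that theorem. The paper's proof is thus more self-contained, and it also makes transparent where the $|\sigma|^{n-1}$ weight comes from (the polar Jacobian) and why the $\tfrac12$ appears (the double cover $S^{n-1}\times\R\to\R^n$). Your cautionary remark about whether $I^{-n+1}\Rad f$ genuinely lies in $\Sch(\mathcal{Z})$ is well placed; the paper's approach sidesteps this slightly since it never leaves the level of integrals.
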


\begin{remark}
The inverse Radon-operator is often referred to as \emph{filtered backprojection (FBP)}, and in our case we derived the continuous form of FBP. In the two-dimensional case, $I^{-1}g = \F^{-1} \{ |\I_\R|\F g \}$ is called ramp-filtering and $|\I_\R|$ is the \emph{ramp-filter}. Note again that $|\I_\R|$ acts on the second variable of $\F \Rad f$, i.e.\ filtering happens \emph{projection-wise}.
\end{remark}

\begin{proof}[Proof of Theorem~\ref{theo:continuous_inverse}]
The Fourier inversion formula for $f$:
\[ f(x) = \frac{1}{\big( \sqrt{2\pi} \,\big)^n } \int_{\R^n} e^{i\langle x, \xi \rangle } \F f(\xi) d\xi. \]

We subsitute polar coordinates. In general, if $\xi = \sigma \Theta$, where $\sigma \in [0, +\infty)$ and $\Theta \in S^{n-1}$, then
\[ \int_{\R^n} h(\xi) d\xi = \int_{S^{n-1}} \int_0^{\infty} \sigma^{n-1} f(\sigma\Theta) d\sigma \, d\Theta. \]
Therefore,
\[ f(x) = \frac{1}{\big( \sqrt{2\pi} \,\big)^n } \int_{S^{n-1}} \int_0^{\infty} e^{i \sigma \langle x, \Theta \rangle} \sigma^{n-1} \F f(\sigma\Theta) d\sigma \, d\Theta.  \]

By substituting $(-\Theta, -\sigma)$ in the place of $(\Theta, \sigma)$, we get
\[ f(x) = \frac{1}{\big( \sqrt{2\pi} \,\big)^n } \int_{S^{n-1}} \int_{-\infty}^{0} e^{i \sigma \langle x, \Theta \rangle} (-\sigma)^{n-1} \F f(\sigma\Theta) d\sigma \, d\Theta.  \]

The terms $\sigma^{n-1} |_{\sigma \geq 0}$ and $(-\sigma)^{n-1} |_{\sigma \leq 0}$ are merged into $|\sigma|^{n-1}$. By summing up the two expression, we obtain:
\begin{equation}
f(x) = \frac{1}{2} \cdot \frac{1}{\big( \sqrt{2\pi} \,\big)^n } \int_{S^{n-1}} \int_{-\infty}^{\infty} e^{i \sigma \langle x, \Theta \rangle} |\sigma|^{n-1} \F f(\sigma\Theta) d\sigma \, d\Theta.
\label{eq:continuous_inverse_parital1}
\end{equation}

By applying the Fourier slice theorem~\ref{theo:slice}, we get:
\begin{align*}
f(x) &= \frac{1}{2} \cdot \frac{1}{\big( \sqrt{2\pi} \,\big)^{2n-1} } \int_{S^{n-1}} \int_{\R^n} e^{i \sigma \langle x, \Theta \rangle} |\sigma|^{n-1} \F \Rad f(\Theta, \sigma) d\sigma \, d\Theta \\
&= \frac{1}{2} \cdot \frac{1}{\big( \sqrt{2\pi} \,\big)^{2n-2} } \int_{S^{n-1}} \F^{-1} \{ |\I_\R|^{n-1}\F \Rad f\}(\Theta, \langle x, \Theta \rangle) d\Theta,
\end{align*}
from which the expected result is concluded.
\end{proof}

\begin{remarks}\leavevmode
\begin{enumerate}
\item From \eqref{eq:continuous_inverse_parital1} we could have derived another formula. Let $H^{n-1}$ be a half sphere, e.g.\ in a two dimensional case the unit vectors corresponding to the $[0, \pi]$ angular range. By decomposing the outer integral into separate integrals on $H^{n-1}$ and $-H^{n-1}$, we observe that after substituting in one of them again $(-\Theta, -\sigma)$, the terms become equal.
Hence, in this case, for the two dimensional case we arrive to:
\begin{equation}
f = \frac{1}{2\pi} \int_{H^1} \F^{-1} \{ |\I_\R| \F \Rad f \}(\Theta, \langle x, \Theta \rangle)d\Theta.
\label{eq:continuous_inverse_half}
\end{equation}
This is important, because most implementations follow this formula. After interpretation, we immediately realise that considering every single projection twice at both $(\Theta, s)$ and $(-\Theta, -s)$ is unnecessary.

\item We only proved the left side invertibility of $\Rad$. In fact, $\Rad : \Sch(\R^n) \rightarrow \Sch(\mathcal{Z})$ is not surjective, since for any $f \in \Sch(\R^n)$ it is trivial that $\Rad f$ is even, $\Rad f(-\Theta, -s) = \Rad f(\Theta, s)$. If the target domain is restricted to even functions of $\Sch(\mathcal{Z})$, then the $\Rad$ operator becomes fully invertible and, obviously, $\Rad^{-1} = \Rad_l^{-1}$.

\end{enumerate}
\end{remarks}

\subsection{Filtering}
In this subsection we discuss the design of filtering functions for the FBP algorithm. Even though the ramp-filter was already introduced in subsection~\ref{subsec:continuous_FBP} alongside a direct inversion formula, the discussion is rather started off from Theorem~\ref{theo:convolutional}: $\Rad^*v \ast f = \Rad^* \{ v \ast \Rad f \}$. Analysing this equation provides the benefit of directly designing other filter functions. The ideal situation would be having $\F v$ as the ramp-filter, which would result in $\Rad^* v$ being the Dirac $\delta$. The drawback of this approach is two-folded: firstly, the inverse Fourier of the ramp-filter is not easily computed, and secondly, it would amplify high-frequency noise components in the projections. We, therefore, aim for faithful reconstruction of $\Omega$-band-limited functions $f$ (i.e~$\F f(\xi)=0$, if $\norm[2]{\xi} > \Omega$). Watch out that due to the slice theorem, both the projections and the Radon-transform of $f$ become, in that case, $\Omega$-band-limited. Hence, instead of having $\F \Rad^*v = \F \delta = 1/\sqpie^{n}$, we choose $\F \Rad^*v = 1/\sqpie^{n}$ on a $\Omega$-radius support (and $0$ otherwise). Or more generally, a \emph{filter factor} $\Phi$ is introduced such that $\F\Phi(\sigma)$ is close to $1$ if $|\sigma| \leq 1$ and close to $0$ otherwise. In this case we shall have:
\begin{equation}
\F \Rad^* v(\xi) = \frac{1}{\sqpie^{n}} \F\Phi\left(\frac{\norm[2]{\xi}}{\Omega}\right).
\label{eq:filter_factor1}
\end{equation}

This is only achievable with $\Rad^* v$ being radially symmetrical. By looking for $v$ even and taking Theorem~\ref{theo:slice_adjoint}, this can be guaranteed by having $v$ (and, hence $\F v$) independent of its first variable. By abuse of notation, we have $v(\Theta, s) = v(s)$. Now Theorem~\ref{theo:slice_adjoint} says:
\begin{equation}
\F\Rad^* v(\xi) = 2 \cdot \big(\sqrt{2\pi}\, \big)^{n-1} \norm[2]{\xi}^{-n+1} \F v(\norm[2]{\xi}).
\label{eq:adjoint_filter_factor}
\end{equation}
From~\eqref{eq:filter_factor1} and~\eqref{eq:adjoint_filter_factor} we conclude that:
\begin{equation}
\F v(\sigma) = \frac{1}{2} \cdot \frac{1}{\sqpie^{2n-1}} |\sigma|^{n-1}\F\Phi\left(\frac{\sigma}{\Omega}\right).
\end{equation}

Multiple filter factors have been previously suggested, only one of them is presented here for $n=2$. For the ideal low pass filter, i.e.\ $\F\Phi(\sigma)=1$ for $|\sigma|\leq 1$, otherwise $0$, we obtain the \emph{Ram-Lak filter}, suggested first by \cite{RL71}:
\[ v_{\text{Ram-Lak}}(s) = \frac{\Omega^2}{4\pi^2} \left(\sinc(s)-\frac{1}{2} \sinc^2 \left(\frac{s}{2}\right)\right). \]
For the discretised version we compute $v_{\text{Ram-Lak}}$ for $s=\frac{\pi}{\Omega}l, l\in\mathbb{Z}$:
\begin{equation*}
v_{\text{Ram-Lak}}\left(\frac{\pi}{\Omega}l\right) = \frac{\Omega^2}{2\pi^2}
\left\{
\begin{array}{rl}
1/4, & l=0,\\
0, & l \neq 0 \text{ even},\\
-1/(\pi^2l^2), & l \text{ odd}.
\end{array}
\right.
\end{equation*}

\section{Discretisation of the system} \label{sec:discretisation}

\subsection{Parallel scanning geometry} \label{subsec:parallel_scanning_geometry}
The exact choice of known line integrals bears a significance in the derivation of the discretised reconstruction.
In this subsection we present the simplest and most straightforward scanning geometry, the parallel scanning geometry (see Fig.~\ref{fig:sample_geometries:b}). This modality is designed for two dimensional slices ($n=2$) and involves equidistant parallel beams under angles evenly distributed throughout the entire angular range. Therefore, the Radon-transform $\Rad f$ is available for
\begin{align*}
\{ (\Theta_j, s_l) \,\big|\,& \Theta_j = (\cos \varphi_j, \sin\varphi_j)^T, \varphi_j = j\,\Delta\varphi, j=\overline{0,p-1};\\
& s_l = l\,\Delta s , l=\overline{-q, q} \}.
\end{align*}
Here $\Delta\varphi=\pi/p$ is the equal angular step and hence $0 \leq \varphi_j < \pi$. Traversing the half circle is enough because of the even property of the Radon-transform. Besides, recall the inversion formula~\eqref{eq:continuous_inverse_half} for half balls. 

The detector spacing $\Delta s$ needs to be chosen such that the body under examination is covered. Hence, we assume that the hypothesis model $f \in \Sch(\R^2)$ vanishes outside a reconstruction circle with radius $\varrho$. In that case let $\Delta s := \varrho/q$. Note that in this case $\Rad_\Theta f(s)$ also vanishes if $|s| > \varrho$.

\subsection{Discretising the Radon-transformation} \label{subsec:disc_Radon}

During the analysis of different reconstructions we made extensive use of a simple discretisation of the Radon-transformation. The advantage of having an artificial way of producing the forwardprojection is the ability to reuse existing reconstructions in simulations and in designing and evaluating own reconstruction methods. Furthermore, algebraic iterative reconstruction methods apply the forwardprojection during their computations.

In our case, the attenuation coefficient function $f \in \Sch(\R^2)$ (in CT almost always 2D slicing is used) is assumed to vanish outside a unit ball with radius $\varrho$. Therefore $f$ is discretised into an equidistant grid along all dimensions and is represented by a two dimensional, finite, discrete array $f^D \in \R^{\mathcal{X}\times \mathcal{Y}}$, where
\[ \mathcal{X} = \mathcal{Y} = \{-q \,\Delta s,\, (-q+1)\,\Delta s, \ldots, \, (q-1)\,\Delta s,\, q\,\Delta s \, (, (q+1)\,\Delta s)\} \text{ and}\]
\[ f^D(k \, \Delta s, l \, \Delta s)=f(k \, \Delta s, l \, \Delta s), \quad \Delta s = \frac{\varrho}{q}. \]

Denote by $\mathscr{R}_{\varrho} \defeq \{ f^D \in \R^{\mathcal{X} \times \mathcal{Y}} \,|\, f\in \Sch(\R^2);\, f(x)=0, \text{if } \norm[2]{x} > \varrho \}$ the set of all such arrays.

The discrete Radon-transformation has, then, a number of viable implementations: line model (integral computed based on the nearest neighbour model, i.e.\ integral becomes a weighted sum of traversed pixels, weighted by length of line within pixels), strip model (rays and detector cells have a non-zero width and line segments within pixels become strips) or it is possible to conduct physical simulations. 

We choose to evaluate the integral based on an equidistant set of points along lines with values being interpolated bilinearly:
\[ \Rad f(\Theta_j, s_l) \approx \Rad^D f^D(\Theta_j, s_l) \defeq \sum_k \BilInterp[f^D](s_l\Theta_j + k \, \Delta s \, \Theta_j^\perp). \]
In general an interpolation method is defined as follows: let $\chi \in L^1(\R^2) \cap L^2(\R^2)$ be a function, called a \emph{reconstruction kernel}. The interpolation with kernel $\chi$ is the linear operator, for which, in case of $f \in\R^{\mathcal{X}\times \mathcal{Y}}$, we have:
\begin{equation*}
\mathscr{I}_\chi f^D = \chi \ast \sum_{(k,l) \in \mathcal{X}\times\mathcal{Y}} f^D(k \, \Delta s, l \, \Delta s) \delta_{k\,\Delta s, l \,\Delta s}. 
\end{equation*}
Here $ \delta_{k\,\Delta s, l \,\Delta s}$ denotes the Dirac delta translated to $(k\,\Delta s, l \,\Delta s)$. In case of the bilinear interpolation we have that $\chi (u,v) = \Lambda(u, v) =  |1-u/\Delta s| \cdot |1-v/\Delta s|$, if $u,v \in [-\Delta s, \Delta s]$ and zero otherwise. Thus,
\[ \Rad^D f^D(\Theta_j, s_l) = \sum_k \mathscr{I}_\Lambda f^D (s_l\Theta_j + k \, \Delta s \, \Theta_j^\perp). \]

The two subsections~\ref{subsec:parallel_scanning_geometry} and \ref{subsec:disc_Radon} imply that in the fully (both in direct and Radon domain) discretised model, functions defined on $\R^{N\times N}$, $N=2q+1$, are mapped to $\R^{p \times (2q+1)}$ by a linear operator. By denoting this finite dimensional operator with $\Rad$ as well, our task has become to somehow invert $\Rad \in Lin(\R^{N\times N}, \R^{p \times (2q+1)})$. This task is impeded by a possibly low amount of measurement (which turns out to be desirable), hence a large dimensional kernel space $\ker\Rad$ and by the presence of noises in measurements, which could make the equations inconsistent. What is more, the discretised Radon-transformation inverse problems tend to be ill-conditioned, i.e.\ the ratio between the largest and lowest singular values of $\Rad$ is large and pseudo-inverting may admit reconstructions that magnify measurement noises.

For further, more detailed discussion, the reader is referred to \cite{KS01} and the lecture notes \cite{BGJ20}.

\subsection{Convolutional properties of the discrete Radon transformation} \label{subsec:discrete_LI}

Recall Theorem~\ref{theo:continuous_LI}, which stated that $\Rad^*\Rad$ in the continuous case is a linear, shift invariant operator. A very similar result could be achieved in the discrete settings, formally stated by the followings. All notations correspond to the formalism introduced in the previous two subsections. 

This derivation was done by us to facilitate the transition between continuous inversion models and discretised algorithms.

First, the continuous adjoint shown in Theorem~\ref{theo:adjoints} is replaced by an arbitrary quadrature rule:
\[ \mathcal{B} \defeq \left( h \rightarrow \sum_{j=0}^{p-1} \alpha_j h(\Theta_j, \langle \cdot, \Theta_j\rangle)  \right). \]

We also define the discrete convolution, as 
\[ (v \circledast g)(\Theta_j, s) = \Delta s \sum_{l=-q}^{q} v(s-s_l)g(\Theta_j, s_l). \]

\begin{theorem}
\label{theo:discrete_LI}
Let $q$ be chosen such that $\Delta s = \varrho / q \leq \pi / \Omega$. Let $\chi$ be an $\Omega$-band-limited reconstruction kernel, $v$ an $\Omega$-band-limited filter. In this case the operator:
\[ \mathcal{W}: f^D \rightarrow \mathcal{B}\{v \circledast \Rad^D f^D \} \]
is linear and shift invariant on $\mathscr{R}_{\varrho}$, as long as shifting keeps the function in $\mathscr{R}_{\varrho}$ (i.e.\ the shifting of the array only discards zeros).
\end{theorem}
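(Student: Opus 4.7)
The plan is to reduce shift invariance to the shift covariance of the continuous Radon transform by lifting the whole discrete pipeline to a continuous one via the band-limitedness hypotheses; linearity will be immediate along the way. Indeed, $\Rad^D$ is a finite linear combination of evaluations of $\mathscr{I}_\chi f^D$ (which is itself linear in $f^D$), the discrete convolution $v \circledast \cdot$ is linear in its second argument, and the quadrature backprojection $\mathcal{B}$ is a finite linear combination of point evaluations, so their composition $\mathcal{W}$ is linear.

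For shift invariance, fix a grid shift $a$ that keeps $f^D$ inside $\mathscr{R}_{\varrho}$. The first, easy observation is that $\mathscr{I}_\chi$ is shift covariant: writing $\mathscr{I}_\chi f^D = \chi \ast \sum_{k,l} f^D(k\,\Delta s, l\,\Delta s)\,\delta_{k\,\Delta s, l\,\Delta s}$, a shift of the samples by $a$ translates the weighted Dirac comb, and convolution with $\chi$ preserves translations, giving $\mathscr{I}_\chi(\tau_a f^D) = \tau_a \mathscr{I}_\chi f^D$. The technical heart is the exact identity
\[ \Delta s \cdot \Rad^D f^D(\Theta_j, s) \;=\; \Rad_{\Theta_j}\mathscr{I}_\chi f^D(s) \qquad \text{for every } s \in \R. \]
Since $\chi$ is $\Omega$-band-limited, so is $\mathscr{I}_\chi f^D$; the integrand $t \mapsto \mathscr{I}_\chi f^D(s\Theta_j + t\Theta_j^\perp)$ is therefore $\Omega$-band-limited on $\R$, the hypothesis $\Delta s \leq \pi/\Omega$ is precisely the Nyquist rate, and Poisson summation collapses the sum of samples to the single surviving Fourier mode at the origin, which is the full integral. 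An analogous Shannon-interpolation argument---using the band-limitedness of $v$ together with that of $\Rad_{\Theta_j}\mathscr{I}_\chi f^D$ (the latter by the Fourier slice theorem, Theorem~\ref{theo:slice})---then converts the discrete convolution into the continuous one, yielding $(v \circledast \Rad^D f^D)(\Theta_j, s) = \tfrac{1}{\Delta s}(v \ast \Rad_{\Theta_j}\mathscr{I}_\chi f^D)(s)$. This Poisson/Shannon lift is the sole place where the band-limitedness and Nyquist hypotheses enter, and it is the step I expect to be the main obstacle.

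Once the pipeline is thus rewritten as $\mathcal{W}(f^D)(x) = \tfrac{1}{\Delta s}\sum_j \alpha_j\bigl(v \ast \Rad_{\Theta_j}\mathscr{I}_\chi f^D\bigr)(\langle x, \Theta_j\rangle)$, shift invariance becomes a short continuous calculation. The identity $\Rad_\Theta(\tau_a g)(s) = \Rad_\Theta g(s - \langle a, \Theta\rangle)$ is immediate from the definition~\eqref{eq:Radon-def} by substitution in the perpendicular variable, convolution in $s$ commutes with $s$-shifts, and $\mathscr{I}_\chi(\tau_a f^D) = \tau_a \mathscr{I}_\chi f^D$ is already in hand. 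Feeding $\tau_a f^D$ therefore produces, inside every summand, a shift by $\langle a, \Theta_j\rangle$ in the $s$-variable; since $\langle x, \Theta_j\rangle - \langle a, \Theta_j\rangle = \langle x - a, \Theta_j\rangle$ for every $j$, the whole expression becomes $\mathcal{W}(f^D)(x - a)$, which is exactly the required shift invariance on $\mathscr{R}_{\varrho}$.
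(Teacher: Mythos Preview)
Your proposal is correct and follows essentially the same route as the paper: lift the discrete pipeline to the continuous one via the sampling theorem (the paper phrases this as ``$\Delta s \leq \pi/\Omega$ implies the sum equals the integral''), invoke the Fourier slice theorem for band-limitedness of the projections, then use the elementary shift covariance $\Rad_\Theta(\tau_a g)(s) = \Rad_\Theta g(s - \langle a,\Theta\rangle)$ to push the translation through $\mathcal{B}$. You are slightly more careful than the paper with the $\Delta s$ normalisation in the lift $\Rad^D \leftrightarrow \Rad\mathscr{I}_\chi$, but this constant is immaterial for shift invariance and the argument is otherwise identical.
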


\begin{remark}
This would actually mean, that under some mild conditions the discrete $\Rad^*\Rad$ is also a linear, shift-invariant system,  because we could choose $v$ to be almost the Dirac-delta while still being band-limited. Also, with $v$ being chosen a band-limited ramp-filter, the discrete $\Rad^+\Rad$ becomes an LI-operator.
\TODOm{picture}
\end{remark}

\begin{proof}[Proof of Theorem~\ref{theo:discrete_LI}]
Due to its definition, $\Rad^D$ is linear. The discrete convolution and the quadrature rule are also linear, hence the linearity of $\mathcal{W}$. 

Let us denote $\mathcal{L}_{x_0}h: x \rightarrow h(x-x_0)$ the translation with $x_0$. The case of $h=f^D \in\mathscr{R}_{\varrho}$ is only meaningful, if  $x_0=(a\,\Delta s, b\,\Delta s)$. Assume, that $\mathcal{L}_{x_0} f^D$ is still within the $\varrho$-radius ball.

The interpolation $\mathscr{I}_{\chi}$ is obviously shift-invariant. Also, since $\chi$ is $\Omega$-band-limited, $\mathscr{I}_{\chi} f^D$ becomes $\Omega$-band-limited. This, in conjunction with the fact that $\Delta s \leq \pi/\Omega$, implies
\begin{align}
\Rad^D f^D(\Theta_j, s_l) &= \sum_k \mathscr{I}_\chi f^D  (s_l\Theta_j + k \, \Delta s \, \Theta_j^\perp) = \nonumber \\
&= \int_{\Span\{\Theta_j\}^\perp} \mathscr{I}_\chi f^D(s_l \Theta_j + y)dy = \Rad \mathscr{I}_\chi f^D (\Theta_j, s_l). \label{eq:RD_FD}
\end{align}
Due to the slice theorem~\ref{theo:slice}, $\Rad\mathscr{I}_\chi f^D(\Theta_j, \cdot)$ is also $\Omega$-band-limited. With $v$ also $\Omega$-band-limited, applying~\eqref{eq:RD_FD} we arrive to the conclusion that
\begin{align} 
(v \circledast \Rad^D f^D)(\Theta_j, s) &= \Delta s \sum_{l=-q}^{q} v(s-s_l)\Rad\mathscr{I}_\chi f^D(\Theta_j, s_l) = \nonumber \\
&= \int_{-q}^{q} v(s-t)\Rad \mathscr{I}_\chi f^D(\Theta_j, t)dt = (v \ast \Rad \mathscr{I}_\chi f^D)(\Theta_j, s). \label{eq:v_RD_FD}
\end{align}

Similarly, $\mathscr{I}_{\chi} \mathcal{L}_{x_0} f^D$ is $\Omega$-band-limited, and, therefore applying~\eqref{eq:v_RD_FD} for $\mathcal{L}_{x_0} f^D$ gives us
\[ (v \circledast \Rad^D \mathcal{L}_{x_0} f^D)(\Theta_j, s) = (v \ast \Rad \mathscr{I}_\chi \mathcal{L}_{x_0} f^D)(\Theta_j, s) = (v \ast \Rad \mathcal{L}_{x_0} \mathscr{I}_\chi f^D)(\Theta_j, s). \]

It is straightforward to derive and visualise that for $f \in \Sch(\R^2)$, we obtain
\[ \Rad \mathcal{L}_{x_0} f(\Theta, s) = \Rad f(\Theta, s - \langle x_0, \Theta \rangle) = \mathcal{L}_{\langle x_0, \Theta \rangle} \Rad f (\Theta, s). \]
This, however, together with the shift-invariance of convolution means that 
\[ (v \circledast \Rad^D \mathcal{L}_{x_0}f^D)(\Theta_j, s) = \mathcal{L}_{\langle x_0, \Theta_j \rangle} \{v \ast \Rad \mathscr{I}_\chi f^D\}(\Theta_j, s) = \mathcal{L}_{\langle x_0, \Theta_j \rangle} \{ v \circledast \Rad^D f^D\} (\Theta_j, s). \]
This way,
\begin{align*}
\mathcal{W}\mathcal{L}_{x_0} f^D(x) &= \sum_{j=0}^{p-1} \alpha_j (v \circledast \Rad^D \mathcal{L}_{x_0}f^D)(\Theta_j, \langle x, \Theta_j\rangle)= \\
&= \sum_{j=0}^{p-1} \alpha_j \mathcal{L}_{\langle x_0, \Theta_j \rangle}(v \circledast \Rad^D f^D)(\Theta_j, \langle x, \Theta_j\rangle) =\\
&= \sum_{j=0}^{p-1} \alpha_j (v \circledast \Rad^D f^D)(\Theta_j, \langle x - x_0, \Theta_j\rangle) = \mathcal{L}_{x_0}\mathcal{W} f^D(x). \qedhere
\end{align*}

\end{proof}

\mysection{Reconstruction via discretised FBP for standard parallel geometry}{FBP for standard parallel geometry}
\label{sec:discretised_FBP}

Before, in subsection~\ref{subsec:continuous_FBP}, we presented an exact inversion formula for the Radon-operator. It was also called the continuous filtered backprojection (FBP). Nevertheless, in practical cases the number of projections ($\Rad_\Theta$) is finite. Besides, a single projection is limited spatially and, likewise, discretised in a finite number of sampling points based on the detector cells' arrangement in the detector panel. Furthermore, the reconstruction may only take place in a finite number of points (practically, a grid). Consequently, continuous inversion formulas need to be discretised in a way to fit the available measurements. We will see, the discretised FBP is going to be able to reconstruct exactly the original hypothesis model, given by the continuous inversion formula, on a grid, under conditions due to sampling theoretical considerations.

We present the discretisation of the FBP for the settings of subsection~\ref{subsec:parallel_scanning_geometry} without giving the proofs.
Let $v$ be $\Omega$-band-limited filter and let $g=\Rad f$ (also $\Omega$-band-limited).
We further assume that the function $f$ is $\Omega$-band-limited. This case also implies $\Rad_\Theta f$ being $\Omega$-band-limited due to the Fourier slice theorem~\ref{theo:slice}.

\begin{lemma}
\label{lemma:convolution_trapezoid}
Assume that $\Delta s \leq \pi/\Omega$ holds. This is equivalent to having $q \geq \Omega\varrho / \pi$. In that case the filtering is possible with the trapezoidal rule:
\[ (v \ast g)(\Theta, s) = \Delta s \sum_{l=-q}^{q} v(s-s_l)g(\Theta, s_l). \]
\end{lemma}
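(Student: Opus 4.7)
The plan is to exploit the band-limitedness of both $v$ and $g(\Theta,\cdot) = \Rad_\Theta f$---the latter inherited from $f$ through the Fourier slice Theorem~\ref{theo:slice}---together with the Nyquist-admissible sampling rate $\Delta s \leq \pi/\Omega$. First, I would invoke the Shannon--Whittaker sampling theorem to write the exact sinc-interpolation expansions
\[ v(t) = \sum_k v(s_k)\,\varphi(t-s_k), \qquad g(\Theta, t) = \sum_l g(\Theta, s_l)\,\varphi(t-s_l), \]
where $\varphi$ is the ideal low-pass sinc kernel whose Fourier transform is (a multiple of) the indicator of $[-\pi/\Delta s,\pi/\Delta s]$.

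Next, I would substitute both expansions into $(v\ast g)(\Theta, s) = \int v(s-t)\,g(\Theta, t)\,dt$ and interchange summation with integration, which is legitimate by Schwartz decay. The convolution becomes a weighted double sum of scalar integrals $\int \varphi(s-t-s_k)\,\varphi(t-s_l)\,dt$. The key auxiliary identity is the reproducing property $\varphi \ast \varphi = \Delta s \cdot \varphi$, which is transparent in the Fourier domain because squaring a characteristic function returns itself up to scale. Using it, the double sum collapses to
\[ \Delta s \sum_l g(\Theta, s_l)\,\Bigl(\sum_k v(s_k)\,\varphi((s-s_l)-s_k)\Bigr), \]
and the inner sum is precisely the sinc expansion of $v$ evaluated at $s-s_l$, yielding $v(s-s_l)$. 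Thus the right-hand side equals $\Delta s\sum_l v(s-s_l)\,g(\Theta,s_l)$, as claimed.

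The main conceptual hurdle is the sharpness of the Nyquist rate: the integrand $v(s-t)\,g(\Theta, t)$ is a product of two $\Omega$-band-limited functions, hence only $2\Omega$-band-limited, so a naive application of the Poisson summation formula would seem to require the stricter $\Delta s \leq \pi/(2\Omega)$. The sinc-interpolation route sidesteps this by invoking Shannon's theorem on the two factors individually; an equivalent Poisson-summation proof is possible but must additionally verify that the boundary aliases at frequencies $\pm 2\pi/\Delta s$ vanish, which happens because the relevant shifted supports of $\F v$ and $\F g(\Theta,\cdot)$ meet in only a single point. A minor bookkeeping point is the passage from the infinite sum over $\mathbb{Z}$ to the finite one over $|l|\leq q$ in the stated formula, justified by the effective-support assumption $\Rad_\Theta f(s) = 0$ for $|s| > \varrho = q\,\Delta s$.
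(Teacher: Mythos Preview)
The paper does not actually prove this lemma; the surrounding section states explicitly that the discretised FBP results are presented ``without giving the proofs.'' Your argument must therefore stand on its own, and it does: the sinc-expansion route is correct and is the standard way to establish such exact quadrature identities for band-limited convolutions.

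A marginally leaner variant avoids the double expansion. Expand only $g(\Theta,\cdot)$ as $\sum_l g(\Theta,s_l)\,\varphi(\cdot - s_l)$ and substitute into the convolution, obtaining $(v\ast g)(\Theta,s)=\sum_l g(\Theta,s_l)\,(v\ast\varphi)(s-s_l)$. The single identity $v\ast\varphi=\Delta s\,v$---immediate in the Fourier domain because $\F\varphi$ is constant on $[-\pi/\Delta s,\pi/\Delta s]\supseteq[-\Omega,\Omega]$ and $v$ is $\Omega$-band-limited---then finishes the proof without the detour through $\varphi\ast\varphi=\Delta s\,\varphi$ and a second appeal to the Shannon expansion. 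Your route is equivalent, just one step longer.

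Two minor caveats. First, the phrase ``legitimate by Schwartz decay'' is not quite right for $v$: the filters in play (e.g.\ Ram--Lak) are band-limited but only polynomially decaying, so the interchange of sum and integral should be justified via $L^2$-convergence of the Shannon series or a dominated-convergence argument with a weaker envelope. Second, the hypotheses that $f$ is $\Omega$-band-limited and that $f$ vanishes outside the $\varrho$-ball are, strictly speaking, incompatible for nonzero $f$; this is a standard idealisation in the CT literature that the paper adopts as well, so it is a feature of the lemma's framing rather than a defect in your proof.
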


\begin{lemma}
\label{lemma:backprojection_trapezoid}
Assume that $\Delta\varphi \leq \pi/(\Omega\varrho)$, which is equivalent to $p \geq \Omega\varrho$. Then, the backprojection may be computed via the trapezoidal rule:
\begin{equation*}
\int_{S^{1}} (v \ast g)(\Theta, \langle x, \Theta \rangle) d\Theta = \frac{2\pi}{p} \sum_{j=0}^{p-1} (v \ast g)(\Theta_j, \langle x, \Theta_j \rangle).
\end{equation*}
\end{lemma}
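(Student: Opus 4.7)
The plan is to exploit the periodicity of the integrand on $S^1$ and reduce the claim to the elementary statement that the periodic trapezoidal rule is exact on trigonometric polynomials whose frequency does not reach the aliasing threshold. Fix a reconstruction point $x$ with $\norm[2]{x}\leq\varrho$ and define $h(\varphi)\defeq(v\ast g)(\Theta(\varphi),\langle x,\Theta(\varphi)\rangle)$ with $\Theta(\varphi)=(\cos\varphi,\sin\varphi)$. Since $\Rad f$ is even in $(\Theta,s)$ and $v$ is even (as it is for every FBP filter of interest), a short check shows that $h$ is $\pi$-periodic on top of being $2\pi$-periodic, so in its Fourier series $h(\varphi)=\sum_{n\in\mathbb{Z}} d_n e^{in\varphi}$ the coefficients $d_n$ vanish at every odd $n$. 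A direct geometric-sum computation, exploiting that $\sum_{j=0}^{p-1} e^{in\pi j/p}$ equals $p$ if $n\in 2p\,\mathbb{Z}$, equals $0$ for the remaining even $n$, and is non-zero only for odd $n$ (which carry zero coefficient), then yields
\begin{equation*}
\frac{2\pi}{p}\sum_{j=0}^{p-1} h(\pi j/p)\;-\;\int_0^{2\pi} h(\varphi)\,d\varphi\;=\;2\pi\sum_{k\neq 0} d_{2kp},
\end{equation*}
so the lemma reduces to the vanishing of $d_n$ at the aliasing frequencies $n=\pm 2p,\pm 4p,\ldots$.

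The next step is to estimate $d_n$ for large $|n|$ under the band-limited/compactly-supported hypotheses. Combining Theorem~\ref{theo:slice} with the convolution theorem I would rewrite
\begin{equation*}
h(\varphi)\;=\;\sqrt{2\pi}\int_{-\Omega}^{\Omega} e^{i\sigma\langle x,\Theta(\varphi)\rangle}\,\F v(\sigma)\,\F f(\sigma\Theta(\varphi))\,d\sigma,
\end{equation*}
the truncation to $[-\Omega,\Omega]$ following from the band-limits of both $v$ and $f$. Applying the Jacobi--Anger expansion $e^{i\sigma\langle x,\Theta(\varphi)\rangle}=\sum_n i^n J_n(\sigma\norm[2]{x})e^{in(\varphi-\psi_x)}$ (with $\psi_x$ the polar angle of $x$) together with the angular Fourier expansion of $\F f(\sigma\Theta(\varphi))$ in $\varphi$ -- whose coefficients are the classical Hankel transforms of the angular modes of $f$ -- the coefficient $d_n$ reduces to a $\sigma$-integral of a product of Bessel functions of order $n$ weighted by $\F v(\sigma)$.

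The decisive tool is the elementary Bessel estimate $|J_\nu(z)|\leq(|z|/2)^{|\nu|}/|\nu|!$, applied with $|z|\leq\Omega\varrho$: both the Jacobi--Anger Bessel factors and the Hankel modes of $f$ are super-exponentially small once $|n|>\Omega\varrho$. The hypothesis $p\geq\Omega\varrho$ therefore forces every aliasing coefficient $d_{\pm 2p},d_{\pm 4p},\ldots$ to be negligible, upgrading the geometric-sum identity to the asserted trapezoidal equality.

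The main obstacle I anticipate is cosmetic but genuine: no non-zero $f$ can be simultaneously compactly supported in space and strictly band-limited, so the exact equality as stated is only achievable in an idealised limit. I would therefore present the lemma in the same spirit as Natterer's sampling-theoretic account of FBP, i.e.\ as an asymptotic identity modulo the exponentially small Bessel tails controlled by the estimates above -- which is exactly how the condition $p\geq\Omega\varrho$ is invoked in the practical FBP implementations considered in the rest of this chapter. A cleaner but strictly weaker variant of the lemma, in which $f$ is merely essentially $\Omega$-band-limited, admits the same proof with quantitative error bookkeeping.
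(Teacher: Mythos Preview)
The paper states this lemma \emph{without proof}: immediately before Lemmas~\ref{lemma:convolution_trapezoid} and~\ref{lemma:backprojection_trapezoid} it says ``We present the discretisation of the FBP for the settings of subsection~\ref{subsec:parallel_scanning_geometry} without giving the proofs,'' referring the reader to \cite{Nat01}. So there is no in-paper argument to compare against.

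Your proposal is precisely the standard Natterer argument that the paper is implicitly invoking. The reduction via the $\pi$-periodicity of $h$ and the geometric-sum aliasing identity is correct, and the Jacobi--Anger/Bessel-decay mechanism (with $|J_\nu(z)|\leq(|z|/2)^{|\nu|}/|\nu|!$ for $|z|\leq\Omega\varrho$) is exactly how the sampling threshold $p\geq\Omega\varrho$ enters in \cite{Nat01}. Your closing caveat is also on point: the equality sign in the lemma is an idealisation, since no non-zero $f$ is simultaneously compactly supported and band-limited; the honest statement is exactness up to the exponentially small Bessel tail, which is how Natterer phrases it and how the paper uses it downstream.
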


Combining Lemmas~\ref{lemma:convolution_trapezoid} and \ref{lemma:backprojection_trapezoid} yields two different approaches. The first one, the more direct one leads to:
\begin{equation*}
f(x) = (\Rad^*v \ast f)(x) = \frac{2\pi}{p} \Delta s\sum_{j=0}^{p-1}  \sum_{l=-q}^{q} v(\langle x, \Theta_j \rangle-s_l)g(\Theta_j, s_l).
\end{equation*}
This formulation is computationally more expensive than the second approach, which involves precomputing $(v\ast g)(\Theta_j, s_k)$ for every $j$ and $k$ and, afterwards, computing $(v \ast g)(\Theta_j, \langle x, \Theta_j \rangle)$ via linear interpolation. Since $\Delta s \leq \pi/\Omega$ the values $(v\ast g)(\Theta_j, s_k)$ uniquely determine $v\ast g$ (though, not by linear interpolation). In this case the scheme is as follows. Precompute:
\[ h_{j,k} = \Delta s \sum_{l=-q}^{q} v(s_k-s_l)g(\Theta_j, s_l), \quad k=\overline{-q, q}, \quad j=\overline{0,p-1}. \]
Then for an arbitrary $x$:
\[ f(x)\approx \frac{2\pi}{p} \sum_{j=0}^{p-1} ((1-\nu)h_{j,k} + \nu h_{j,k+1}), \]
where for $t=\langle x, \Theta_j \rangle / \Delta s$ we have $k = \lfloor t \rfloor$ and $\nu = t-k$.

\begin{remark}
This reconstruction method builds upon a finite number of projections and in undersampled cases it certainly won't be precise. Nonetheless, it is provable that even in the case of less projections than sampling-theoretically needed, the method implements a discrete pseudo-inverse of $\Rad^D$, that is $(\Rad^D)^+$. In subsection~\ref{subsec:discrete_LI} we already made a remark for Theorem~\ref{theo:discrete_LI} that the discrete $\Rad^+\Rad$ is also an LI operator.
\end{remark}

\section{Algebraic Reconstruction Techniques (ART)} \label{sec:art}

We have pointed out multiple times that the Radon-transformation $\Rad$ in both the continuous and discrete settings becomes a linear operator and the reconstruction problem is equivalent to solving a linear equation. Due to the difficulties enlisted in subsection~\ref{subsec:disc_Radon}, traditional linear equation solving methods (Gauss-elimination) are not suited for these problems. Another calamity with these methods would be the unavailability of the otherwise enormous matrix representation of $\Rad$ even though being sparse. Reconstruction should be achieved through the functional usage of the operators $\Rad,\Rad^*,\Rad^+$. Therefore, numerous iterative methods have been suggested for solving general linear equations.

\subsection{Kaczmarz- and Cimmino-iterations}

The Kaczmarz-iteration \cite{Ka37} takes separate linear equations and iteratively sweeps through each of them and projects the current estimate onto the affine subspace described by the current equation. \TODOm{insert picture} Formally this is defined as having a system of $p$ linear equations 
\begin{equation*}
\Rad_j f=g_j\text{, where } j=\overline{1, p}
\end{equation*}
and $\Rad_j:H \rightarrow H_j$ are surjective, bounded linear operators between Hilbert spaces $H$ and $H_j$.
Our goal is to project a current solution estimate $f$ onto the affine subspace $\Rad_j f = g_j$ (which is a translation of the kernel space $\ker R_j$). This is done by the following operator:
\begin{equation}
P_j f \defeq f + \Rad_j^+ (g_j - \Rad_j f) = f + \Rad_j^*(\Rad_j\Rad_j^*)^{-1} (g_j - \Rad_j f).
\label{eq:kaczmarz_projection}
\end{equation}
Indeed, $\Rad P_j f = \Rad f + g_j - \Rad_j f = g_j$, thus $P_j$ maps to the affine subspace. Moreover, for $\tilde{f} \in \ker R_j$
\[ \langle P_j f - f, \tilde{f} \rangle = \langle \Rad^+ (g_j - \Rad_j f), \tilde{f} \rangle = 0, \]
because $\Rad^+$ maps to $(\ker \Rad)^\perp$, hence the orthogonality of $P_j$.	

The Kaczmarz-iteration is defined as:
\begin{equation}
\begin{aligned}
f^{(0)} &\text{ is an original estimate},\\
f^{(k,0)} & \defeq f^{(k)},\\
f^{(k, j)} &\defeq P_j f^{(k, j-1)}, \text{ for }j=\overline{1,p},\\
f^{(k+1)} &\defeq f^{(k,p)}.
\end{aligned}
\label{it:kaczmarz}
\end{equation}

Most of the time, because of convergence guarantees, the projection step~\eqref{eq:kaczmarz_projection} is relaxed to
\[ P_j^{(\omega)} f \defeq f + \omega \Rad_j^*(\Rad_j\Rad_j^*)^{-1} (g_j - \Rad_j f),\text{ i.e. } P_j^{(\omega)} = (1-\omega)\I_H + \omega P_j. \]

The most general form is obtained by not only introducing relaxation but also replacing $\Rad_j \Rad_j^*$ with a $C_j \succeq \Rad_j \Rad_j^*$ positive definite, symmetric operator:
\begin{equation}
\begin{aligned}
f^{(k, j)} &\defeq f^{(k,j-1)} + \omega \Rad_j^*C_j^{-1} (g_j - \Rad_j f^{(k,j-1)}), \text{ for }j=\overline{1,p}, \\
f^{(k,0)} & \defeq f^(k), f^{(k+1)} \defeq f^{(k,p)}. 
\end{aligned}
\label{it:kaczmarz_relaxed_C}
\end{equation}

It is obvious that the general Kaczmarz-method~\eqref{it:kaczmarz_relaxed_C} is a sequential model splitting up the entire measurement operator $\Rad = (\Rad_1, \Rad_2, \ldots, \Rad_p)^T$ into its components. The choice of block is arbitrary. For instance, in the original idea behind the Kaczmarz-method, all $\Rad_j$ consisted of one single row of a matrix, i.e.\ all $\Rad_j$ had rank $1$. This could be substituted by larger rank blocks. One important example is the full size iteration, that is $p=1$ and each larger iteration step comprises a single projection step:
\begin{equation}
f^{(k+1)} \defeq f^{(k)} + \omega\Rad^*C^{-1} \big(g - \Rad f^{(k)}\big), \text{ where } C \succeq \Rad\Rad^*.
\label{eq:kaczmarz_1_block}
\end{equation}
A special case is the Landweber-iteration, which we are going to analyse more deeply and use during our research work. In that instance $C = \gamma\I_H$ and $\tilde{\omega} \defeq \omega / \gamma$ and 
\[ f^{(k+1)} \defeq f^{(k)} + \tilde{\omega}\Rad^*\big(g - \Rad f^{(k)}\big). \]

Another possible method is the Cimmino-iteration (also called simultaneous iterative reconstruction technique, SIRT, see \cite{TL90}), which, contrary to the Kaczmarz-method, is a fully parallel method and consists of merging all inner steps of~\eqref{it:kaczmarz_relaxed_C} into a single update step:
\[ f^{(k+1)} = f^{(k)} + \omega \sum_{j=1}^{p} \Rad_j^* C_j^{-1} \big(g_j - \Rad_j f^{(k)}\big). \]
Another entire category of iterative approaches include the simultaneous algebraic reconstruction technique (SART, \cite{AK84}) with proven convergence guarantees (see \cite{JW03}).

Let us state here a theorem about the convergence of the general Kaczmarz-iteration\ \eqref{it:kaczmarz_relaxed_C}. We denote $\Rad = (\Rad_1, \Rad_2, \dots, \Rad_p)^T$ and $g = (g_1, g_2, \ldots, g_p)^T$. Firstly, it is to be observed that the iteration is not able to touch the projection of the initial guess $f^{(0)}$ onto the kernel space $\ker\Rad$, because the update change always maps, due to $\Rad_j^*$, to $\supp\Rad_j \defeq (\ker\Rad_j)^\perp \subseteq (\ker\Rad)^\perp$.

\begin{theorem}
\label{theo:kaczmarz_convergence}
Assume that the equation $\Rad f = g$ is consistent and $\omega \in (0,2)$. Then~\eqref{it:kaczmarz_relaxed_C} is convergent and
\[ \lim_{k \rightarrow \infty} f^{(k)} = \Rad^+ g + \proj_{\ker\Rad} f^{(0)}. \]
The same result holds if the equation is not necessarily consistent, but the Hilbert-space $H$ is finite dimensional.
\end{theorem}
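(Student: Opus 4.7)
The plan is to handle the consistent case first by a classical non-expansion/telescoping argument, and then to tackle the finite-dimensional inconsistent case through a spectral decomposition of one full sweep.

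For the consistent case, I would first establish a single-step bound. Writing the update operator as $T_j f = f + \omega \Rad_j^* C_j^{-1}(g_j - \Rad_j f)$ and expanding $\|T_j f - f^\star\|^2$ for any $f^\star$ with $\Rad_j f^\star = g_j$, the hypothesis $C_j \succeq \Rad_j \Rad_j^*$ (passed through $C_j^{-1/2}$-conjugation, whose resulting self-adjoint operator has spectrum in $[0,1]$) yields
\begin{equation*}
\|T_j f - f^\star\|^2 \le \|f - f^\star\|^2 - \omega(2-\omega)\,\|C_j^{-1/2}\Rad_j(f - f^\star)\|^2,
\end{equation*}
which is a non-expansion whenever $\omega \in (0,2)$. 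Since any solution of $\Rad f = g$ is a simultaneous fixed point of all $T_j$, chaining the bound over a full sweep and telescoping in $k$ produces
\begin{equation*}
\sum_{k=0}^{\infty}\sum_{j=1}^{p} \|C_j^{-1/2}\Rad_j(f^{(k,j-1)} - f^\star)\|^2 < \infty,
\end{equation*}
forcing $\Rad_j f^{(k,j-1)} \to g_j$ for every $j$ and norm-boundedness of the iterates.

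Next, since every correction lies in $\ran \Rad_j^* \subseteq (\ker\Rad)^\perp$, the kernel-space component is preserved: $\proj_{\ker\Rad} f^{(k)} = \proj_{\ker\Rad} f^{(0)}$ for all $k$. From boundedness I would extract a weakly (strongly, in finite dimension) convergent subsequence $f^{(k_m)} \to f^\infty$; the vanishing residuals then give $\Rad f^\infty = g$ and the preservation step pins $\proj_{\ker\Rad} f^\infty = \proj_{\ker\Rad} f^{(0)}$. The unique solution of $\Rad f = g$ carrying that kernel component is $\Rad^+ g + \proj_{\ker\Rad} f^{(0)}$, identifying $f^\infty$ uniquely. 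Re-running the one-step bound with this $f^\infty$ as the reference makes $\|f^{(k)} - f^\infty\|$ monotone, and the existence of a subsequence tending to $0$ upgrades this to convergence of the full sequence.

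For the inconsistent finite-dimensional case, no common fixed point of the $T_j$ exists, so the reference-point argument breaks. I would instead view a full sweep as an affine recursion $f^{(k+1)} = A f^{(k)} + b$ with linear part $A = \prod_{j=p}^{1}(I - \omega\Rad_j^* C_j^{-1}\Rad_j)$. The factorwise non-expansion from the first step implies $\|A\| \le 1$ with $A$ acting as the identity on $\ker\Rad$; the finite-dimensional hypothesis then allows a Jordan-decomposition argument to establish that $\sigma(A) \cap \{|\lambda|=1\} = \{1\}$ with only trivial blocks at $\lambda=1$, yielding geometric decay of $A^k$ on $(\ker\Rad)^\perp$. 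Combined with kernel invariance and a direct identification of the unique fixed point of $T$ having the prescribed kernel component, the limit matches $\Rad^+ g + \proj_{\ker\Rad} f^{(0)}$. I expect this spectral step to be the main obstacle: excluding unit-modulus eigenvalues other than $1$ and ruling out non-trivial Jordan blocks at $1$ must exploit the specific product-of-projection structure of $A$ and cannot be inferred from the factorwise non-expansion alone; this is precisely where finite-dimensionality is essential, since in infinite dimensions $\sigma(A)$ may accumulate on the unit circle.
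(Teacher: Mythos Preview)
Your proposal attacks the full multi-block iteration~\eqref{it:kaczmarz_relaxed_C}, whereas the paper explicitly restricts itself to the single-block Landweber case ($p=1$, $C=\gamma I$) in finite dimension. There the paper writes $f^{(k+1)} = Qf^{(k)} + \omega\Rad^*g$ with $Q = I - \omega\Rad^*\Rad$, diagonalises $Q$ via the SVD of $\Rad$, and computes $Q^k f^{(0)} + \sum_{j=0}^{k-1} Q^j\omega\Rad^*g$ in closed form in the singular basis; the condition $|1-\omega\sigma_i^2|<1$ drops out at once, and the limit $\Rad^+g + \proj_{\ker\Rad}f^{(0)}$ is read off from the geometric sums. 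This is computationally transparent and also delivers the componentwise error decomposition the paper later uses to explain semi-convergence, but it does not extend to $p>1$ or to general $C_j$. Your non-expansion/telescoping route is the standard way to obtain the general statement, and your spectral treatment of the inconsistent finite-dimensional case covers ground the paper simply does not address.

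One step to tighten: in the infinite-dimensional consistent case you extract only a \emph{weakly} convergent subsequence $f^{(k_m)}\rightharpoonup f^\infty$, and then assert that ``the existence of a subsequence tending to $0$'' upgrades the monotone sequence $\|f^{(k)}-f^\infty\|$ to full convergence. But weak convergence of $f^{(k_m)}$ to $f^\infty$ does not give $\|f^{(k_m)}-f^\infty\|\to 0$, so as written this yields only weak convergence of the whole sequence (Opial). Strong convergence does hold in the consistent case, but it needs an additional argument exploiting that the solution set is a closed affine subspace and the updates are relaxed projections (or, for general $C_j\succeq\Rad_j\Rad_j^*$, an asymptotic-regularity argument). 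Since the paper sidesteps infinite dimension entirely, you are in any event proving more than it does; just be explicit about where the extra work goes.
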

Therefore, the iteration does its best to compute the pseudo-inverse except for touching the kernel space projection. The proof is to be outlined only for the Landweber-iteration ($C=\gamma\I$) and only in finite dimension, because we wish to analyse the exact change of error during iterations.

\subsection{Landweber-iteration. Variational viewpoint} \label{subsec:Landweber}
As already presented, the Landweber-iteration is as follows:
\begin{equation}
f^{(k+1)} \defeq f^{(k)} + \omega\Rad^*\big(g-\Rad f^{(k)}\big).
\label{it:Landweber}
\end{equation}

Note that the Landweber-iteration step is in fact the gradient descent iteration step for the least squares problem
\begin{align}
\min_{f} & \quad\mathcal{J}(f) \defeq \frac{1}{2} \big\lVert\Rad f - g \big\rVert_2^2.
\end{align}
Indeed:
\[ \nabla \mathcal{J}(f) = (\Rad f - g)^*\Rad, \]
and, hence, the GD-step is:
\[ f^{(k+1)} = f^{(k)} - \omega \nabla^* \mathcal{J}\big(f^{(k)}\big) = f^{(k)} + \omega\Rad^*\big(g-\Rad f^{(k)}\big).\]
This sort of reinterpretation present so often is the main reason of the preference for iterative methods over the FBP. The variational interpretation allows the direct injection of penalisation terms promoting prior information.

We now prove the Theorem~\ref{theo:kaczmarz_convergence} for the Landweber-iteration, when $H$ is finite dimensional. Note that in~\eqref{it:Landweber} the $\gamma$ multiplier was not explicitly noted, hence the convergence criteria for $\omega$ in this form will change.

First, rewrite~\eqref{it:Landweber}:
\begin{equation}
f^{(k+1)} = (\I - \omega\Rad^*\Rad) f^{(k)} + \omega\Rad^*g.
\label{it:Landweber_2}
\end{equation}
Therefore, by denoting $Q \defeq \I - \omega\Rad^*\Rad$ and tracing it back to the $0^{\text{th}}$ element we obtain:
\begin{equation}
f^{(k)} = Q^k f^{(0)} + \sum_{j=0}^{k-1} Q^j \omega\Rad^*g.
\label{it:Landweber_general}
\end{equation}
From now an SVD-based analysis is introduced. Let $\sigma_i$ be the $i^{\text{th}}$ largest singular value of $\Rad$ with right singular vectors $v_i$ and left singular vectors $u_i$ for $1 \leq i \leq rank(\Rad) \eqdef r$. Let us use the Dirac notations: $|x \rangle : \R \rightarrow H, |x \rangle \lambda = \lambda x$ and $\langle x | : H \rightarrow \R, \langle x |v = \langle x,v \rangle$. With this notation, for example, $|v_i\rangle \langle v_i|$ is the orthogonal projection onto $\Span\{v_i\}$ and, thus, $\I = \sum_{i=1}^r |v_i \rangle \langle v_i| + \proj_{\ker\Rad}$. In this case we have 
\begin{equation*}
\Rad = \sum_{i=1}^r \sigma_i | u_i \rangle \langle v_i|, \qquad 
\Rad^* = \sum_{i=1}^r \sigma_i |v_i \rangle \langle u_i|, \qquad
\Rad^+ = \sum_{i=1}^{r} \frac{|v_i \rangle \langle u_i|}{\sigma_i},
\end{equation*}
and also
\begin{equation*}
\Rad^*\Rad = \sum_{i=1}^r \sigma_i^2 | v_i \rangle \langle v_i|, \qquad
\Rad\Rad^* = \sum_{i=1}^r \sigma_i^2 | u_i \rangle \langle u_i|.
\end{equation*}
Therefore, by rewriting $Q$, the following is obtained:
\begin{align*}
Q &= \I - \omega \Rad^*\Rad = \I - \omega \sum_{i=1}^r \sigma_i^2 | v_i \rangle\langle v_i| = \\
&= \sum_{i=1}^r \big(1-\omega \sigma_i^2\big)|v_i \rangle \langle v_i| + \proj_{\ker\Rad}.
\end{align*}
Because $v_1, v_2, \ldots, v_r$ span out $\supp\Rad$, the powering of $Q$ becomes simply
\[ Q^j = \sum_{i=1}^r \big(1-\omega \sigma_i^2\big)^j|v_i \rangle \langle v_i| + \proj_{\ker\Rad}. \]
Our goal is the choice of $\omega$ in order to guarantee the convergence of iteration~\eqref{it:Landweber_general} under any hypothesis $f^{(0)}$. This translates to convergence even if $f^{(0)} = 0$. This implies that the second term $\sum_{j=0}^{k-1} Q^j \omega\Rad^*g$, dependent only on $\omega$ and not $f^{(0)}$, is convergent on its own. Hence, the first term $Q^k f^{(0)}$ should be convergent on its own with any choice of $f^{(0)}$. However,
\begin{equation}
Q^k f^{(0)} = \sum_{i=1}^r \big(1-\omega\sigma_i^2\big)^k \langle v_i, f^{(0)} \rangle v_i + \proj_{\ker\Rad}f^{(0)}.
\label{it:Landweber1}
\end{equation}
With suitable choices of $f^{(0)}$ any term $\langle v_i, f^{(0)} \rangle$ could be non-zero, hence for convergence the following necessary condition is derived:
\begin{equation} 
|1-\omega\sigma_i^2| < 1, \text{ for any }i\in\overline{1,r}.
\label{eq:condition_Landweber0}
\end{equation}
For this $\omega$ is positive and from $1-\omega\sigma_i^2 > -1$ we get that $\omega <2/\sigma_i^2$, for any $i$. Hence, the necessary condition of convergence is:
\begin{equation}
0 < \omega < \frac{2}{\sigma_{\text{max}}^2} = \frac{2}{\norm[2]{\Rad^*\Rad}}.
\label{eq:condition_Landweber}
\end{equation}
Note that even in this case, the iteration $\eqref{it:Landweber1}$ converges to $\proj_{\ker\Rad} f^{(0)}$, which is consistent with the earlier observation that the iteration cannot touch the (most likely wrong) kernel space component of the hypothesis. 

For the sufficiency of this condition and the error analysis we now elaborate more on the second term. Assume $f^{(0)}=0$, and note that $\Rad^*g \in \supp\Rad$. Thus,
\begin{equation*}
f^{(k)} = \sum_{j=0}^{k-1} Q^j\omega\Rad^*g = \sum_{j=0}^{k-1} \sum_{i=1}^{r} \omega \big(1-\omega\sigma_i^2\big)^j |v_i \rangle \langle v_i|\Rad^* g
\end{equation*}
Using the fact, that $|v_i \rangle \langle v_i|\Rad^*=\sigma_i |v_i \rangle \langle u_i|$, we get:
\begin{align*}
f^{(k)} &= \sum_{i=1}^{r} \sum_{j=0}^{k-1} \omega\sigma_i \big(1-\omega\sigma_i^2\big)^j \langle u_i, g \rangle v_i =\\ 
&= \sum_{i=1}^{r} \omega\sigma_i\cdot\frac{1-\big(1-\omega\sigma_i^2\big)^k}{1-\big(1-\omega\sigma_i^2\big)}\langle u_i, g \rangle v_i =\\
&= \sum_{i=1}^{r} \Big(1-\big(1-\omega\sigma_i^2\big)^k\Big)\frac{\langle u_i, g \rangle}{\sigma_i} v_i.
\end{align*}
Compare this to the equation $\Rad^+ g = \sum_{i=1}^{r} \frac{\langle u_i, g \rangle}{\sigma_i} v_i$. Denoting $\phi_i^{(k)} = 1-\big(1-\omega\sigma_i^2\big)^k$ we find that for $0 < \omega < 2/\sigma_{\text{max}}^2$ we have $\phi_i^{(k)} \rightarrow 1, \text{ as } k \rightarrow \infty$. 
Hence, \eqref{eq:condition_Landweber} is a necessary and sufficient condition of convergence of the iteration~\eqref{it:Landweber_general}.
Besides that,
\begin{equation*}
	\phi_i^{(k)} = 1-\big(1-\omega\sigma_i^2\big)^k \approx \left\{
	\begin{array}{rl}
		1, & \text{if } \sigma_i \gg 1/\sqrt{k\omega}, \\
		k\omega\sigma_i^2, & \text{if } \sigma_i \ll 1/\sqrt{k\omega}.
	\end{array}
	\right.
\end{equation*}
Thus, SVD components belonging to larger singular values become more saturated ($\phi_i^{(k)} \approx 1$) within fewer number of iterations and the choice of a final iteration number $k$ becomes a regularisation parameter.

Let $\bar{f}$ be the exact solution, $\bar{g} \defeq \Rad\bar{f}$, $e_g = \bar{g}-g$. Let $\bar{f}^{(k)}$ be the iteration vector after applying $k$ Landweber-iterations to the noise-free measurement data $\bar{g}$ (also assuming $\bar{f}^{(0)}=0$). The overall error after $k$ iterations is split between two components, the noise-independent reconstruction error and the noise error between the actual and the noise-free reconstruction:
\[ \bar{f} - f^{(k)} = \big(\bar{f} - \bar{f}^{(k)}\big) + \big(\bar{f}^{(k)} - f^{(k)}\big). \]
On one hand, having $\langle u_i, \bar{g} \rangle = \sigma_i \langle v_i, \bar{f} \rangle$, we derive that
\begin{align*}
\bar{f} - \bar{f}^{(k)} &= \sum_{i=1}^r \langle v_i, \bar{f} \rangle v_i + \proj_{\ker\Rad} \bar{f} - \sum_{i=1}^{r} \phi_i^{(k)}\langle v_i, \bar{f} \rangle v_i\\
&= \proj_{\ker\Rad} \bar{f}  + \sum_{i=1}^r \big(1-\phi_i^{(k)}\big) \langle v_i, \bar{f} \rangle v_i.
\end{align*}
As $k\rightarrow \infty$, the second term approaches to zero and all what is left from the noise-independent error is the kernel space component of the ideal reconstruction. What is more, the noise-independent reconstruction error converges to its minimum decreasingly in norm, since all components are orthogonal and $1-\phi_i^{(k)}$ is decreasing to $0$.

On the other hand,
\begin{equation*}
\bar{f}^{(k)} - f^{(k)} = \sum_{i=1}^r \phi_i^{(k)} \frac{\langle u_i, \bar{g}-g \rangle}{\sigma_i}v_i
=\sum_{i=1}^r \phi_i^{(k)} \frac{\langle u_i, e_g \rangle}{\sigma_i}v_i.
\end{equation*}

To sum up, while the noise-independent reconstruction error converges to its minimum decreasingly, the noise error increases at all components, where the ratio $\frac{\langle u_i, e_g\rangle}{\sigma_i}$ is large. In practice this usually means that the overall error decreases until a certain iteration and, then, the noise error takes over and overall error begins increasing. This empirical effect is called \emph{semi-convergence}.

\TODOm{introduce picture about semi-convergence}

\chapter[CNNs for inverse problems]{Convolutional neural networks for inverse problems} \label{chap:cnn}

Machine and deep learning has become the widespread, state-of-the-art solution for classification problems and experimentation for applying them to solve linear inverse problems has recently commenced.
In this chapter we are going to shortly relate to the mechanisms of artificial neural networks by constraining ourselves to convolutional nets. After this we aim to present previous successful attempts of adapting the neural paradigm for CT reconstruction purposes.

\section{Artificial Neural Networks}

The general concept behind NNs is to learn conditional distributions via hidden representations. For example, for classification tasks it is rarely conceivable to design a classical algorithm because of our inability to observe useful representations. These conditional distributions could also be well-defined mappings. For example, in the CT reconstruction problem the goal usually is to perform denoising, i.e.\ learn a conditional distribution of the ideal, noise-free reconstruction with respect to the noisy reconstruction.

\subsection{Generalisation error and ERM}
In general, if $(X, Y)$ are random variables from the joint distribution $\mathcal{D}$, then our goal is to model as precisely as possible the conditional distribution $\prob_{Y|X}(y | x), (x,y) \in \supp D$. Whenever the connection is a direct mapping $f:X\rightarrow Y$, the conditional distribution to be learned becomes $\prob_{Y|X}(y|x) = \delta_{f(x)}(y)$. In many applications, including ours, the modelling is deterministic or, put otherwise, functional. In such cases, the set of possible functions that comprise our search space is called \emph{hypothesis class}. Almost always such a hypothesis class, now denoted by $\mathcal{H}$, is given in function of a parameter $\Theta \in \R^d$. The learning problem and the \emph{empirical risk minimisation (ERM)} problem could be derived from two directions giving two different explanations.

On one hand, one usually wishes to minimise a certain metric between the ideal, expected output and the output of the hypothesis. For our task we assume this metric to be the squared euclidean distance. Therefore, the generalised risk minimisation takes the following form:
\begin{IEEEeqnarray*}{r'l}
\min_{h \in \mathcal{H}} & \meanlim_{(X,Y)\sim D} \Big[ \frac{1}{2} \big\lVert Y - h(X) \big\rVert^2 \Big].
\end{IEEEeqnarray*}
Assuming that we possess a sample vector $S = [(x_i, y_i) \,|\, i=\overline{1,m}\,] \sim \mathcal{D}^m$, the mean could be rewritten to an average. This yields the empirical risk minimisation problem:
\begin{IEEEeqnarray}{r'l}
\min_{h \in \mathcal{H}} & \frac{1}{2m} \sum_{i=1}^m \norm{y_i-h(x_i)}^2.
\label{opt:ERM}
\end{IEEEeqnarray}
Since it is often difficult for the system to retrieve exactly the perfect representations, this objective is often supplemented by a priori information about a possibly reasonable hypothesis. This prior knowledge is expressed as a penalisation term $\Omega$ controlled by a multiplier $\lambda$:
\begin{IEEEeqnarray}{r'l}
\min_{h \in \mathcal{H}} & \frac{1}{2m} \sum_{i=1}^m \norm{y_i-h(x_i)}^2 + \lambda \Omega(h).
\label{opt:ERM_regularised}
\end{IEEEeqnarray}

One the other hand, we could start off from a maximum likelihood (ML) estimation problem:
\begin{IEEEeqnarray}{r'l}
\max_{h \in \mathcal{H}} & \prod_{i=1}^m \prob (y_i | x_i, h).
\label{opt:ML}
\end{IEEEeqnarray}
Moreover, we could model a priori information about a possibly reasonable hypothesis by going for a maximum a posteriori estimation:
$\max_{h \in \mathcal{H}} \prod_{i=1}^m \prob (y_i, h| x_i)$.
Using that $\prob (y, h | x) = \prob (y | x, h) \prob(h)$, we obtain the following form:
\begin{IEEEeqnarray}{r'l}
\max_{h \in \mathcal{H}} & \prod_{i=1}^m \prob (y_i| x_i, h) \prob (h).
\label{opt:ML_regularisation}
\end{IEEEeqnarray}
Now introducing the likelihood terms $l = -\ln \prob$, \eqref{opt:ML} and \eqref{opt:ML_regularisation} take the following forms:
\begin{IEEEeqnarray}{r'l"t"r'l}
\min_{h \in \mathcal{H}} &\frac{1}{m}\sum_{i=1}^{m} l(y_i| x_i, h) & and & \min_{h \in \mathcal{H}} &\frac{1}{m}\sum_{i=1}^{m} l(y_i| x_i, h) + \lambda l(h).
\label{opt:ML_MAP_likelihood}
\end{IEEEeqnarray}
As it is seen, the problems in \eqref{opt:ERM} and \eqref{opt:ERM_regularised} are strongly related to the likelihood formulations of \eqref{opt:ML_MAP_likelihood}. In particular, if the conditional distribution $\prob (y|x,h)$ is in fact a standard Gaussian distribution, then the fidelity term of problems in \eqref{opt:ML_MAP_likelihood} falls back to the L2-norm problems. Furthermore, the regularisation term in \eqref{opt:ERM_regularised}, $\Omega(h)$, besides being a penalisation, incorporates a very specific meaning that a conceivable hypothesis has a certain underlying distribution.

\subsection{Special hypothesis class. Convolutional layers}
One important task of applied machine learning is to choose a proper hypothesis class. For reasons explained more in details in an upcoming section, it is completely reasonable for us to choose convolutional neural networks (CNNs) as our starting point. 
CNNs have convolutional layers as their basic building stone, which apply a discrete convolutional operation, represented by a convolution kernel, to the image inputted and add a supplementary bias. If $l-1$ and $l$ represent the indices of two consecutive layers in a CNN, $z \in \{1, \ldots, Z^{(l)}\}$ is one of the $Z^{(l)}$ channels of layer $l$, $y^{(l-1)}$ is the output of layer $l-1$ with  $Z^{(l-1)}$ channels, $w^{(l)}(\cdot,\cdot,c,z)$ is the convolutional kernel between the $l^{\text{th}}$ layer's $z^{\text{th}}$ channel and the $(l-1)^{\text{th}}$ layer's $c^{\text{th}}$ channel, and $B^{(l)}_{(z)}$ is the bias corresponding to channel $z$, then the output of layer $l$ is the following:
\begin{equation*}
y^{(l)}_{(z)} = \sum_{c=1}^{Z^{(l-1)}} \sum_{(a, b)} y^{(l-1)}_{(c)} (x + a, y + b) \cdot w^{(l)}(a, b, c, z) + B^{(l)}_{(z)}.
\end{equation*}
Convolution kernels ($w^{(l)}(\cdot,\cdot,c,z)$) usually have a relatively small, square-shaped support, most commonly $3\times 3$. This support size is interpreted as the receptive field of a single layer, since it shows how far information from one cell could propagate after one layer.
In deep neural networks, such as the U-Net, a rectified linear function (ReLU) is applied as activation function element-wise on the output of a convolution. The ReLU's definition:
\begin{equation*}
f : \R \to \left[0, \infty\right),
f(x) = \max\{0, x\}
\end{equation*}
CNNs usually profit from non-linear layers, like maximum pooling, which creates a half sized image with values being computed as the maximum of four elements in the input image. This operation has the role of omitting less important information and highlighting the more important features. It reduces the dimensionality of the input space and makes, thus, the output space more condensed even if the input distribution was sparse in the input space.

Besides max pooling layers, obviously convolutional layers could be reducing dimensionality via the usage of strides. The reverse operations of dimension reductive layers consists of two possibilities: the algebraic transpose (adjoint) of dimension reductive convolutions, also called transposed convolutions, or schemes built from upsampling via interpolation followed by a size-keeping convolution.

One further element included by deep fully convolutional neural nets is batch normalisation layer that normalises incoming data batches in order to keep input activations in the well-conditioned region of the activation function and to make it possible to train layers relatively independently.

One important thing related to CNNs is the fact that under some clear definitions they approximate continuous, shift invariant (sometimes called translation equivariant) functions on a compact set arbitrarily well. This result, more outlined in \cite{Zh20, Ya21} resonate with the universal approximation property of dense networks of fully connected layers.

\subsection{U-Net} \label{subsec:UNet}
U-Net was first presented in \cite{RFB15} for the purpose of biomedical image segmentation. The network is one of the first fully convolutional networks consisting of convolutional layers organised into blocks of three. There is a downsampling stream of five stages, made possible by max pooling layers. At each stage, the spatial dimensions of the image size are roughly halved, but at the same time, the number of channels gets doubled. This motivates the network to filter differently abstract object parts. On the upsampling scheme these steps get inverted. After upsampling the per-channel dimensions are doubled, yet the number of channels is cut to its half. Nevertheless, the U-Net uses increasingly many skip connections. Between each stage there is a skip connection and the downstream data is concatenated to the upstream channels. 

The original version proposed in the article did not apply padding and feature maps did not reduce by a factor of two perfectly. For the original purposes, input data was preprocessed and a mirror padding was applied to them. Furthermore, the output of the network was also adjusted for image segmentation specific standards. A slightly modified architecture of the U-Net could be seen in Fig.~\ref{fig:unet_Huang}, here contributed to \cite{HW18}, yet applied in this form by almost the entire community. In the updated architecture, zero padding is applied and the downstream shrinkage is perfect. Upstream expansion, thus, perfectly matches size-wise the corresponding level's block. Hence, the final stage reassures an input size result. The network is finished with a 1x1 kernel convolutional layer to scale the data to the space of the desired output.

\subsection{Optimisation}
As all usual network types, the hypothesis class of convolutional networks is also parameterised by a parameter vector $\mathcal{W} \in \R^D$. This vector contains all convolution kernels, biases and the parameters of affine scaling parameters of the batch normalisation. The ERM problem~\eqref{opt:ERM} becomes:
\begin{IEEEeqnarray}{r'l}
\min_{\mathcal{W} \in \R^D} & \frac{1}{2m} \sum_{i=1}^m \big\lVert y_i - f(x_i, \mathcal{W}) \big\rVert_2^2
\label{opt:ERM_weight}
\end{IEEEeqnarray}
Thus, the hypothesis class becomes $\mathcal{H} = \{ f(\cdot, \mathcal{W}) \,|\, \mathcal{W} \in \R^D \}$. We also extend function $f(\cdot, \mathcal{W})$ to take batch inputs and yield batch output, for $x = (x_1, x_2, \ldots, x_m)$ and $y = (y_1, y_2, \ldots, y_m)$ we have:
\[ f(x, \mathcal{W}) \defeq \big(f(x_1, \mathcal{W}), f(x_2, \mathcal{W}), \ldots, f(x_m, \mathcal{W})\big). \]
In this case the ERM problem~\eqref{opt:ERM_weight} is simplified to:
\begin{IEEEeqnarray}{r'l}
\min_{\mathcal{W} \in \R^D} & \mathcal{L}(x, \mathcal{W}) = \frac{1}{2} \big\lVert y - f(x, \mathcal{W}) \big\rVert_2^2.
\label{opt:ERM_weight_batch}
\end{IEEEeqnarray}
Problem~\eqref{opt:ERM_weight_batch} is solved with different variants of the gradient (steepest) descent method. For the understanding of following chapters we derive here the update step and delta-rule for one inner layer.
We assume that the $l^\text{th}$ layer takes the form $ y^{(l)} = f^{(l)}(x^{(l)}, \mathcal{W}^{(l)}) $. The weight components of everything before and after this layer is fixed for the moment and we may assume that everything before is summarised by an operator $y^{(l-1)} = g^{(l-1)}(x^{(0)})$, everything after, including the loss, is summarised by $G^{(l+1)}(x^{(l+1)})$. The partial update rule in a simple GD for weight component $\mathcal{W}^{(l)}$ reads:
\[ \mathcal{W}^{(l)}_{k+1} = \mathcal{W}^{(l)}_k - \lambda \nabla_{\mathcal{W}^{(l)}}^{*} \mathcal{L}(x^{(0)}, \mathcal{W}_k), \]
where update is done based on the adjoint (or transposed) of the partial gradient.
The function $\mathcal{L}$ is obviously a composition of the three aforementioned components. With a bit of notational abuse with regard to fixed weights, we get:
\[\mathcal{L}(x_0, \mathcal{W}) = G^{(l+1)}\Big( f^{(l)}\big(g^{(l-1)}(x^{(0)}), \mathcal{W}^{(l)}\big) \Big).\]
Therefore, after omitting quite a few inner steps, we obtain:
\begin{align}
\nabla_{\mathcal{W}^{(l)}} \mathcal{L}(x_0, \mathcal{W}_k) &= \big(\nabla G^{(l+1)}\big)(y^{(l)})\cdot \nabla_{\mathcal{W}^{(l)}} f^{(l)}(y^{(l-1)}, \mathcal{W}^{(l)}_k), \nonumber \\
\nabla^*_{\mathcal{W}^{(l)}} \mathcal{L}(x_0, \mathcal{W}_k) &= \nabla^*_{\mathcal{W}^{(l)}} f^{(l)}(y^{(l-1)}, \mathcal{W}^{(l)}_k) \cdot \big(\nabla^* G^{(l+1)}\big)(y^{(l)}).
\label{eq:GD_W}
\end{align}
Formula~\eqref{eq:GD_W} displays a usually huge size multiplication, in the profession usually just called \emph{jacobian-vector product}. The \emph{output gradient} term $\big(\nabla^* G^{(l+1)}\big)(y^{(l)})$ has a dimension equal to the output dimension of the $l^\text{th}$ layer. However, the front, multiplicative, linear term $ \nabla^*_{\mathcal{W}^{(l)}} f^{(l)}(y^{(l-1)}, \mathcal{W}^{(l)}_k) $ maps from the dimension of $l^\text{th}$ layer's output to the dimension of the $\mathcal{W}^{(l)}$. The matrix representation of such an operator could simply be enormous, when the layer is fully connected.

As for the delta-rule, we have to examine what is back-propagated to the previous layer. As it is seen, the gradient of the post-layer operator with respect to the input (the output gradient) is needed. Omitting the anyways not comprehensive notation on weights:
\[ \nabla^*_{x^{(l)}} G^{(l)} = \nabla^*_{x^{(l)}} f^{(l)}(y^{(l-1)}, \mathcal{W}^{(l)}_k) \cdot \big(\nabla^* G^{(l+1)}\big)(y^{(l)}). \]
Take the case of $f^{(l)}$ being an affine linear mapping $f^{(l)}(x^{(l)}, \mathcal{W}^{(l)}) = \mathcal{A}x^{(l)} + b^{(l)}$, where the affine parameters are in function of the weight $\mathcal{W}^{(l)}$, like in the case of convolutional or fully connected layers. In that case $\nabla^*_{x^{(l)}} f^{(l)}(y^{(l-1)}, \mathcal{W}^{(l)}_k) = \mathcal{A}^*$. Consequently, in all linear layers error is backpropagated via the transposed operator. In case of convolutions with transposed convolutions.

\section{Solving inverse problems. Previous results} \label{sec:solving_inverse_previous}

In this section we are to present what the previous attempts of applying CNNs to solve linear inverse problems are, or more specifically to denoising tasks, i.e\ inverting the effect of shift invariant noise generators. Such operators, apart from the inherent noise of measurements, are many times ill-posed in the sense of Hadamard: solutions may not exist, it is not unique and the pseudo-inverse may be continuous but very unstable with a huge condition number. Classical algorithms rarely have the chance to explore the large density manifolds of ideal reconstructions and, hence, neural networks become an alternative.

\subsection{Previous results}

The authors of \cite{MJU17} provide a general overview of this topic and argue that a good trade-off between following the learning approach and retaining classical elements is mandatory for success. One of their most important conclusions is that convolutional neural networks must be fit for denoising problems. On one hand, hand-crafting methods that grasp the essential representation in incomplete and noisy measurements to invert it back to an ideal reconstruction is a difficult approach, while CNNs are capable of learning manifolds. On the other hand, denoising should be a shift invariant operator and because of the approximation property of CNNs, they should be considered. 

Another point they establish is that in a linear inverse problem, given a measurement operator $H$, it is convenient to preprocess measurements before feeding it into the network using the backprojection $H^*$ in the case, when $H^*H$ becomes a shift-invariant operator. It is also conceivable that these inverse problems are the only ones that behave robust when solved via CNNs. This is especially relevant for us, because in subsection \ref{subsec:discrete_LI} it was discussed how the discrete $\Rad$ operator has this property. Besides that it was also shown that even $\Rad^+\Rad$ is an LI operator. This perfectly explains why most of the neural network based CT reconstruction methods choose to learn a regression from the FBP reconstruction rather than a sinogram. Obviously, in case of $\Rad$, using the measurements would also be impeded by the fact that geometrically related measurements are spatially found on a sinusoid shaped curve rather than in a vicinity of each other. 

Furthermore, they, too, reason that residual learning should be preferred over direct feed forward, i.e\ a hypothesis class of the form $\text{Id} + \text{CNN}_\mathcal{W}$ should be able to behave better, because this way the system is revolving around an almost identity operator and it is no longer obligatory to learn even such a simple task.

Almost the same team in \cite{JM17} roughly argue that the Landweber iterative method admitted by the $\Rad f = g$ inverse problem, which we presented in subsection \ref{subsec:Landweber}, directly translates to a fully convolutional neural network. The formulation in \eqref{it:Landweber_2}: $f^{(k+1)} = (\I - \omega\Rad^*\Rad) f^{(k)} + \omega\Rad^*g$ consists of a convolution operation $ \I - \omega\Rad^*\Rad $ and a bias $ \omega\Rad^*g $. Consequently, performing a number of iteration steps is equivalent to feeding the initial hypothesis into a feed forward CNN with specific weights.

A very interesting approach is outlined in \cite{GJ18}, which strongly influenced our research work, as well. The authors claim and justify that an iterative method should be applied in conjunction with a projection operator. It is suggested that every Landweber-iteration \eqref{it:Landweber} should be followed by an operator that projects the hypothesis back to the manifold of ideal reconstructions. Thus, they also provide an algorithm that certifies the correctness of a reconstruction, since the projection has its range as an invariant subset. The paper goes on explaining the learning scheme. The CNN is learnt via an ensemble of data points consisting of ideal reconstructions, FBP reconstructions and first order network outputs. The theoretical introduction also presents a more sophisticated algorithm and a corresponding theorem guaranteeing convergence of the procedure.

For CT reconstruction \cite{HW18} adopted the U-Net with slight modifications. Their network architecture is sketched by Fig.~\ref{fig:unet_Huang}. The network was trained to denoise already prepared reconstructions with low SNR and they argue that it is easy to create adversarial examples. These are input images that seem to follow the input distribution of the learning system, yet a confined amount of well-prepared noise or specially applied modifications cause the system to generate predictions with large deviance from the expected value. They prove the lack of robustness for cases when input images are contaminated with Poisson noise.

\begin{figure}[hp]
	\centering
	\includegraphics[width=1.0\textwidth]{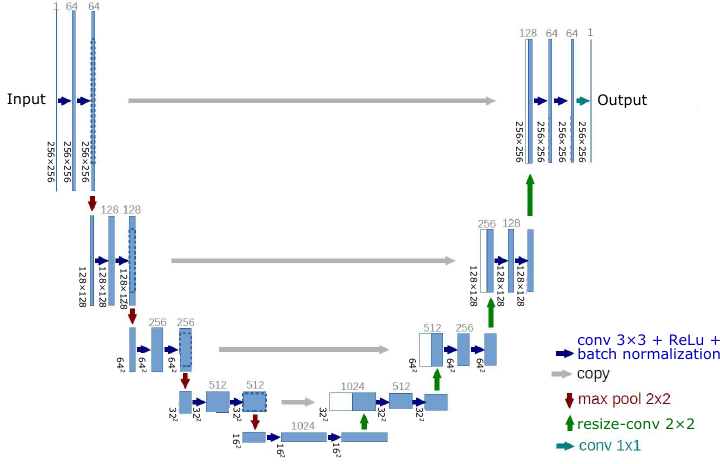}
	\caption{U-Net architecture applied by \cite{HW18} for CT reconstruction.}
	\label{fig:unet_Huang}
\end{figure}

This article encouraged \cite{HP19} to continue investigations in the direction of CNN-based CT image reconstruction denoising. Using the very same network as above, they devise an iterative algorithm for reconstruction. The neural network's output is post-processed to be consistent with the measured data, i.e.\ $\norm[2]{Rf - g} < \varepsilon$.
In order to move the output of the neural network even more towards the ideal reconstruction, an algebraic technique is invoked in this case (they used SART). The writers of the report move on to apply a standard total variation minimisation combined with SART.

Another study that has motivated us is \cite{HYJ16}, which reasoned that artifacts created during the use of classical algorithms, without being supplemented by either compressive sampling or neural network techniques, follow a distribution dependent on the scanning geometry. The most relevant factor here is the number of projections. As discussed in \cite{Nat01} and by us in section~\ref{sec:discretised_FBP}, FBP admits ideal reconstructions on the account of obeying sampling theoretical conditions. This condition is based on the number of angles projections are taken from, the number of parallel beams in one projection and the distance between slices. Hence, the ultimate goal to reduce the number of projections (and the emitted radiosity) depends on whether such constraints could be overcome. One such direction is recognising this geometry-dependent distribution. \cite{HYJ16}) proposes using other metrics (Betty-numbers) and attempt to give a theoretical reasoning that this distribution is actually very simple. The authors of the paper arrive to the conclusion, that the base U-Net should be taught to reproduce the difference between the ground truth and the FBP estimated reconstruction, because this could be an easier regression task.

Let us present other attempts as well. \cite{KMJ17} decomposes the image, prior to feeding into the network, with directional wavelets into many images. This could be considered motivated by compressed sensing, since this is a rewriting of the image in a representational basis. What follows is that these images falling in different linear subspaces are concatenated and fed in the network. The output of the network is also a decomposition in the same directional wavelet basis. The authors argue that this way the regression task becomes more tractable. A different reconstruction scheme proposed in \cite{WG16} is preparing dense layers that fully correspond to operators used in FBP. More precisely, for example, they implement the back-projection operator as a dense layer in their neural network. Undoubtedly, they receive, hence, a concise neural network. On the other hand, it could be argued that this approach could only reach its full potential if weights are regularised in a way that to each output pixel only truly relevant projections may contribute. Another promising direction is described by \cite{KH18}. They designed an architecture consisting of consecutive instances of a neural network and a data consistency layer. Nevertheless, it is not treated as an iterative algorithm, the number of modules is fixed prior to training and the full system is trained from the input of the first NN instance till the output of the last module. \cite{PH18} aimed to reconstruct the blurry, low SNR inputs in a super-resolution fashion. They argue that the number of projections could also be reduced by cutting down the parallel slices along the body axis. They feed their system with averaged, consecutive patches created from five images and attempt to reconstruct the middle one of these.

\subsection{Categorisation} \label{subsec:categories}

Until now we have presented some previous attempts of solving the inverse problem of CT reconstruction via the assistance of CNNs. To summarise, two major categories of such methods exist.
\begin{enumerate}
	\item[C1.] First, denoising CNNs. In these cases a CNN is used combined with the classical reconstruction algorithm, the FBP. Again, this was motivated by a number of facts: $\Rad^+\Rad$ is a shift invariant operator and convolutional neural networks are universal approximators on LI systems; deconvolution uses local information and sinograms contain geometrically related information in distant spots. Hence, the overall reconstruction scheme starting from the sinogram had the form $\text{CNN}_\mathcal{W} \circ \Rad^+$.
	\item[C2.] Second, the projected gradient descent method described by \cite{GJ18}. In this case, the reconstruction is based on an algebraic method, in particular the Landweber-iteration, but after each step the actual hypothesis is projected to the manifold of realistic reconstructions via a CNN.
\end{enumerate}

The following chapters, Chapter~\ref{chap:convnet} and \ref{chap:unrolled}, are going to present two from our lines of investigation in the field of CNN-based CT image reconstructions. In Chapter~\ref{chap:convnet}, an approach closer to category C1.\ is presented with the addition of consistency and sparsity promoting regularisation techniques. This method is called the \emph{measurement-consistent, sparsifying postprocess-ConvNet}. Meanwhile, Chapter~\ref{chap:unrolled} explores a previously uncovered direction of using CNNs in an iterative fashion for aiding reconstruction. The idea is close to category C2., but the neural regulariser is rather interpreted as part of the iterative refinement scheme. The denomination of this method was chosen to be \emph{unrolled support-kernel iterative regulariser GD}.

\chapter[Postprocess-ConvNet]{Measurement-consistent, sparsifying postprocess-ConvNet} \label{chap:convnet}

\TODOm{completely rework}

The idea of post-processing, denoising type CNNs has been discussed in detail in section~\ref{sec:solving_inverse_previous} and denoted as a major CNN-based reconstruction category (C1.) in subsection~\ref{subsec:categories}. The reconstruction scheme starting from the sinogram takes the form $\text{CNN}_\mathcal{W} \circ \Rad^+$. The objective function for such a learning problem reads 
\[\mathscr{O}(f, \bar{f}, \mathcal{W}) = \frac{1}{2} \big\lVert\text{CNN}_\mathcal{W} f - \bar{f}\,\big\rVert_2^2,\]
where $\bar{f}$ is the ideal, expected reconstruction corresponding to the FBP, low-quality reconstruction $f = \Rad^+ g$. Our addition consists of a measurement consistency promoter term and a sparsity promoting term based on classical total variation based regularisation techniques.

Section~\ref{sec:reg_ct} briefly goes through the classical theory of total variation minimisation. Afterwards, section~\ref{sec:convnet_design} presents the design considerations of the new method. Last, section~\ref{sec:convnet_experimentation} displays the experimental results about the performance of the system.


\section{Theory. Classical regularisation} \label{sec:reg_ct}

As already explained, regularisation is applied to coerce a priori information in cases when it is difficult for a learning system to retrieve perfect representations and features. In this section, a brief introduction to some classical prior information based techniques are shown. The reader is referred for a more detailed discussion to the textbook \cite{Tar05} and for an authentic introduction to \cite{Don06}.

We would like to solve the following optimisation problem:
\begin{IEEEeqnarray*}{rcl}
	\argmin_{f} & \quad & \big\lVert g-Rf \big\rVert_p^\gamma + \lambda \cdot \Omega(f),
\end{IEEEeqnarray*}
where the first term to be minimised, $\norm[p]{g-Rf}^\gamma$, called the fidelity term, forces the solution to be consistent with the input projection images. The second term, $\Omega(f)$ is a further constraint to be minimised. This could be for instance the negated distribution of the $f$ term, or any other constraint that we would like ensure that the solution will satisfy. The parameter $\lambda$ takes the role of creating the trade-off between the fidelity term and the regularisation term. The larger the $\lambda$, the more we satisfy additional constraints, but the more we neglect fidelity to the measured data. The optimization can also be treated as a Maximum a Posterior estimation problem (after applying logarithm transformation to the posterior probability function).

As for medical computing and, in general, image processing is concerned, a frequently used regularisation term is provided by the smoothness constraint. Real life images tend to minimise the number of abrupt changes. They usually feature relatively large homogeneous areas with very flat gradients and only edges cause significant gradients in colour space. For such analysis, one may check \cite{Pra07}. A usual regularisation function that enhances smoothness is the total variation norm.

\subsection{Total Variation minimisation} \label{subsec:TV}
The total variation of a picture $x=[x_{i,j}]_{i=\overline{1, M}, j=\overline{1,N}}$ is a matrix with size $(M-1) \times (N-1)$ with elements forming the amplitude of the discrete gradient of $x$:
\begin{equation}
	TV(x)_{i,j} \defeq \sqrt{ |x_{i,j}-x_{i+1,j}|^2 + |x_{i,j} - x_{i,j+1}|^2 }.
\end{equation}
More often we only use the L1-norm of the total variation operator:
\begin{align}
	\norm[1]{TV(x)} \defeq & \sum_{i=1}^{M-1} \sum_{j=1}^{N-1} \sqrt{ |x_{i,j}-x_{i+1,j}|^2 + |x_{i,j} - x_{i,j+1}|^2 }. \label{eq:TV1}
\end{align}
Hence, we can define the problem:
\begin{IEEEeqnarray}{rcl}
	\argmin_{f} & \quad & \big\lVert g-Rf \big\rVert_2^2 + \lambda \cdot \big\lVert TV(f) \big\rVert_1.
	\label{opt:TV}
\end{IEEEeqnarray}
In order to solve this, we present shortly the methodology proposed by \cite{BT09}. The authors adapt the \emph{proximal map} method introduced by \cite{Mor65}, which, given $h$, possibly non-convex function to be minimised and a point $x$, finds another point close to the initial one, but which reduces the value of $h$:
\begin{equation}
	\prox_t(g)(x) \defeq \argmin_u \Big\{ h(u) + \frac{1}{2t}\cdot\norm[]{u-x}^2 \Big\}.
\end{equation}
After this, a two step iterative method is suggested to optimise the problem~\eqref{opt:TV}. In the first step they optimise the fidelity term by the means of a gradient descent step. Secondly they minimise the regularisation term by applying the proximity operator. Hence, we arrive to the formulation:
\begin{align}
	f_k &= \prox_{t_k}(\norm[1]{TV}) (f_{k-1} - t_k \cdot \nabla^* \mathcal{J}(f_{k-1})) \nonumber \\
	&= \argmin_u \Big\{ \big\lVert TV(u) \big\rVert_1 + \frac{1}{2t_k} \cdot \big\lVert u -\big(f_{k-1} - t_k \cdot \nabla^* \mathcal{J}(f_{k-1})\big) \big\rVert^2 \Big\},
\end{align}
where $\mathcal{J}(x) = \norm{g - Rx}^2$ and, hence $\nabla \mathcal{J}(x) = 2Rx$. The method could also be supplemented by the use of the Nesterov-momentum, introduced in \cite{Nes83}, after which we immediately arrive to the Fast Iterative Shrinkage/Thresholding Algorithm (FISTA). For more details, check \cite{BT09} and \cite{BT09b}.

\subsection{Non-local Total-Variation} \label{subsec:NLTV}
Another, newer and less known method is the non-local total-variation method. TV-based methods tend to homogenise the entire image rather than force smoothness locally. In order to still reduce noise on the image, but keep the smoothness constraint as local as possible, we may define a non-local total-variation norm-function. We introduce this in the followings, based on the paper of \cite{KC16}:
\begin{align}
	\norm[1]{NLTV(u)} =& \sum_i \sqrt{ \sum_{j \in \Lambda_i } w_{ij}(u_j - u_i)^2} \nonumber \\
	\text{where } w_{ij} =& \exp \left( -\frac{\sum_{k=-a}^{a} G(k) \cdot |u(i+k) - u(j+k)|^2}{2 h_0^2} \right) \label{eq:NLTV-def}.
\end{align}
Let us clarify the meaning of the parameters: the non-local approach is reflected by the parameter $\Lambda_i$, which is a region of interest around voxel $i$. Parameter $w_{ij}$ introduces a weighting of the neighbouring points. As shown by the definition, we actually use a Gaussian kernel with a size of $(2a+1) \times (2a+1)$ and convolve it with difference of intensities in the neighbourhood. All-in-all, it is a complicated formulation, but the key aspect of it is its non-local approach.

Also, the authors promoted the use of a reweighted L1-norm in order to approximate the L0-norm. What they introduced is:
\begin{equation}
	\norm[RWL1]{x} \defeq \sum_i \frac{x_i}{x_i + \delta}. \label{eq:RWL1}
\end{equation}
Introducing an L0-approximating norm is related to the recent advances in compressive sampling theory.

The means of executing the algorithm could be discussed in a very detailed manner, or it could be checked out in the corresponding paper. Nevertheless, we only present the main idea and how it correlates with the previously introduced gradient descent based approach combined with the proximity operator used to solve the total variation problem in subsection~\ref{subsec:TV}. Three main steps are iterated different number of times:
\begin{align*}
	f_{k+1} \defeq& f_k + R^{+}(g_k - Ru_k) & \text{(any ART update)} \\
	u_{k+1} \defeq & \argmin_u \left\{ \gamma \norm[RWL1]{NLTV(u)} + \frac{1}{2}\norm{u-f_{k+1}} \right\}\\
	g_{k+1} \defeq & g_k + (g - Ru_{k+1}).
\end{align*}
The first step optimises the fidelity term by using not necessarily one iteration of an arbitrary algebraic reconstruction technique. The method is even more sophisticated, since the output data and subtracted images are from a sequence that are also being iterated through time. The second formula reflects on the minimisation of the regularisation term and we may easily recognise the proximity operator here.

\subsection{Compressive sampling}
Compressive sampling or compressed sensing (CS) deals with the analysis of how many measurements we must take in order to reconstruct the original image within a predefined error term. As already motivated, reducing the number of measurements reduces the amount of radiation received by the body of the patient. CS theory actually analyses whether there exists a representational basis $\Psi$ such that the representation of the image $f$ in $\Psi$ is sparse enough, i.e.\ the number of non-zero components is low. Until now our task was to solve the inverse problem $g = Rf$ with some additive noise. Let us assume that we do posses a ``good'' basis $\Psi$. In that case, the solution to the following optimisation problem may be close to the exact solution or it may be exact with very high probability:
\begin{equation}
\begin{IEEEeqnarraybox*}[][c]{r'l}
	\min_{\tilde{f}} & \norm[1]{\tilde{f}} \\
	& g = R\Psi\tilde{f}.
\end{IEEEeqnarraybox*}
\label{opt:CS}
\end{equation}
This turns out to be a lucrative theorem, since problem~\eqref{opt:CS} is a convex optimisation problem, the L0-norm is relaxed to L1-norm. 

\TODOm{Candes Wakin article}

\section{Architectural design} \label{sec:convnet_design}


%
As it was already stated in section \ref{sec:solving_inverse_previous}, \cite{HYJ16} argued on a theoretical level that for our problem it pays off to have the neural network learn the artifacts created during the reconstruction algorithm due to insufficient number of projections. This noise contaminating the image follows, according to our conjecture, a distribution entirely dependent on the geometry of the scanning arrangement. The authors proved that the representation of this type of noise is very simple and, hence, tractable for regression models.
The CNN $F(\cdot, \mathcal{W}) = F_\mathcal{W}$ is, then, formulated as a residual mapping $F(\cdot, \mathcal{W}) = \I + \mathcal{C}(\cdot, \mathcal{W})$, where $\mathcal{C}(\cdot, \mathcal{W}) = \mathcal{C}_\mathcal{W}$ is an autoencoder-style convolutional network, most typically U-Net. (As already explained, it is convenient to provide a straightforward implementation of the identity mapping.)


As many reports, our work has, too, involved the use of the U-Net architecture for $\mathcal{C}(\cdot, \mathcal{W})$, previously depicted by Fig.~\ref{fig:unet_Huang}, but we changed the system of batch normalisation layers. Ronneberger et al.~\cite{RFB15} did not take advantage of batchnorm layers, Huang et al.~\cite{HW18} applied much more frequently, after each and every convolutional layer. We decided that the most critical points of network where local data normalisation would be helpful are directly after the max pooling layers and after concatenation layers. See Fig.~\ref{fig:unet_target}.

\begin{figure}[htp]
	\centering
	\includegraphics[width=0.8\textwidth]{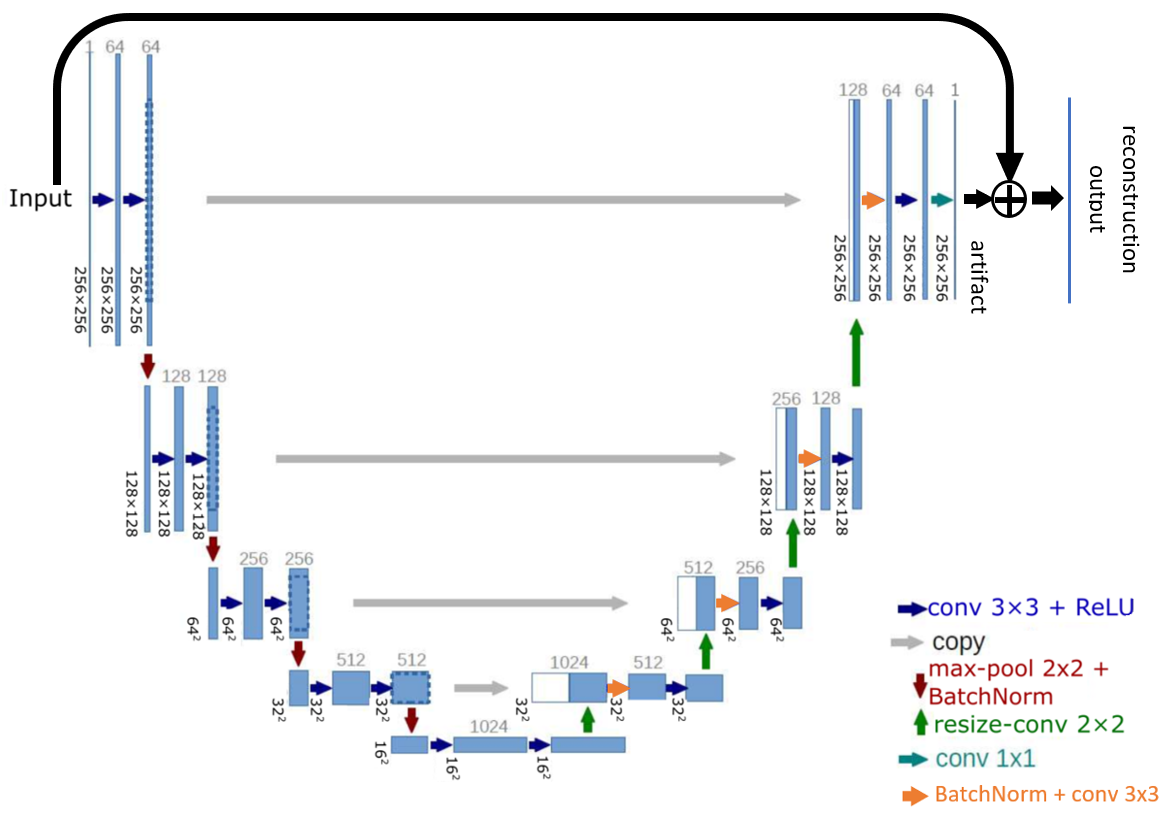}
	\caption{Target network $F_\mathcal{W}$ applied in our project. The major difference compared to the adopted U-Net from~\cite{HP19} is that a further addition of the output and the input are implemented, i.e.\ there is a residual connection. This ensures that the original U-Net input learns the artifacts, which could be an easier regression task.}
	\label{fig:unet_target}
\end{figure}


\subsection{Reconstruction fidelity}

Let $\loss ^2$ denote a mean squared error between two elements, i.e.:
\begin{equation*}
	\loss^2 (\tilde{f}, \bar{f}) = \dfloor[\big]{\tilde{f} - \bar{f} \,}_2^2,
\end{equation*}
where $\dfloor{x}_p^p \defeq (1/\dim{(x)}) \norm[p]{x}^p$ is the notation for the mean $p$-power error, i.e.\ the momentum of order $p$. We applied two losses on this reconstruction output layer, namely a mean squared error between the predicted and expected output reconstructions (denoted by $\bar{f}$) and a total variation distance between the two images. The latter loss function was motivated by the experience that edges tend to become blurry and contrast-to-noise ratio is diminished. The loss function on this layer involving a weight parameter $\gamma$ is:
\begin{equation}
	\loss^2 (\tilde{f}, \bar{f}) + \gamma \dfloor[\big]{TV(\tilde{f}-\bar{f}\,)}_1.
\end{equation}

\subsection{Measurement consistency}

Besides that, the network is fitted with a data consistency layer, the Radon transformation layer, which will transform the reconstruction output into its sinogram and will fit it to the expected sinogram. Naturally, the reconstruction output layer could be enough for training in the sense that any optimal weight value for the case when we assign non-zero loss only to the reconstruction output layer would be optimal for the Radon-layer, as well. The difference is that adding an extra objective function with the same optimum changes the overall objective function, hence the surface on which optimisation happens. It could be argued that a fortunate selection of loss functions and loss weighting could ``convexify'' the cost function, i.e.\ help the optimisation, however that remains for future research. But it is doubtless that in this case the neural network's weight configuration is only allowed to traverse on a path that continuously preserves consistency with measured projections. In this sense this objective function aids the weights to find the better path to the optimum. A mean squared error was also applied to the output of this layer:
\begin{equation*}
\loss^2 (\Rad \tilde{f}, \Rad \bar{f}) = \dfloor[\big]{\Rad \tilde{f} - \Rad \bar{f}}_2^2.
\end{equation*}

\subsection{Sparsity}

We further need to enhance the sparsity of total variation of the image. Once again, techniques in compressed sensing allow us to compute the reconstruction from far fewer measurements. According to \cite{CW08b} sparsity could be achieved through iterative reweighted total variation minimisation. They suggest the use of the following optimisation problem:
\begin{equation}
\begin{IEEEeqnarraybox*}[][c]{r'l}
\min_{x \in \R^n} & \sum_{i=1}^n \ln(|x_i| + \varepsilon)\\
& y=\Phi x,
\end{IEEEeqnarraybox*}
\label{eq:log_min}
\end{equation}
where $\Phi$ is an arbitrary operator applied on the linear space $\R^n$. This problem assumes that $\Phi$ is a perfectly sparse representational basis for the data $y$. They further go on to establish an upper bounding minimisation problem for the sake of iterative algebraic reconstruction techniques. Since our neural system differs from usual iterative total variation algorithms in the sense that we are optimising in a continuous space using gradient descent-type approaches, we decide to use this logarithmic formula directly as a loss function:
\begin{equation*}
	\logloss (\tilde{f}\,) =\dfloor[\big]{\ln\big(TV(\tilde{f}\,)_{i,j} + \varepsilon\big)\big|_{i,j}\,}_1 = {\footnotesize\frac{1}{\dim(\tilde{f}\,)}} \sum_{(i,j)} \ln(TV(\tilde{f}\,)_{i,j} + \varepsilon).
\end{equation*}

\subsection{Overall loss function and optimisation}

Hence, the complete loss function of our optimisation scheme is the following:
\begin{equation*}
	\Loss (\tilde{f}, \bar{f}\,) = \tau_1 \left( \loss^2 (\tilde{f}, \bar{f}\,) + \gamma \dfloor[\big]{TV(\tilde{f}-\bar{f}\,)}_1 \right) + \tau_2 \loss^2 (\Rad \tilde{f}, \Rad \bar{f}\,) + \tau_3 \logloss (\tilde{f}\,).
\end{equation*}

In order to keep kernel weight values in low domains, an L2-Tikhonov kernel regularisation was applied on the convolutional layers. Therefore, the generalised error optimisation problem that should be addressed is:
\begin{IEEEeqnarray}{r'l}
	\min_\wei & \meanlim_{(f, \bar{f}) \sim \mathcal{D}_{f\times\bar{f}}} \left[\Loss \big(F_\wei (f), \bar{f}\,\big)  + \tau_4 \dfloor{\wei_\mathcal{K}}_2^2\right],
\end{IEEEeqnarray}
where $\mathcal{D}_{f\times\bar{f}}$ is the joint distribution of images reconstructed from realistic, noisy, undersampled sinograms, alongside with the corresponding ideal, high-quality reconstructions.

\section{Experimentation} \label{sec:convnet_experimentation}

\subsection{Dataset. Preprocessing the data} \label{subsec:data_preprocess}

As a dataset, the image database published by the Lung Image Database Consortium (LIDC) and Image Database Resource Initiative (IDRI) was used (see the publications~\cite{LIDC11}, \cite{LIDC11data} and the webpage~\cite{LIDC11dataweb}, where the database may be accessed).

The raw data consists of approximately $250$ thousand $512 \times 512$ reconstructions belonging to $1010$ patients. Outside the training loop, offline, these images are downsampled to $256 \times 256$ due to memory limitations and for the sake of complexity reduction. Afterwards, their Radon-transforms with the priorly prescribed number of projections are produced. Results here are based on sinograms with $40$ projections spread sparsely, uniformly in the angle range $[0\degree, 180\degree]$.
As the main goal of CNN-based CT reconstruction is the substantial reduction in the number of projections, it is noteworthy that commercially available infrastructures (using FBP as reconstruction algorithm) acquire a number of projection in the range of few hundreds, possibly reaching close to a thousand.

At training time, the $256 \times 256$ downsampled reconstruction is used as ideal, expected output reconstruction, the persisted sinograms represent the ideal, expected output sinogram. The input of the neural network is generated real time by adding noise to the ideal sinogram and reconstructing it using the FBP algorithm. The noise model is the one adopted from~\cite{KC16}. It assumes a normal distribution noise corresponding to thermal noises and afterwards applies the usual Poisson-noise:
\begin{equation}
\begin{aligned}
	\mathcal{I} & = \text{Poi} \left( \mathcal{I}_0 e^{-\Rad f} + \mathcal{N}(0, \sigma \mathcal{I}_0) \right)\\
	\widehat{\Rad f} & = - \ln \frac{\mathcal{I}}{\mathcal{I}_0},
\end{aligned}
\label{eq:noise_model}
\end{equation}
where $\mathcal{I}_0$ represents the input amount of radiation expressed in intensity. Matrix and vector operations should be understood element-wise. After linearisation, the value $\widehat{\Rad f}$ is reconstructed using FBP and the image is prepared for being an input. The parameters $\mathcal{I}_0$ and $\sigma$ are set such that the SNR of noise on sinograms is around $40$ dB.

\TODOm{effect of components on SNR}

\subsection{Hyper-parameter setting}
The loss contains six different parameters: $\varepsilon$, $\gamma$, $\tau_1$, $\tau_2$, $\tau_3$, $\tau_4$. It remains an open question, what the correct relationship of these six parameters could be. During our training, some of them were changed from time to time, depending on the actual state of losses and metrics, hence it is plausible that proper scheduling is needed. Nonetheless, we conjecture that the system is robust against changing these parameters, hence a roughly appropriate setting should exist and should be able to achieve optimal weights. Nevertheless, it is noteworthy that the Radon-transform of an image contains values that are sums along lines in the input image, hence the ratio of pixel values are theoretically and practically also around the size of the image length. Hence we chose to have $\tau_2$ somewhere around the inverse of the image side length. The parameter $\tau_1$ would be set around $10$ to $1000$ depending on the actual state of learning. During training we used Adam optimizer with learning rate decreasing from $10^{-3}$ to $10^{-5}$. However, we reached almost minimal reconstruction output loss during the $10^{-3}$ learning rate phase. Last, but not least, a L2-kernel regularisation applied to convolutional layers had weight parameter $\tau_4$ set to values between $10^{-3}$ and $10^{-4}$.

\subsection{Results} \label{subsec:results_convnet}

The test dataset applied for all following results consists of entire patient datasets, which have never been shown to the network during training and validation iterations. At some figures we provided two important and standard metrics, the structural similarity index measure (SSIM) and the mean absolute error, but calculated in Hounsfield Unit (MAE [HU]).

The structural similarity index measure (see \cite{WB04}) calculates the following index on multiple windows of two images and averages the results:
\begin{equation*}
	SSIM(X, Y) = \frac{(2\mean(X)\mean(Y) + c_1)(2 cov (X, Y) + c_2)}{(\mean^2(X) + \mean^2(Y) + c_1)(\sigma^2(X) + \sigma^2(Y) + c_2)},
\end{equation*}
where $c_1$ and $c_2$ are constant parameters set based on the dynamic range of images. The authors' original suggestions in setting these parameters were followed. It is easily proven that on any window this index is between $-1$ and $1$ and its value is $1$ if and only if the means are equal and the correlation between the two images is $1$, hence they are equal with probability 1.

The mean absolute error is converted to Hounsfield for three reasons. First, the linear attenuation coefficients depend on the intensity of radiation that the body is exposed to, hence the value does not hold too much information if the scanning device's calibration is not known. Secondly, the dataset itself contained Hounsfield Units, which we transformed to attenuation coefficient with our own scaler. Therefore it is advised to transform it back. Moreover, Hounsfield Unit was documented exactly for the reason of unifying CT scanning measurements with respect to scanning devices and radiation intensity choice. For the sake of comparison, the radiosity expressed in HU of bones varies anywhere between $300$ and $1800$, the same for lungs varies approximately between $-1000$ and $-600$. The airs radiosity is always calibrated to $-1000$HU. The smallest change of radiosity that could hold clinical information in the case of CT scanning should be considered around a few 10s of HUs.

\subsubsection{Evaluation}

Table~\ref{tab:results_convnet} summarises the evaluation results for the measurement-consistent, sparsifying postprocess-ConvNet (\emph{MC-S-P-ConvNet}) that we have presented in this chapter. For comparison, the same metrics for FBP under the same circumstances are also provided. Displayed are the mean absolute error expressed in Hounsfield Units (\emph{MAE [HU]}), the structural similarity index measure (\emph{SSIM}), the signal to noise ratio expressed in dB (\emph{SNR [dB]}), the relative root mean squared error between the received and ideal reconstructions (\emph{RelError}) and the relative root mean squared error in the measurement space (\emph{RadonRelError}). It is easily concluded that our post-processing method is far superior to the vanilla FBP algorithm.
\begin{table}
	\centering
	\begin{tabular}{ccc}
		\hline
		& FBP & MC-S-P-ConvNet  \\
		\hline
		MAE [HU] & 272.9 & 38.80  \\
		SSIM & 0.362 & 0.924  \\
		SNR [dB] & 6.00 & 20.50  \\
		RelError & 0.530 & 0.10 \\
		RadonRelError & 0.083 & 0.012\\
		\hline
	\end{tabular}
	\caption{Evaluation results for the measurement-consistent, sparsifying postprocess-ConvNet (MC-S-P-ConvNet) in comparison with the same metrics for FBP under the same circumstances.}
	\label{tab:results_convnet}
\end{table}


\subsubsection{Case-studies}

After training the network we randomly chose a few images from the preserved test dataset. Again, this test dataset consists of entire patient datasets, which have never been shown to the network during training and validation iterations.
In figures~\ref{fig:00_recs} to~\ref{fig:07_recs} we depict different slices of axial CT reconstructions from different patients. In each triplet the middle image represents the ground truth, the desired reconstruction. On the left of this is found the generated reconstruction that is fed to the neural network and which was computed by using the Filtered Back Projection on the noisy sinogram containing only 40 projections. Finally, to the right the reconstruction predicted by the neural network may be viewed. For the sake of simplicity, MAE was calculated for the entire image. The reason for being negligent with this metric (or metrics in general) is that a really powerful metric would be a measure that is sensitive to regions where large changes in radiosity occur and which are important from the diagnostic perspective. This is still actively studied by us. As far as the reconstructions are concerned, most of them are of acceptable quality with MAE in HU being under $35$.

\newcommand{\myfig}[3][]{
	\begin{minipage}{0.23\textwidth}
		\centering
		\includegraphics[width=\textwidth, keepaspectratio]
		{images/FBN_MSE_TV_loss_Training.50-0.0301-20210328-230950/#2/#3.png}%
	\end{minipage}%
}
\newcommand{\figrow}[5]{
	\begin{figure}[hp]
		\centering
		\myfig[]{#1}{in_noisy_rec}
		\myfig[]{#1}{out_expected_rec}
		\myfig[]{#1}{predicted_rec}%
		\begin{minipage}{0.24\textwidth}
			\centering
			\begin{minipage}{0.8\textwidth}
				\centering				
				\captionof{figure}{}
				\label{fig:#1_recs}		
				\small
				\begin{tabular}{cc}
					\hline
					MAE [HU] & #2 \\
					SSIM & #3 \\
					SNR & #4 \\
					RelErr & #5 \\
					\hline
				\end{tabular}
			\end{minipage}
		\end{minipage}%
	\end{figure}
}

\newcommand{\cor}{\vspace{-0.75cm}}

\figrow{00}{28.30}{0.954}{22.34}{0.076}\cor
\figrow{01}{44.81}{0.913}{18.55}{0.118}\cor
\figrow{02}{27.09}{0.958}{22.46}{0.075}\cor
\figrow{03}{44.99}{0.909}{18.21}{0.122}\cor
\figrow{04}{46.91}{0.915}{18.32}{0.121}\cor
\figrow{05}{37.38}{0.924}{20.91}{0.090}\cor
\figrow{06}{22.49}{0.968}{24.98}{0.056}\cor
\figrow{07}{62.37}{0.880}{18.66}{0.164}\cor

%
%

\chapter[Unrolled iterative GD]{Unrolled support-kernel iterative regulariser GD} \label{chap:unrolled}


The idea of the iterative, unrolled application of a CNN for CT image reconstruction was sketched in subsection~\ref{subsec:categories} and was assigned the category C2.
Our new approach is similar to this idea. It is desirable to perform necessary corrections by a manifold learning system. Nevertheless, instead of performing a projection onto the manifold, the neural network should become part of the iterative refining system. Note that the Landweber-iteration converges to $\Rad^+ g + \proj_{\ker\Rad}f^{(0)}$. We will prove this convergence once again via contractive functions. However, this method is unable to change the kernel space component, once it has been set by the initial estimate $f^{(0)}$. The kernel space component should be adjusted by the neural network, also in a contractive manner. In order to make the neural network contractive in the kernel space, a directional input gradient regularisation will be imposed on the optimisation objective. Once provided everything, the final network will be applied iteratively on our reconstructions in conjunction with simple ART-steps.

Section~\ref{sec:unrolled_theory} presents the theoretical background for contractive functions. Following that, section~\ref{sec:unrolled_design} outlines our algorithm and objective function. Further regularisation aspects are touched, too. Finally, section~\ref{sec:unrolled_experimentation} contains the experimentation with the method together with the results.

\section{Theory. Contractive functions. Fixed-point theorem} \label{sec:unrolled_theory}


Given a normed space $\mathcal{N}$, a function $\varphi : \mathcal{N} \rightarrow \mathcal{N}$ is called \emph{Lipschitz-continuous}, if there exists a $c > 0$ such that:
\[ \norm{\varphi(x)-\varphi(y)} \leq c \norm{x-y}. \]
If $c < 1$ holds, then besides being Lipschitz-continuous, $\varphi$ is also called \emph{contractive}.

\begin{theorem}[Banach's fixed-point theorem]
Let $\mathcal{N}$ be a Banach-space and let $\varphi$ be a contractive function on $\mathcal{N}$. In this case $\varphi$ has got a unique fixed point, i.e.\ a point $x^* \in \mathcal{N}$, where $\varphi(x^*) = x^*$. Besides, for any $x_0 \in \mathcal{N}$, the iteration $x_{k+1} = \varphi(x_k)$ is convergent and $x_k \rightarrow x^*$, as $k \rightarrow \infty$. Furthermore:
\[ \norm{x_k - x^*} \leq \frac{\norm{x_1-x_0}}{1-c} c^k. \]
\end{theorem}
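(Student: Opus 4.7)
The plan is the classical strategy: take an arbitrary starting point, show that the iterates form a Cauchy sequence via the geometric decay induced by contractivity, exploit completeness of $\mathcal{N}$ to extract a limit, and then check both that the limit is fixed and that no other fixed point exists. Finally, the stated error estimate drops out of the Cauchy estimate by letting one index go to infinity.

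First, I would fix any $x_0 \in \mathcal{N}$ and set $x_{k+1} = \varphi(x_k)$. By induction on $k$, using contractivity $\norm{\varphi(u)-\varphi(v)} \leq c\norm{u-v}$, one gets $\norm{x_{k+1}-x_k} \leq c^k \norm{x_1-x_0}$. Then for $m > n$, a telescoping bound together with the geometric series gives
\[ \norm{x_m - x_n} \leq \sum_{k=n}^{m-1} \norm{x_{k+1}-x_k} \leq \norm{x_1-x_0} \sum_{k=n}^{m-1} c^k \leq \frac{c^n}{1-c}\norm{x_1-x_0}. \]
Since $c < 1$, the right-hand side tends to $0$ as $n \to \infty$, so $(x_k)$ is Cauchy. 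This is exactly the place where the Banach (i.e.\ completeness) assumption is indispensable: it lets me conclude that $x_k \to x^*$ for some $x^* \in \mathcal{N}$.

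Next, I would verify that $x^*$ is a fixed point. Contractivity implies that $\varphi$ is Lipschitz, hence continuous, so from $x_{k+1} = \varphi(x_k)$ and $x_k \to x^*$ I get $x^* = \lim_{k\to\infty} x_{k+1} = \lim_{k\to\infty} \varphi(x_k) = \varphi(x^*)$. Uniqueness is the usual one-liner: if $y^*$ were another fixed point, then $\norm{x^* - y^*} = \norm{\varphi(x^*) - \varphi(y^*)} \leq c\norm{x^* - y^*}$, and because $c < 1$ this forces $\norm{x^* - y^*} = 0$.

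For the quantitative estimate, I would return to the Cauchy bound $\norm{x_m - x_n} \leq \frac{c^n}{1-c}\norm{x_1-x_0}$ and let $m\to\infty$. Continuity of the norm then yields $\norm{x^* - x_k} \leq \frac{c^k}{1-c}\norm{x_1-x_0}$, which is the required inequality. I do not expect a serious obstacle here; the only genuinely substantive point is the use of completeness in passing from the Cauchy property to an actual limit, and everything else is bookkeeping with the geometric series and Lipschitz continuity.
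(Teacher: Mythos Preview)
Your argument is the standard, correct proof of Banach's fixed-point theorem; there is nothing to criticise. Note that the paper itself merely states this classical result without proof and immediately moves on to applying it to the Landweber iteration, so there is no ``paper's own proof'' to compare against.
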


Let us review the Landweber-iteration in the form given in equation \eqref{it:Landweber_2}: $f^{(k+1)} = (\I - \omega\Rad^*\Rad) f^{(k)} + \omega\Rad^*g$. For the function $L(f) = (\I - \omega\Rad^*\Rad) f + \omega\Rad^*g$ we obtain:
\[ L(f_1) - L(f_2) = (\I - \omega\Rad^*\Rad) (f_1-f_2). \]
Operator $Q = \I - \omega \Rad^*\Rad$ is invariant on the subspace $\ker\Rad$ and for $f = f_1-f_2 \in \supp \Rad$ we previously got:
\begin{align*}
\norm[2]{Qf}^2 &= \Big\lVert\sum_{i=1}^r (1-\omega\sigma_i^2)\langle v_i, f \rangle v_i\Big\rVert_2^2 = \sum_{i=1}^r \big|1-\omega\sigma_i^2\big|^2\, \lVert\langle v_i, f \rangle v_i \rVert^2 \leq \\
& \leq \max_{1 \leq i \leq r}\big|1-\omega\sigma_{i}^2\big|^2 \, \Big\lVert\sum_{i=1}^r\langle v_i, f \rangle v_i\Big\rVert_2^2 = \max_{1 \leq i \leq r}\big|1-\omega\sigma_{i}^2\big|^2 \, \norm[2]{f}^2.
\end{align*}
Hence, $\norm[2]{Qf} \leq \max_{1 \leq i \leq r}\big|1-\omega\sigma_{i}^2\big|^2 \, \norm[2]{f}$. 
The convergence criterium for the iteration was initially stated in \eqref{eq:condition_Landweber0} exactly as $|1-\omega \sigma_{i}^2| < 1$, for any $i$. Thus, in these cases the function defining the Landweber-iteration is contractive on $\supp\Rad$.

To see the overall convergence and limit of the Landweber-iteration (already proven in subsection~\ref{subsec:Landweber}, but now argued differently), note that for any $f^{(0)}$,
\[ f^{(k)} = L^k(\proj_{\supp\Rad}f^{(0)}) + \proj_{\ker\Rad}f^{(0)}. \]
As $L:\supp\Rad \rightarrow \supp\Rad$ is contractive, the right hand side is convergent and the Landweber-iteration is convergent in general. For the limit point, we see that
$f^* = \proj_{\supp\Rad} f^* + \proj_{\ker\Rad} f^{(0)}$
and
\begin{align*}
\proj_{\supp\Rad} f^* = L(\proj_{\supp\Rad} f^*) = \proj_{\supp\Rad} f^* - \omega\Rad^*\Rad \proj_{\supp\Rad} f^* + \omega\Rad^*g, \\
\Rad^*\Rad \proj_{\supp\Rad} f^* = \Rad^*g.
\end{align*}
Therefore, $\proj_{\supp\Rad} f^*$ is a solution of the normal equation belonging to $\Rad f = g$, which is only possible if $\proj_{\supp\Rad} f^* = \Rad^+ g$, and, thus,
$f^* = \Rad^+g + \proj_{\ker\Rad}f^{(0)}$.

\section{Overview of design} \label{sec:unrolled_design}

With all the derivations of section \ref{sec:unrolled_theory}, we conclude that the Landweber-iteration is applicable for support space reconstruction. Kernel space reconstruction on the other hand is done via a neural network. Denoting the convolutional network with weight vector $\mathcal{W}$ as $F(\cdot, \mathcal{W})=F_{\mathcal{W}}$, our system has the form:
\begin{IEEEeqnarray}{c"t"c}
f^{(k+1)} = F\left(L^s (f^{(k)}), \mathcal{W}\right) & or & f^{(k)} = (F_\mathcal{W} \circ L^s)^k (f^{(0)}).
\label{it:our_approach}
\end{IEEEeqnarray}
We perform $s$ ART-steps before each network call. A larger choice of $s$ is motivated by the too strong expressiveness of the neural network compared to a single Landweber-iteration. The CNN $F(\cdot, \mathcal{W})$ is formulated, as already explained in section \ref{sec:solving_inverse_previous}, as a residual mapping $F(\cdot, \mathcal{W}) = \I + \mathcal{C}(\cdot, \mathcal{W})$, where $\mathcal{C}(\cdot, \mathcal{W}) = \mathcal{C}_\mathcal{W}$ is an autoencoder-style convolutional network, most typically U-Net. (As already explained, it is convenient to provide a straightforward implementation of the identity mapping. This is especially true for our case, since typically it is not desirable that every iteration produces such a large step that the effect of the classical ART steps are eroded.) The initialisation $f^{(0)}$ could be arbitrarily chosen, but for the sake of reducing the complexity of the manifold our neural network is learning, it is advisable to have $f^{(0)}$ as a low-quality reconstruction. We chose $f^{(0)} = L^p(\underline{0})$. After having defined the architecture, it remains to be discussed how to force the neural network $F(\cdot, \mathcal{W})$ to be invariant on the support of $\Rad$ and how to establish the contractive property on the kernel space of $\Rad$.

The end-result will become a reconstruction scheme that we named \emph{unrolled support-kernel iterative regulariser GD}, because it alternates between augmenting support and kernel space components. Also GD, because the Landweber-iteration-step is in fact a GD step. For the architecture see Fig.~\ref{fig:iterative_model}.

\begin{figure}[ht]
	\centering
	\includegraphics[keepaspectratio,width=0.8\textwidth]{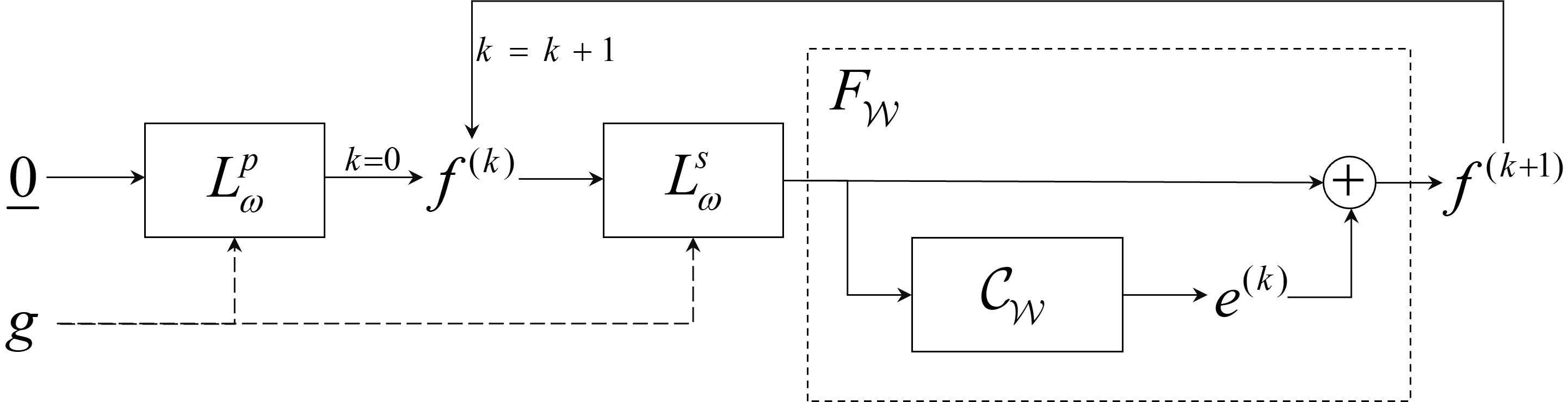}
	\caption{Block diagram of the \emph{unrolled support-kernel iterative regulariser GD}.}
	\label{fig:iterative_model}
\end{figure}

\subsection{Invariance on the support} \label{subsec:support_invaraince}

The learning has to incentivise the neural network to minimise changes in $\supp\Rad$ and concentrate on $\ker\Rad$. Strict invariance needn't be mandatorily asked, since measurement data may very well be noise-contaminated in the support space, and, therefore, a slight regularisation effect even in that subspace's components is well-seen. It is easy to incorporate such a incentivising term in the overall optimisation objective.  For an arbitrary $f$ reconstruction after the ART-steps, it is sought that changes made by $F$ mostly fall into the kernel space: $\Rad\big( F(f, \mathcal{W}) - f \big) \approx 0$. This is equivalent to $\Rad\big( \mathcal{C}(f, \mathcal{W}) \big) \approx 0$. Hence a regularisation term promoting changes in the kernel space is $\norm[2]{\Rad\big( \mathcal{C}(f, \mathcal{W}) \big) }^2$, where  again, $f$ is the output the ART-steps.

\subsection{Contractivity on the kernel. Input gradient regularisation} \label{subsec:input_gradient}

One further problem to tackle is to coerce Lipschitz-continuity and more specifically, contractivity of the neural network in the kernel space of $\Rad$. This may be done through directional input gradient regularisation. 

First, let us denote $H(f, \mathcal{W}) \defeq (F_\mathcal{W} \circ L^s)(f)$ the entire mapping for one complete iteration. It is well-known that if $H$ is partially differentiable in every point $f$, and if for any $f, \mathcal{W},$ we have $\norm[2]{\nabla_f H(f, \mathcal{W})} \leq c < 1$ for some constant $c$, then the overall mapping would become contractive. To see this, note that for any $f_1, f_2$
\begin{equation}
H(f_1, \mathcal{W}) - H(f_2, \mathcal{W}) = H(f_2 + t(f_1-f_2), \mathcal{W}) \big|_{t=0}^1 = \int_0^1 h'(t)dt,
\label{eq:IGR_1}
\end{equation}
where $h = \{t \rightarrow H(f_2 + t(f_1 -f_2), \mathcal{W})\}$. Denoting $f_t = f_2 + t(f_1 -f_2)$, we obtain $h'(t) = \nabla_f H(f_t, \mathcal{W}) \, (f_1-f_2)$. In this case, taking norms in \eqref{eq:IGR_1}:
\begin{align}
\norm{H(f_1, \mathcal{W}) - H(f_2, \mathcal{W})} = \norm{ \int_0^1 \nabla_f H(f_t, \mathcal{W})\,(f_1 - f_2) dt} =  \label{eq:IGR_2}\\
= \norm{ \int_0^1 \nabla_f H(f_t, \mathcal{W})dt \,(f_1 - f_2) } \leq \norm{\int_0^1 \nabla_f H(f_t, \mathcal{W})dt} \cdot \norm{f_1-f_2} \leq \nonumber\\
\leq \int_0^1 \norm{\nabla_f H(f_t, \mathcal{W})}dt \cdot \norm{f_1-f_2} \leq \max_{f, \mathcal{W}} \norm{ \nabla_f H(f, \mathcal{W})} \cdot \norm{f_1-f_2}. \label{eq:IGR_3}
\end{align}
Therefore, if $\max_{f, \mathcal{W}} \norm{ \nabla_f H(f, \mathcal{W})} < 1$, then $H(\cdot, \mathcal{W})$ is contractive. Consequently, it would be convenient to add $\norm[2]{\nabla_f H(f, \mathcal{W})}^2$ as a regularisation term to the learning objective. We emphasize, that L2-norm here is the operator(induced)-norm and not the euclidean norm. At the time being this is impeded by the fact that $\nabla_f H(f, \mathcal{W})$ is an operator that maps between linear spaces with dimensions equal to that of reconstructions. This operator is gigantic and we are not in with the chance of storing the tensor representation. All we need is the L2-norm, the largest singular value, which is also not easy to compute.

Therefore, instead of input gradient regularisation we choose directional gradient regularisation, i.e.\ we will operate with far smaller directional gradients and casual euclidean norm. Observe that for any function $h$ and direction $e$, we have $\nabla_e h(x) = \nabla h(x) \, e$, and therefore,
\[ \max_{\norm{e} = 1} \norm{\nabla_e h(x)} = \max_{\norm{e} = 1} \norm{\nabla h(x) \, e} = \norm{ \nabla h(x)}, \]
based on the definition the norm of a linear operator. Thus, $ \max_{x, \norm{e} = 1} \norm{\nabla_e h(x)} = \max_x \norm{ \nabla h(x)}$. To derive the directional input gradient penalisation term another way, we start off from \eqref{eq:IGR_2}. Let us denote the normalisation of vector $v$ by $\hat{v}$.
\begin{align*}
\norm{H(f_1, \mathcal{W}) - H(f_2, \mathcal{W})} &= \norm{ \int_0^1 \nabla_f H(f_t, \mathcal{W})\,(f_1 \!-\! f_2) dt} =\\
&=\norm{ \int_0^1 \nabla_f H(f_t, \mathcal{W})\,\widehat{f_1 \!-\! f_2}\, dt} \cdot \norm{f_1- f_2} \leq \\
&\leq \int_0^1 \norm{\nabla_f H(f_t, \mathcal{W})\,\widehat{f_1 \!-\! f_2}}\, dt \cdot \norm{f_1- f_2} \leq \\
& \leq \max_{f, \mathcal{W}, \norm{e}=1} \norm{\nabla_f H(f, \mathcal{W})\,e} \norm{f_1-f_2}.
\end{align*}
Our plan is to introduce a regularisation term $\norm{\nabla_f H(f, \mathcal{W})\,e}^2$. The computational advantage of this is clear, the application of the derivative on any vector is achievable on-the-fly by \emph{forward mode automatic differentiation}. As for a drawback, one obviously notes that penalisation only applies to a few directions and not all of them.

Now, for introducing this term one has to provide a specific unit vector $e$.
During our research work we decided to choose the normalised vector from the actual reconstruction to the ideal, expected reconstruction. In details: if $f^{(k)}$ is the current estimate, and $\bar{f}$ is the corresponding ideal reconstructions, then $e$ is considered $\widehat{\bar{f} \!-\!\! f^{(k)}}$. For the exact loss function and optimisation algorithm see the followings. Deciding on this directional vector is motivated by the fact that the neural network would anyways bring the current hypothesis closer to the ideal reconstruction and, thus, we are making changes along the direction $\widehat{\bar{f} \!-\!\! f^{(k)}}$.

\begin{remark}
There exist previous works analysing Lipschitz-continuity and contractivity. The authors of \cite{AGC20} propose a variational problem for learning activation functions that increase the capacity of neural networks, while also maintaining an upper-bound of the global Lipschitz-constant. The report \cite{SV18} overviews the possibilities of computing the global Lipschitz-constant of a network, while they state without proof that such a task becomes \NPh{} as soon as we have two layers. The report produces a modified power method for obtaining, again, an upper bound on the Lipschitz-constant. This work is more useful from a theoretic standpoint, it lacks a direct regularisation suggestion. Probably the most promising work in the field is \cite{GFP21}, which for the sake of enforcing Lipschitz-continuity, introduces a constrained learning optimisation problem and solves it via projected gradient descent method.
\end{remark}

\subsection{Optimisation. Loss function and objective} \label{subsec:it_loss}
For the optimisation objective, most of the regularisation terms have been defined in previous subsections. As for a direct fidelity term between the ideal reconstruction and a current estimate, a mean squared error is always applied. Our goal is to train the network in an unrolled manner, but augmenting the learning process not only at the final depth, but also at all inner stages. A final depth is preset as $D$ and the importance of iterations is pronounced by amplifying multipliers. Denoting with $\gamma_a$ a reconstruction fidelity amplifier, with $\bar{f}$ the ideal reconstruction, the reconstruction fidelity error for iteration depth $0 \leq k \leq D-1$ is:
\[ \gamma_a^k\cdot \dfloor[\big]{ H(f^{(k)}, \mathcal{W}) - \bar{f}\, }_2^2. \]
Here, $\dfloor{x}_p^p \defeq (1/\dim{(x)}) \norm[p]{x}^p$ is, again, the notation for the mean $p$-power error or the momentum of order $p$. The error term caused by undesirably large changes caused by the network in the support space, presented in subsection~\ref{subsec:support_invaraince} is the following. If again the system is modelled, as $H(\cdot, \mathcal{W}) = (\I + \mathcal{C}(\cdot, \mathcal{W})) \circ L^s$, then for iteration depth $k$, the support space error weighted by $\gamma_s$:
\[ \gamma_s \dfloor[\big]{ \Rad\big(\: H(f^{(k)}, \mathcal{W}) \!-\! L^s(f^{(k)}) \:\big) }_2^2 = \gamma_s \dfloor[\big]{ \Rad\big( \mathcal{C}_\mathcal{W}(L^s(f^{(k)})) \big) }_2^2. \]

As for the input gradient regularisation, in case of iteration depth $k$, the penalisation term derived in subsection~\ref{subsec:input_gradient} with control weight $\gamma_g$ is the following:
\[ \gamma_g \dfloor[\big]{ \nabla_f H(f^{(k)}, \mathcal{W}) \, \widehat{\bar{f}\!-\!\! f^{(k)}} }_2^2.  \]

The loss is taken through all depths $0 \leq k \leq D-1$, therefore the overall objective is defined as:
\nopagebreak
\begin{IEEEeqnarray}{cc}
\mathscr{O} (g, \bar{f}, \mathcal{W}) \defeq \sum_{k=0}^{D-1} \bigg( \gamma_a^k \, & \cdot \dfloor[\big]{ H(f^{(k)}, \mathcal{W}) - \bar{f}\,}_2^2 + \gamma_s \dfloor[\big]{ \Rad\big( \mathcal{C}_\mathcal{W}(L^s(f^{(k)})) \big) }_2^2 + \nonumber\\
& \negmedspace {} + \gamma_g \dfloor[\big]{ \nabla_f H(f^{(k)}, \mathcal{W}) \, \widehat{\bar{f}\!-\!\! f^{(k)}} }_2^2 \bigg)
\label{eq:iterative_loss}
\end{IEEEeqnarray}

The generalised error optimisation problem becomes:
\begin{IEEEeqnarray}{r'l}
	\min_\wei & \meanlim_{(g, \bar{f}) \sim \mathcal{D}_{g\times \bar{f}}} \left[ \mathscr{O} (g, \bar{f}, \mathcal{W}) \right],
\end{IEEEeqnarray}
where $\mathcal{D}_{g\times \bar{f}}$ is the joint distribution of realistic, noisy, undersampled sinograms alongside the ideal, high-quality reconstructions.

\section{Experimentation} \label{sec:unrolled_experimentation}

In this section we gather our experimentation framework and results.

\subsection{Architecture} \label{subsec:architecture}

In \ref{sec:unrolled_design} the major architecture was already introduced. It is briefly outlined here again with a caveat. We also depicted it in Fig.~\ref{fig:iterative_model}.

Let $L_\omega$ denote one step of the Landweber-iteration~\eqref{it:Landweber}:
\[ L_\omega(f) = f + \omega \Rad^*(g-\Rad f), \]
where the necessary and sufficient condition for convergence was provided by \eqref{eq:condition_Landweber}: $0 < \omega < 2/\sigma_{\text{max}}^2 = 2/ \norm[2]{\Rad^*\Rad}$.
Let us have a residual neural network $F_\mathcal{C}(\cdot, \mathcal{W}) = \I + \mathcal{C}(\cdot, \mathcal{W})$ with parameter vector $\mathcal{W}$. In this case the whole system operator becomes the composition of $s$ Landweber iterations and one call to the network:
\[ H(f, \mathcal{W}) = F_\mathcal{C}\left( L_\omega^s(f), \mathcal{W} \right). \]

For $\mathcal{C}$ we adapt the modified U-Net structure presented in \ref{subsec:UNet} and depicted in Fig.~\ref{fig:unet_Huang}. Our adaptation of the design concerns the up-sampling implementation. A 2-strided, 2x2 kernel-sized transposed convolution is preferred over the usually more well-behaved choice of standard interpolation based upsizing accompanied by a 2x2-convolution. The reason for that is mostly technical. For error backpropagation the derivative of the objective function~\eqref{eq:iterative_loss} would contain the term $\nabla^2_{\mathcal{W},f}H(f^{(k)}, \mathcal{W})$, i.e.\ all second order derivatives of every operation performed in the network would be required. However, the automatic differentiation framework TensorFlow \cite{MAP15} does not possess a second order derivative for resize operations with any interpolation.

\TODOm{power method for ART}

\subsection{Dataset. Preprocessing the data}

The dataset and its main traits have already been presented in subsection~\ref{subsec:data_preprocess}.

The LIDC-IDRI (\cite{LIDC11, LIDC11data, LIDC11dataweb}) dataset consists of approximately $250$ thousand $512 \times 512$ reconstructions belonging to $1010$ patients. Outside the training loop, offline, these images are downsampled to $128 \times 128$ mainly due to memory limitation. Afterwards, their $40$-projection Radon-transforms are produced.

For training, the $128 \times 128$ downsampled reconstruction is used as ideal, expected output reconstruction, the persisted sinograms represent the ideal, expected output sinogram. For the input of the neural network, $p=6$ ART-steps are performed on a $\underline{0}$ hypothesis using the noise contaminated sinogram. The noise generated on the sinogram is based on the description in \eqref{eq:noise_model}. Again, parameters $\mathcal{I}_0$ and $\sigma$ are set such that the SNR of noise on sinograms is around $40$ dB.

\subsection{Hyper-parameter setting}
As already stated, initiation is done with $p=6$ Landweber-iterations. Learning was unrolled to the final depth $D=4$. The amplifier for the fidelity of iterated reconstructions was chosen $\gamma_a = 2.0$, the multiplier of the support space error is $\gamma_s = 0.01$, the multiplier of the directional input gradient regularisation is $\gamma_g = 0.03$.

\subsection{Results}

Once again, as in subsection~\ref{subsec:results_convnet}, the test dataset consists of entire patient datasets, which have never been processed by the network during training and validation iterations. For the interpretation of metrics, the reader is reminded about the content of subsection~\ref{subsec:results_convnet}.

\subsubsection{Evaluation and comparison}

The current method is evaluated and compared to the previous method described in Chapter~\ref{chap:convnet}. The results are visible in Table~\ref{tab:results_iterative}. The rows contain results for a single metric. The metrics used are the mean absolute error expressed in Hounsfield Units (\emph{MAE [HU]}), the structural similarity index measure (\emph{SSIM}), the signal to noise ratio expressed in dB (\emph{SNR [dB]}) and the relative error between the computed and ideal reconstructions (\emph{RelError}). We displayed the evaluation for three methods. The first column shows metrics for the standard FBP algorithm. The middle column repeats the results for our method presented in the previous chapter, which is a post-processing type residual neural network, called measurement-consistent, sparsifying postprocess-ConvNet (\emph{MC-S-P-ConvNet}). Last, but not least, the last column enlists results for our most recent method, the unrolled support-kernel iterative regulariser GD (\emph{USKI-R-GD}).

The numerical comparison shows that the iterative usage of neural networks has significant room for improvement, the standard post-processing neural regularisation could not be outperformed. There is a relatively straightforward explanation for this negative result. In the iterative case the neural network is continuously fed with data from different depths of iteration and it is supposed to model a meaningful regression for all iterations. This probably calls for the expansion on the hypothesis space, i.e.\ a network with more parameters should be taken. Nevertheless, enlarging the network often leads to \emph{overfitting}, a situation, where the neural network starts learning hidden features that are characteristic for the training set and not general. Parallel to overfitting, the generalisation performance and error of the system usually worsen.

\begin{table}
\centering
\begin{tabular}{cccc}
\hline
& FBP & MC-S-P-ConvNet & New: USKI-R-GD \\
\hline
MAE [HU] & 272.9 & 38.8 & 62.97 \\
SSIM & 0.3624 & 0.924 & 0.87 \\
SNR [dB] & 6 & 20.5 & 19.88 \\
RelError & 0.53 & 0.1 & 0.15\\
\hline
\end{tabular}
\caption{Here we present our results on simple FBP algorithm, the previously developed FBP + ConvNet solution and our algorithm.}
\label{tab:results_iterative}
\end{table}

\subsubsection{Case-studies} \label{subsec:case_study:iterative}

Figures \ref{fig:casestudy_iterative_00} to \ref{fig:casestudy_iterative_09} present some case-studies. On each of them we have the following displayed: the original, ideal reconstruction; the initialised input of the iterative system, initialised with $p=6$ Landweber-steps; the first $4$ iterates. On the right hand-side we depicted the cross-section of the $4^\text{th}$ iterate at $y=64$ and compared it to the ideal reconstruction. Below that metrics related to the sample are outlined.

It is easily noted that the quality of reconstructions improved with the depth of iteration, which gives an affirmative answer to whether neural networks are capable of iterative refinement. Even though the generalisation performance of the method falls behind the post-processing algorithm, still it was not straightforward that iterative improvement in the error is possible.

\newcommand{\myfigb}[5][0.20]{
	\begin{minipage}{#1\textwidth}
		\centering
		\subfloat[#5]{
			\includegraphics[width=\textwidth,keepaspectratio]
			{images/IARTRN128_inputgrad3e-2_kernerr1e-2_ampl2.0-rec_HU_mae-63.719-20210519-085621/#2/#3.png}%
			\label{fig:casestudy_#2:#4}
		}
	\end{minipage}
}

\newcommand{\iterativecasestudy}[5]{
	\begin{figure}[ht]
		\centering
		\myfigb{#1}{00_expected_reconstruction}{a}{Expected rec.}
		\myfigb{#1}{00_init_6_ART}{b}{Init:$6 \times L$}
		\myfigb{#1}{Init_+_1x_ART+ResNet_}{c}{$\text{Init} + 1\times H$}
		\myfigb[0.28]{#1}{crossection_y64}{cross}{Cross-section, $y=64$}
		
		\myfigb{#1}{Init_+_2x_ART+ResNet_}{d}{$\text{Init} + 2\times H$}
		\myfigb{#1}{Init_+_3x_ART+ResNet_}{e}{$\text{Init} + 3\times H$}
		\myfigb{#1}{Init_+_4x_ART+ResNet_}{f}{$\text{Init} + 4\times H$}
		\begin{minipage}{0.28\textwidth}
			\centering
			\begin{minipage}{0.8\textwidth}
				\centering
				\small
				\begin{tabular}{cc}
					\hline
					MAE [HU] & #2 \\
					SSIM & #3 \\
					SNR & #4 \\
					RelErr & #5 \\
					\hline
				\end{tabular}
				\captionof{figure}{}
				\label{fig:casestudy_iterative_#1}
			\end{minipage}
		\end{minipage}%
	\end{figure}%
}

\iterativecasestudy{00}{80.01}{0.830}{16.68}{0.147}
\iterativecasestudy{02}{61.41}{0.870}{17.16}{0.139}
\iterativecasestudy{03}{64.70}{0.859}{17.05}{0.140}
\iterativecasestudy{06}{70.32}{0.861}{16.73}{0.146}
\iterativecasestudy{07}{55.21}{0.885}{17.95}{0.127}
\iterativecasestudy{08}{75.87}{0.820}{16.28}{0.153}
\iterativecasestudy{09}{75.12}{0.846}{14.94}{0.179}

\subsubsection{Semi-convergence}

When having analysed the Landweber-iteration, in subsection~\ref{subsec:Landweber}, we derived the reason for an empirical effect, called \emph{semi-convergence}. We were curious to see if the learning approach hid a similar phenomenon. For this we evaluated the first $6$ iterations $H$ for a randomly chosen set of $100$ reconstructions and took the mean between the mean absolute error expressed in HU of reconstructions at the same iteration level. The process was repeated for the relative error and SSIM. The results are visible in Fig.~\ref{fig:semi-convergence-NN}. In separate case-studies this effect was already experienced and noted in subsection~\ref{subsec:case_study:iterative}, yet it is present in general. The explanation for why the minimum is at iteration $3$ or $4$ resides in the fact that the final iteration depth was chosen to be $4$.

\begin{figure}[ht]
	\centering
	\begin{minipage}{.3\linewidth}	
		\centering
		\subfloat[MAE HU]{
			\includegraphics[keepaspectratio,width=\linewidth]{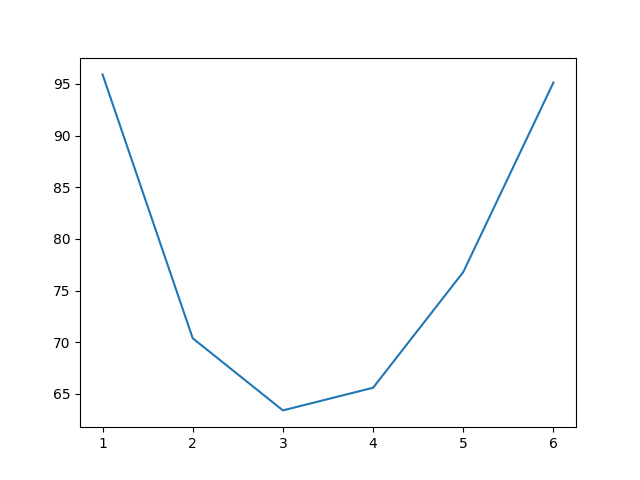}
			\label{fig:semi-convergence-NN:a}
		}
	\end{minipage}
	\begin{minipage}{.3\linewidth}
		\centering
		\subfloat[Relative Error]{
			\label{fig:semi-convergence-NN:b}
			\includegraphics[keepaspectratio,width=\linewidth]{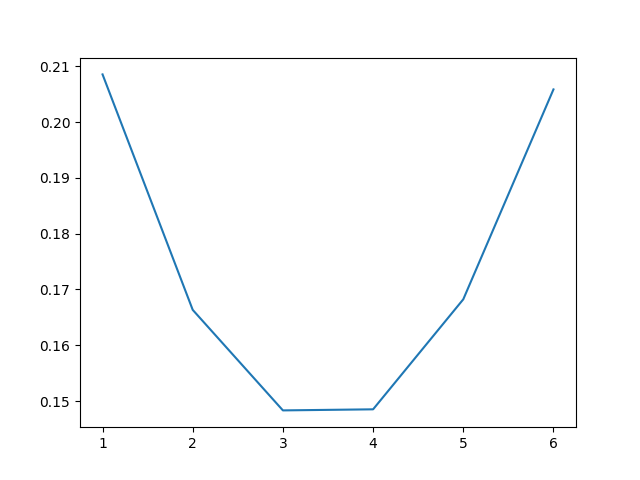}			
		}
	\end{minipage}
	\begin{minipage}{.3\linewidth}
		\centering
		\subfloat[SSIM]{
			\label{fig:semi-convergence-NN:c}
			\includegraphics[keepaspectratio,width=\linewidth]{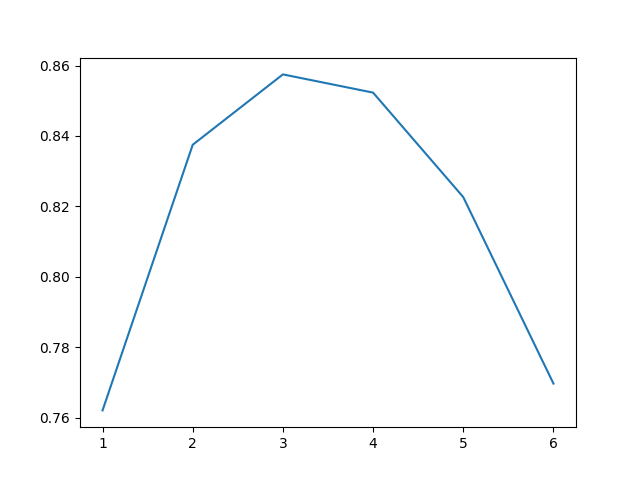}			
		}
	\end{minipage}
	
	\caption{\subref{fig:semi-convergence-NN:a} The semi-convergence in terms of mean absolute error expressed in HU. \subref{fig:semi-convergence-NN:b} The semi-convergence in terms of relative error. \subref{fig:semi-convergence-NN:c} The semi-convergence in terms of SSIM.}
	\label{fig:semi-convergence-NN}
\end{figure}

\chapter{Conclusions} \label{chap:conclusion}

The results of our report are two-folded. First, we presented an architecture based on the fully convolutional U-Net, but including novel elements. The suggested model incorporates a data consistency module as well as other objective functions borrowed from the field of compressive sampling. Therefore, our architecture is capable of an end-to-end training process that was previously emulated in an alternating fashion by training a target network and then modifying based on all other constraints. We believe that this is a fruitful direction.

Currently the most prevailing obstacle impeding us for further reducing reconstruction error is that our sparsity measuring operator, the total variation optimiser loss does not have its minimum where all other losses have theirs. In fact, the logarithmic operator defined in~\eqref{eq:log_min} does not yield any minimum, its use is rather heuristic and its scope in the training process is purely regularisation. Therefore, we strongly believe that the learning model's weights start to oscillate around its optimum value. In our upcoming research we want to dedicate time for studying different sparsity enhancing operators and tuning their parameters optimally. For instance, the weighting of the total variation minimiser in our project is bounded from both directions. On one hand, a high weight factor most probably causes an oscillation with higher amplitude. On the other hand, a reduced coefficient would result in mitigated regularisation effects. Other possible sparsity operators may include the ones presented in Chapter~\ref{chap:preliminaries}, particularly the NLTV operator defined as~\eqref{eq:NLTV-def}.

In an other attempt we presented a fundamentally new concept by combining iterative methods and neural networks. The results are yet to be improved to become state-of-the-art. The methodology's main take-away is that the  effect called \emph{semi-convergence} is reproducible even for neural systems. This meant that a CNN taught to reconstruct the solution of an inverse problem iteratively was capable of improving its own reconstruction with further iterations, though not going further than the design depth. Our future plan is to analyse more, how these regularisation terms could be improved.

Furthermore, new metrics should be defined to measure the performance of the system. This report lacks the analysis of reconstructions that originally displayed cancerous tumours. It has to be assessed whether such lesions stay intact on images and it would also be interesting to see if the SNR around these lesions increases after applying our model to the inputted noisy reconstruction. Besides that it is desirable to design a metric that would be sensitive to changes in regions where ideally relatively large intensity differences occur, since these could hold diagnostic information.


\newpage
\phantomsection\addcontentsline{toc}{chapter}{Acknowledgements}
\section*{Acknowledgements}
This work was supported by the Ministry of Human Capacities under its New National Excellence Program (ÚNKP-19-2-I-BME-354 and ÚNKP-20-2-I-BME-117) and under its Human Capacity Development Program (grant EFOP 3.6.1.- 16-2016-00014 with title \begin{otherlanguage}{hungarian}``Diszruptív technológiák kutatás-fejlesztése az e-mobility területén és integrálásuk a mérnökképzésbe''\end{otherlanguage}).

Here I would like to thank my supervisor, D\'{a}niel Hadh\'{a}zi for his patience and everlasting support in our work.

\appendix

\bibliographystyle{apalike}
\bibliography{Bibliography/Medical_Computing,Bibliography/Neural_Networks_in_Medical_Imaging,Bibliography/Webpages}

\end{document}